\documentclass[10pt]{article}

\usepackage[colorlinks, linkcolor=blue,citecolor=blue]{hyperref}           
\usepackage{cmap}
\usepackage{color}
\usepackage[percent]{overpic}
\usepackage{graphicx,subfigure,amsmath,amssymb,amsfonts,bm,epsfig,epsf,url,dsfont,tcolorbox,bbm,microtype}
\usepackage{amsthm,mathrsfs}
\usepackage{epigraph}
\usepackage{tikz}
\usepackage{multirow}
\usepackage{bbm}      
\usepackage{booktabs}
\usepackage{cases}
\usepackage{enumitem}
\usepackage[small,bf]{caption}
\usepackage[top=1in,bottom=1in,left=1in,right=1in]{geometry}
\usepackage{fancybox}
\usepackage{hyperref}
\hypersetup{
    colorlinks,
    linkcolor={blue!80!black},
    citecolor={red!80!black},
    urlcolor={blue!80!black}
}
\usepackage{algorithm}
\usepackage{verbatim}
\usepackage{algorithmic}
\usepackage{mathtools}
\usepackage{braket}

\newtheorem*{theorem*}{Theorem}
\usepackage[normalem]{ulem}
\newcommand\redout{\bgroup\markoverwith{\textcolor{red}{\rule[0.5ex]{2pt}{0.8pt}}}\ULon}

\usepackage{scalerel,stackengine}

\stackMath
\newcommand\reallywidehat[1]{
\savestack{\tmpbox}{\stretchto{%
0  \scaleto{%
    \scalerel*[\widthof{\ensuremath{#1}}]{\kern-.6pt\bigwedge\kern-.6pt}%
    {\rule[-\textheight/2]{1ex}{\textheight}}
  }{\textheight}%
}{0.5ex}}%
\stackon[1pt]{#1}{\tmpbox}%
}

\newtheorem{theorem}{Theorem}
\newtheorem{cor}[theorem]{Corollary}

\newtheorem{lemma}[theorem]{Lemma}
\newtheorem{prop}[theorem]{Proposition}

\theoremstyle{remark}
\newtheorem{definition}{Definition}
\newtheorem{remark}{Remark}

\newenvironment{fminipage}%
  {\begin{Sbox}\begin{minipage}}%
  {\end{minipage}\end{Sbox}\fbox{\TheSbox}}

\makeatletter
\newcommand*{\rom}[1]{\expandafter\@slowromancap\romannumeral #1@}
\makeatother

\DeclareMathOperator{\supp}{supp}
\DeclareMathOperator{\Span}{Span}

\newcommand{\SIN}{\mathrm{in}}
\newcommand{\SOUT}{\mathrm{out}}

\usepackage{mathrsfs}

\newcommand{\nc}{\newcommand}
\nc{\us}{{\underline{s}}}
\nc{\un}{{\underline{n}}}  \nc{\ux}{{\underline{x}}} 
\nc{\uX}{{\underline{X}}}  \nc{\uY}{{\underline{Y}}}  \nc{\uv}{{\underline{v}}}     \nc{\uz}{{\underline{z}}} 
\nc{\uzero}{{\underline{0}}} 
\nc{\uTheta}{{\underline{\theta}}} 
\nc{\uy}{{\underline{y}} } \nc{\ue}{{\underline{e}}}  \nc{\uf}{{\underline{f}} } \nc{\ur}{{\underline{r}}} 
\nc{\ub}{{\underline{b}} } \nc{\ua}{{\underline{a}}} 
 \nc{\ui}{{\underline{i}}}  
 \nc{\wt}{{\mathrm{wt}}}  
  \nc{\lcs}{{\mathrm{LCS}}}  
  \nc{\uj}{{\underline{j}}}  
\nc{\Corr}{\mathrm{corr}}
\nc{\1}{{\mathbbm{1}}}
\nc\bfx{\boldsymbol x}
\nc{\Cov}{\mathrm{Cov}}

\newcommand{\ima}{\mathrm{Im}\:}

\newcommand{\tr}{\mathrm{tr}}

\newcommand{\abs}[1]{\left|#1\right|}
\newcommand{\R}{\mathbb{R}} 
\newcommand{\N}{\mathbb{N}}

\newcommand{\C}{\mathbb{C}}

 \nc\sC{{\mathscr C}}

\newcommand{\calB}{{\cal B}}

\newcommand{\calD}{{\cal D}}
\newcommand{\calE}{{\cal E}}
\newcommand{\calF}{{\cal F}}

\newcommand{\calH}{{\cal H}}
\newcommand{\calI}{{\cal I}}

\newcommand{\calK}{{\cal K}}

\newcommand{\calM}{{\cal M}}
\newcommand{\calN}{{\cal N}}

\newcommand{\calR}{{\cal R}}
\newcommand{\calS}{{\cal S}}

\newcommand{\calV}{{\cal V}}

\DeclarePairedDelimiterX{\cond}[1]{[}{]}{\setargs{#1}}
\NewDocumentCommand{\setargs}{>{\SplitArgument{1}{;}}m}
{\setargsaux#1}
\NewDocumentCommand{\setargsaux}{mm}
{\IfNoValueTF{#2}{#1} {#1\,\delimsize|\,\mathopen{}#2}}

\newcommand{\be}{\begin{equation}}
\newcommand{\ee}{\end{equation}}
\newcommand{\beqna}{\begin{eqnarray}}
\newcommand{\eeqna}{\end{eqnarray}}

\newcommand{\p}[1]{\left(#1\right)}

\newcommand{\ppp}[1]{\left\{#1\right\}}
\newcommand{\norm}[1]{\left\|#1\right\|}
\newcommand{\innerP}[1]{\left\langle#1\right\rangle}

\usepackage{xspace}
\usepackage{array}
\usepackage{tabularx}
\usepackage{booktabs} 
\usepackage{adjustbox} 

\newcommand{\s}[1]{\mathsf{#1}}

\makeatletter
\def\thanks#1{\protected@xdef\@thanks{\@thanks
        \protect\footnotetext{#1}}}
\makeatother

\makeatletter
\renewcommand{\paragraph}{%
  \@startsection{paragraph}{4}%
  {\normalfont\normalsize\bfseries}%
}
\makeatother

\makeatletter
\patchcmd{\thebibliography}
  {\list}
  {\small
   \list}
  {}{}

\patchcmd{\thebibliography}
  {\sloppy}
  {\sloppy
   \setlength{\itemsep}{1pt}
   \setlength{\parsep}{3pt}
   \setlength{\parskip}{3pt}}
  {}{}
\makeatother

\usepackage[refpage]{nomencl}
\makenomenclature

\allowdisplaybreaks
\date{}

\begin{document}
\title{Universal recovery in approximate quantum error correction   }
 \author{Dor~Elimelech$^1$\hspace*{.4in} Victor~V.~Albert$^2$\hspace*{.4in}  Alexander~Barg$^{1,2,3}$}\thanks{$^1$Institute for Systems Research, University of Maryland, College Park, MD 20742. $^2$Joint Center for Quantum Information and Computer Science, NIST/University of Maryland, College Park, MD 20742. $^3$Department of ECE, University of Maryland, College Park, MD 20742. Emails: 
  \{dor,vva,abarg\}@umd.edu.}

\maketitle
 \begin{abstract}
 Universal recovery---the existence of a single recovery map that corrects an entire family of error channels---is a central feature of quantum error correction (QEC). In exact QEC, linearity guarantees that a code correcting a given error set also corrects every channel whose Kraus operators lie in its linear span, and that a single recovery map suffices for all such channels. Approximate quantum error correction (AQEC), which relaxes perfect recovery to recovery with controlled error, has traditionally lacked this structure. In a recent paper (arXiv:2607.22995), we developed a theory of approximate quantum error correction showing that a restricted form of linearity persists in the approximate setting, yielding uniform AQEC guarantees for the family of channels controlled by a given error set.

In this work, we complete the picture by establishing the second half of universal recovery in the approximate setting: a single recovery map can simultaneously correct every channel controlled by a given error set. The error-set theory we proposed quantifies approximate correctability through two parameters: the environment-leakage distance, governing worst-case performance, and the Knill–Laflamme Hellinger distance, governing average-case performance. We show here that both quantities also control universal decoding. We further study the Petz map naturally associated with an error set as an explicit universal recovery, and obtain uniform average- and worst-case guarantees across the entire family of channels. 
 \end{abstract}

\section{Introduction}

Quantum error correction (QEC) protects encoded quantum information against noise by arranging that the relevant error operators act reversibly on a code subspace \cite{shor1995scheme,knill1997theory}. It is a basic component of fault-tolerant quantum computation, quantum communication, and the study of robust quantum phases of matter \cite{shor1996fault,aharonov2008fault,bennett1996mixed,kitaev2003fault,dennis2002topological}. A central strength of exact QEC is the linear structure provided by the Knill--Laflamme conditions \cite{knill1997theory}: once a code corrects a prescribed error set, it also corrects every channel whose Kraus operators lie in its linear span \cite{knill2000theory}. This gives exact QEC an adversarial error-set formulation, in which one can protect uniformly against an entire family of channels by constraining only the type and severity of the errors they may introduce, rather than specifying the noise channel itself \cite{gottesman2010introduction}.

Approximate quantum error correction relaxes the requirement of perfect recovery and asks instead that, after noise and decoding, the recovered state remain close to the original encoded state according to a suitable fidelity or distance measure. This relaxation is important both because exact correction is typically too rigid for physically relevant noise and because approximate codes can achieve coding regimes unavailable to exact QEC, as first demonstrated for amplitude-damping noise and more recently for codes approaching the quantum Singleton and Hamming bounds \cite{leung1997approximate,bergamaschi2024approaching,ma2025haar}. This has led to a broad theory of AQEC, including quantitative conditions for approximate recoverability, coherent-information criteria, and near-optimal recovery maps \cite{schumacher1996sending,schumacher2001approximate,barnum2002reversing,beny2010general,klesse2007approximate}; refined analyses of channel performance and the Petz recovery map \cite{ng2010simple,noh2018quantum,zheng2024near,li2025optimality,kim2026optimal}; explicit constructions and asymptotic existence results \cite{bergamaschi2024approaching,ma2025haar,xu2025letting}; and connections to many-body order, circuit complexity, and information masking \cite{yi2024complexity,yi2025lov,li2025random}. Despite this progress, the theory has remained overwhelmingly channel-based, reflecting the longstanding view that the linearity underlying the adversarial error-set formulation of exact QEC does not extend to the approximate setting \cite{crepeau2005approximate}.

This view was recently reexamined in \cite{Elimelech2026Theory}, where it was shown that AQEC admits a meaningful adversarial error-set formulation under both the worst-case and average-case fidelity criteria. Although the full linearity of exact QEC does not survive, a restricted form remains: for a prescribed error set $\calE$, one can obtain uniform AQEC guarantees for a family of channels whose Kraus operators lie in $\Span(\calE)$ and whose mixing coefficients satisfy a spectral constraint. These channels, called $\calE$-controlled channels, give rise to an approximate error-set model in which structural features of exact QEC extend to the approximate setting.
The two notable features are the relation between correcting erasures and general errors and a corresponding notion of approximate distance. The same framework also supports explicit families of codes that correct structured error sets uniformly rather than one noise channel at a time.

Linearity in exact QEC has a consequence that goes beyond uniform correctability: a {\em single recovery map} can perfectly correct every channel whose Kraus operators lie in the span of a correctable error set. This channel-independent decoding is a basic part of the adversarial error-set formulation, since the recovery map depends only on the errors that may occur and not on the particular noise channel. The error-set theory of \cite{Elimelech2026Theory}, based on the B\'eny--Oreshkov framework \cite{beny2010general} in the worst-case setting and on Petz recovery in the average-case setting \cite{barnum2002reversing, zheng2024near}, established uniform AQEC guarantees across all $\calE$-controlled channels. These arguments, however, provide a suitable recovery map for each controlled channel separately and do not by themselves show that one recovery map can be used for the entire family. Whether the uniform protection against $\calE$-controlled channels also extends to a universal decoding operation is therefore an important remaining question for adversarial AQEC. Universal approximate decoders were previously obtained in several structured settings \cite{bergamaschi2024approaching,ma2025haar}, providing evidence that such a theory should be possible, but no general results were known for the error-set model.

The goal of this work is to complete this picture by showing that the error-set AQEC model of \cite{Elimelech2026Theory} supports universal adversarial decoding. The AQEC guarantees developed there are controlled by two error-set/code parameters: the {\em environment-leakage distance} in the worst-case setting and the {\em Knill--Laflamme Hellinger distance} in the average-case setting. We prove that the corresponding sufficient conditions also guarantee the existence of a single recovery map that works uniformly for every $\calE$-controlled channel. In particular, the asymptotically good AQEC families constructed in \cite{Elimelech2026Theory} admit universal recovery maps. In the average-case setting, we prove a stronger statement: for every fixed code and error set, the optimal error obtained by allowing the recovery map to depend on the noise channel is attained by a single optimal universal recovery map.

Establishing the existence of universal recovery maps is only one part of the problem; it is equally important to understand how such maps can be constructed in a general error-set setting. A natural candidate is the Petz recovery map, also known in QEC as the transpose channel, which has played a central role in AQEC \cite{barnum2002reversing,ng2010simple,noh2018quantum,zheng2024near,li2025optimality,kim2026optimal}. Given a noise channel and a code, the corresponding Petz map is known to be near optimal under the average-case criterion \cite{barnum2002reversing,zheng2024near}. At the same time, the Petz construction is not restricted to trace-preserving maps \cite{barnum2002reversing,ng2010simple,afham2026projections}, and in the exact setting it may be associated directly with an error set rather than with a particular channel. Motivated by these observations, we study the Petz map constructed from a given error set $\calE$ and analyze its performance uniformly over the corresponding controlled family. We show that its average-case error is controlled by the Knill--Laflamme Hellinger distance, while its worst-case error is controlled by the environment-leakage distance under the stronger notion of $\calE$-control.

\section{Approximate quantum error-correction}
 We consider quantum codes, denoted by $Q$, on a finite dimensional Hilbert space $\calH$. For an integer $M>0$, we denote by $\calH_{M}$ the $M$ dimensional Hilbert space $\C^M$. We denote the set of linear operators and density operators on $\calH$ by $L(\calH
)$ and $D(\calH)$ respectively. We commonly measure the closeness of quantum states $\rho,\tau\in D(\calH)$ by the fidelity defined by $\calF(\rho,\tau):=\norm{\sqrt{\rho}\sqrt{\tau}}_1$, where $\norm{\cdot}_1$ is the trace norm (see Definition~\ref{Def:StateNorms}). A quantum channel is a  is  a completely positive and trace preserving map (CPTP) $L(\calH)\to L(\calH')$. Any completely positive (CP)  map is known to be characterized by a Kraus operator set, $\ppp{E_k}_{k\in [S]}$, where $[S]=\ppp{0,1,\dots, S-1}$ and $E_i:\calH\to \calH'$ such that $\calN(X)=\sum_{k\in[S]}E_i X E_i^\dag$. If $\calN$ is also trace preserving, we have $\sum_{k}E_k^\dag E_k=I$. Another equivalent characterization of CP maps is due to Stinespring~\cite{Stinespring1955Positive}: a map is CP if and only if there exist an ancilla space $\calH_M$ and an operator $U:\calH\to \calH_{H'}\otimes \calH_M$ such that $\calN(X)=\tr_{M}(UXU^\dag)$, where $\tr_M$ denotes the partial trace operation on the ancilla space $\calH_M$. We elaborate more on Stinespring representation on Appendix~\ref{sec:Stinespring}. A map $\calR:L(\calH')\to L(\calH)$ is called a perfect recovery map for a channel $\calN$ on the code space, if $\calR\circ\calN|_{L(Q)}=I_{L(Q)}$, under an appropriate distance measure between quantum channels. 

 We focus on two prominent fidelity-based distinguishability measures for quantum channels. The first is the worst-case entanglement fidelity, which is used in the B\'eny--Oreshkov framework \cite{beny2010general}. For quantum channels $\calN,\calM:L(\calH)\to L(\calH')$, it is defined as
\[
\calF_{\mathrm{wc}}\p{\calN,\calM}
:=
\inf_{\rho\in D(\calH)}
\calF\p{
I_{L(\calH)}\otimes\calN\p{\ket{\psi_{\rho}}\bra{\psi_{\rho}}},
\p{I_{L(\calH)}\otimes\calM}\p{\ket{\psi_{\rho}}\bra{\psi_{\rho}}}
},
\]
where $\ket{\psi_{\rho}}$ is any purification of $\rho$. Thus, $\calF_{\mathrm{wc}}$ measures the performance of the channels uniformly over all input states, including their correlations with a reference system.

Alongside the worst-case entanglement fidelity, a widely used weaker criterion is the channel (or process) fidelity \cite{audenaert2002optimizing,kosut2009quantum,noh2018quantum,zheng2024near}, frequently used in the analysis of Petz recovery. Let $\ket{\Phi_{\calH}}$ be a maximally entangled state on $\calH\otimes\calH$. The channel fidelity of $\calN$ and $\calM$ is
\[ \calF_{\mathrm{ch}}\p{\calN,\calM} :=\calF\p{\p{I_{L(\calH)}\otimes\calN}\p{\ket{\Phi_{\calH}}\bra{\Phi_{\calH}}}, \p{I_{L(\calH)}\otimes\calM}\p{\ket{\Phi_{\calH}}\bra{\Phi_{\calH}}}}^2. \]
Unlike $\calF_{\mathrm{wc}}$, the channel fidelity evaluates the entanglement fidelity only for the maximally mixed input rather than minimizing over all input states. It therefore gives a weaker criterion, although it remains a standard and meaningful measure of channel performance. Moreover, the channel fidelity is closely related to the Haar-average input-output fidelity \cite[Sec.~6.4]{khatri2020principles}, and therefore it can be thought of as an average-case criterion.

For both criteria, proximity of the fidelity to one captures closeness of the corresponding channels. 
In accordance with \cite{beny2010general} and \cite{Elimelech2026Theory}, we formulate this closeness in terms of the associated Bures distances, for which small distance corresponds to high fidelity. The corresponding worst-case Bures distance and channel-fidelity Bures distance are defined as
\[ d_{\mathrm{wc}}\p{\calN,\calM} :=\sqrt{1-\calF_{\mathrm{wc}}\p{\calN,\calM}}, \quad d_{\mathrm{ch}}\p{\calN,\calM}:=\sqrt{1-\calF_{\mathrm{ch}}\p{\calN,\calM}}. \] 

We next give the standard channel-based definition of approximate quantum error correction for a code. \begin{definition}\label{def:AQECchannel}
    A quantum code $Q\subseteq\calH$ is said to be an $\varepsilon$-$\s{wc}$-AQEC code for a channel $\calN:L(\calH)\to L(\calH')$ if there exists a CPTP recovery operation $\calD:L(\calH')\to L(\calH)$ such that
\begin{equation}
d_{\mathrm{wc}}\p{\calD\circ\calN|_{L(Q)},I_{L(Q)}}\leq\varepsilon.
\label{eq:channel_AQEC}
\end{equation}
The code is said to be an $\varepsilon$-$\s{av}$-AQEC code for $\calN$ if \eqref{eq:channel_AQEC} holds for some $\calD$ with $d_{\mathrm{ch}}$ in place of $d_{\mathrm{wc}}$. 
\end{definition}

\subsection{The error-set model for AQEC} 
In \cite{Elimelech2026Theory}, the adversarial error-set model of exact QEC was generalized to the approximate setting. Although the full linearity property of exact QEC does not extend to AQEC, a restricted form of linearity was shown to survive. Namely, given an error set of interest $\calE$, one can obtain uniform guarantees on the AQEC performance of a code for a family of channels admitting a Kraus representation whose operators lie in $\Span(\calE)$ and whose expansion coefficients matrix is a contraction:
\begin{definition}[$\calE$-controlled channels and error-set AQEC]\label{def:errorSetAQEC} Let $\calE=\ppp{E_k}_{k\in[S]}$ be a finite set of operators. A quantum channel $\calN$ is said to be $\calE$-controlled if it admits a Kraus representation $\ppp{A_m}_{m}$ of the form 
$ A_m=\sum_{k\in[S]}c_{m,k}E_k,$ 
where the coefficient matrix $C=(c_{m,k})$ satisfies 
\[ \norm{C}_{\infty}=\sup_{\norm{x}_2=1}\norm{Cx}_{2}\leq 1. \] 
We denote the set of $\calE$-controlled channels by $\mathscr{N}(\calE)$.

A quantum code $Q$ is said to be an  $\varepsilon$-($\s{wc}$/$\s{av}$)-AQEC code for the error set $\calE$ if it is an $\varepsilon$-($\s{wc}$/$\s{av}$)-AQEC code, in the sense of Definition~\ref{def:AQECchannel}, for every $\calE$-controlled channel. We denote the optimal AQEC error for the average-case and worst criteria by $\varepsilon_{\s{opt}}^{\s{av}}(\calE,Q)$ and $\varepsilon_{\s{opt}}^{\s{wc}}(\calE,Q)$ respectively. That is, $\varepsilon_{\s{opt}}^{\s{av}}(\calE,Q)$ and $\varepsilon_{\s{opt}}^{\s{wc}}(\calE,Q)$ are the infimum values over all $\varepsilon>0$ such that  such that $Q$ is $\varepsilon$-$\s{av}/\s{wc}$-AQEC for $\calE$.
\end{definition} 

The family of $\calE$-controlled channels  has several useful structural properties. In particular, it is convex, and for a linearly independent set $\calE$, the property of being $\calE$-controlled is independent of the choise of the channel's Kraus representation. The model also admits a broad range of nontrivial examples; see \cite[Section~3.1.1]{Elimelech2026Theory} for further discussion. The error-set AQEC guarantees established in \cite{Elimelech2026Theory} show that this model provides meaningful uniform control over the corresponding family of channels. These guarantees are expressed through two error-set/code parameters: the \emph{environment-leakage distance}, which controls worst-case AQEC performance, and the \emph{Knill--Laflamme Hellinger distance}, which characterizes the average-case performance of the Petz recovery map.

We begin with the environment-leakage distance. Let $\calE=\ppp{E_k}_{k=0}^{S-1}$ be a finite set of operators $E_k:\calH\to\calH'$, and let $Q\subseteq\calH$ be a code with orthogonal projector $P$. For every $S\times S$ matrix $\lambda$, the corresponding B\'eny--Oreshkov superoperator $\calB^{\calE}_{\lambda,Q}:L(\calH)\to L(\C^S)$ is defined by\footnote{{This restatement corrects a misprint in \cite[Corollary~2]{beny2010general}.}}
\begin{equation}
\calB^{\calE}_{\lambda,Q}(X)= \sum_{k,l}\tr\p{B_{k,l}X}\ket{k}\bra{l}, \quad B_{k,l} = PE_l^\dag E_k P-\lambda_{k,l}P.
\label{eq:BO_superoperator} 
\end{equation}

This superoperator measures the deviation of the complementary map associated with $\calE$ and $Q$ from the constant map determined by $\lambda$.

\begin{definition}[Environment-leakage distance {\cite[Sec. 3.1]{Elimelech2026Theory}}] \label{def:environment_leakage_distance}
The \emph{environment-leakage distance} between an error set $\calE$ and a code $Q$ is
\begin{equation}
\zeta\p{\calE,Q} := \inf_{\lambda}\norm{\calB^{\calE}_{\lambda,Q}}_{\diamond},
\label{eq:environment_leakage_distance}
\end{equation}
where $\norm{\cdot}_{\diamond}$ denotes the diamond norm  on the space of superoperators, given in Definition~\ref{Def:superOpNorms}; see p.~\pageref{Def:superOpNorms}.
\end{definition}

Next we introduce the Knill--Laflamme Hellinger distance. Let $Q$ be a $K$-dimensional code with orthonormal basis $\ppp{\ket{c_i}}_{i\in[K]}$, and let $\calE=\ppp{E_k}_{k\in[S]}$. The corresponding QEC matrix $A_{\s{QEC}}\in L(\C^K\otimes\C^S)$ is given by
\begin{equation}
\p{A_{\s{QEC}}}_{(i,k),(j,l)} := \bra{c_i}E_k^\dag E_l\ket{c_j}. \label{eq:QEC_matrix}
\end{equation}
The exact Knill--Laflamme conditions hold precisely when this matrix is contained in the Knill--Laflamme space
\begin{equation}
\calH_{\s{KL}} := \ppp{I_K\otimes\lambda:\lambda\in L(\C^S)} \subseteq L(\C^K\otimes\C^S). \label{eq:KL_space}
\end{equation}
The relevant approximate version of this condition is obtained by measuring the distance of $A_{\s{QEC}}$ from $\calH_{\s{KL}}$: in the Hellinger-distance sense, this proximity characterizes the average-case performance of the Petz recovery map and therefore provides an approximate counterpart of the Knill--Laflamme conditions. For positive semidefinite operators $A$ and $B$, their quantum Hellinger distance is $D_{\s{H}}\p{A,B} :=
\|\sqrt{A}-\sqrt{B}\|_2$,
and the distance of $A\succeq0$ from the positive cone of the Knill--Laflamme space is
\begin{equation}
D_{\s{H}}\p{A,\calH_{\s{KL}}}
:=
\min_{\substack{B\in\calH_{\s{KL}}\\B\succeq0}}
D_{\s{H}}\p{A,B}.
\label{eq:Hellinger_distance_KL_space}
\end{equation}

\begin{definition}[Knill--Laflamme Hellinger distance {\cite[Def.~9]{elimelech2026asymptotically}}] \label{def:KL_Hellinger_distance}
The \emph{Knill--Laflamme Hellinger distance} between a $K$-dimensional code $Q$ and an error set $\calE$ is
\begin{equation}
\zeta_{\s{H}}\p{\calE,Q}
:=
\frac{1}{\sqrt{K}}
D_{\s{H}}\p{A_{\s{QEC}},\calH_{\s{KL}}}
=
\frac{1}{\sqrt{K}}
\min_{\lambda\succeq0}
\norm{\sqrt{A_{\s{QEC}}}-I_K\otimes\sqrt{\lambda}}_2.
\label{eq:KL_Hellinger_distance}
\end{equation}
\end{definition}

As shown in \cite[Section~3.2]{Elimelech2026Theory}, this quantity is independent of the choice of the matrix representation of $A_{\s{QEC}}$ and is expressed in closed form as follows:
\begin{equation}
    \zeta_{\s{H}}(\calE,Q)=\sqrt{\frac{1}{K}\tr(A_{\s{QEC}})-\frac{1}{K^2}\norm{\tr_K(\sqrt{A_{\s{QEC}}})}_2^2}. \label{eq:KLHdistanceExpression}
\end{equation} Moreover, if $\calE_{\calN}$ is a Kraus set of a channel $\calN$, then \cite[Corollary~10]{Elimelech2026Theory} shows that $\zeta_{\s{H}}\p{\calE_{\calN},Q}$ is the average-case Bures distance attained by the Petz recovery map, which is known to be optimal \cite{barnum2002reversing} up to a factor of $\sqrt{2}$. It was shown in \cite{Elimelech2026Theory} that $\zeta$ and $\zeta_{\s{H}}$ control the optimal AQEC error of a code
as stated next.
\begin{theorem}[Sufficient AQEC conditions \cite{Elimelech2026Theory}]\label{th:sufficientAQEC}
Let $Q$ be a quantum code and let $\calE$ be an error set. Then 
\[\varepsilon_{\s{opt}}^{\s{wc}}(\calE,Q)\leq \sqrt{2\zeta(\calE,Q)} \quad \text{and}\quad \varepsilon_{\s{opt}}^{\s{av}}(\calE,Q)\leq \zeta_{\s{H}}(\calE,Q). \]
\end{theorem}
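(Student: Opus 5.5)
We prove the two bounds separately. In each case we fix an arbitrary $\calE$-controlled channel $\calN$ with Kraus operators $A_m=\sum_k c_{m,k}E_k$ and $\norm{C}_\infty\le 1$. We then bound the relevant parameter of the Kraus set $\calE_{\calN}=\ppp{A_m}$ by the corresponding parameter of $\calE$, and invoke a channel-based sufficient condition. Since $\calN$ is arbitrary, the bound passes to $\varepsilon_{\s{opt}}$.

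\textbf{Worst case.} Here we use the B\'eny--Oreshkov theorem \cite{beny2010general}. For a channel $\calN$ with complementary channel $\hat\calN$ restricted to $L(Q)$, it gives
\[
\inf_{\calD} d_{\mathrm{wc}}(\calD\circ\calN|_{L(Q)},I_{L(Q)})=\min_{\sigma} d_{\mathrm{wc}}(\hat\calN|_{L(Q)},\sigma\,\tr(\cdot)).
\]
It also bounds the Bures distance by the square root of the diamond norm of the difference, via $1-\calF\le \frac12\norm{\rho-\tau}_1$. The complementary channel of $\calN$ is $X\mapsto\sum_{m,m'}\tr(A_{m'}^\dag A_m PXP)\ket{m}\bra{m'}$. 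Substituting the expansion of $A_m$, this equals $\calC\circ\calB^{\calE}_{\lambda,Q}+\tr(\cdot)\,C\lambda^{\top}C^\dag$ (up to index-ordering conventions). Here $\calC(Y)=CYC^\dag$, and $\lambda$ is any $S\times S$ matrix. The key step is to choose $\sigma=C\lambda C^\dag$ after passing to the Hermitian part and normalizing. Then $\hat\calN-\sigma\tr=\calC\circ\calB^{\calE}_{\lambda,Q}$, and because $\calC$ is a contraction-conjugation its diamond norm is at most $\norm{C}_\infty^2\le1$. Hence $\norm{\hat\calN-\sigma\tr}_\diamond\le\norm{\calB_{\lambda,Q}^{\calE}}_\diamond$.

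The obstacle is that $C\lambda C^\dag$ need not be a density matrix, so it cannot serve directly as $\sigma$. I would handle this as follows. First take the trace of $\hat\calN(\rho)$, which is $1$. This shows $\tr(C\lambda C^\dag)$ is within $\norm{\calB}_\diamond$ of $1$ and that its negative and anti-Hermitian parts are small in trace norm. Then project to the nearest state, at the cost of at most a factor $2$ in trace distance. This produces $\min_\sigma\frac12\norm{\hat\calN-\sigma\tr}_\diamond\le \zeta(\calE,Q)$ up to the constant. Combining with $d_{\mathrm{wc}}\le\sqrt{2\cdot\frac12\norm{\cdot}_\diamond}$ and optimizing over $\lambda$ gives $\sqrt{2\zeta(\calE,Q)}$.

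\textbf{Average case.} Here we use the cited fact that $\zeta_{\s{H}}(\calE_{\calN},Q)$ equals the average-case Bures distance achieved by the Petz map of $\calN$. It therefore suffices to show $\zeta_{\s{H}}(\calE_{\calN},Q)\le\zeta_{\s{H}}(\calE,Q)$.

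The QEC matrix of $\calE_{\calN}$ is $A'=(I_K\otimes \bar C)A_{\s{QEC}}(I_K\otimes \bar C)^\dag$, with conjugation convention fixed by \eqref{eq:QEC_matrix}. Write $A_{\s{QEC}}=G^\dag G$ with $G=\sum_k \ket{\cdot}E_kP$, so that $A'=G'^\dag G'$ with $G'=G(I\otimes C^{\top})$.

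The Hellinger distance to the KL cone is a minimum over $\lambda$. Using the variational form $D_{\s H}(A,B)^2=\tr A+\tr B-2\max_U\mathrm{Re}\tr(\sqrt A\,U\sqrt B)$, or equivalently the closed form \eqref{eq:KLHdistanceExpression}, the quantity $K\zeta_{\s H}^2$ becomes $\tr A-\frac1K\max\{\cdots\}$, a fidelity-type term. I would then exploit that congruence by $I\otimes C$ with $\norm{C}_\infty\le1$ is implemented by a CP trace-nonincreasing map on the purification. Monotonicity of fidelity, or equivalently Uhlmann's theorem applied to a purification of $\sqrt{A_{\s{QEC}}}$ and $I\otimes\sqrt\lambda$, should then show that the candidate $B'=(I\otimes\bar C)(I\otimes\lambda)(I\otimes\bar C)^\dag\in\calH_{\s{KL}}$ satisfies $D_{\s H}(A',B')\le D_{\s H}(A_{\s{QEC}},I\otimes\lambda)$.

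This monotonicity of Hellinger distance under contraction-congruence is the crux of the average-case bound. I would prove it by writing $D_{\s H}^2=\tr A+\tr B-2\,\calF(A,B)$ with unnormalized fidelity. One cannot rely on $\tr A$ and $\tr B$ not increasing, since $\tr A$ itself also decreases. The right route is to dilate $C$ to a unitary and $\calN$ to a channel with extra Kraus operators, and use that $\zeta_{\s H}$ with the trace term is a Hellinger distance, which is jointly contractive under the partial-trace/CPTP dilation.
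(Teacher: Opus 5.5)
Your worst-case half is correct. It takes a per-channel route that differs from the one used in this paper, which quotes the theorem and recovers both bounds from Corollary~\ref{cor:ZetaBetterBound} and Theorem~\ref{th:AVpetz}. The following steps all check out:
\begin{itemize}
\item the identity $\hat{\calN}|_{L(Q)}=\mathcal{C}\circ\calB^{\calE}_{\lambda,Q}+\tr(\cdot)\,C\lambda C^\dag$ (no transpose with the paper's index convention);
\item the bound $\norm{\mathcal{C}\circ\calB^{\calE}_{\lambda,Q}}_\diamond\le\norm{C}_\infty^2\norm{\calB^{\calE}_{\lambda,Q}}_\diamond$;
\item the repair of $C\lambda C^\dag$. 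The simplest version takes $\sigma=\hat{\calN}(\rho_0)$ for a fixed state $\rho_0$, at the cost of one more $\norm{\calB^{\calE}_{\lambda,Q}}_\diamond$, so $\min_\sigma\norm{\hat{\calN}-\sigma\,\tr(\cdot)}_\diamond\le 2\zeta(\calE,Q)$.
\end{itemize}
Fuchs--van de Graaf gives $d_{\mathrm{wc}}^2\le\tfrac12\norm{\cdot}_\diamond$, not $\norm{\cdot}_\diamond$. If you keep the $\tfrac12$, your argument already yields $\sqrt{\zeta(\calE,Q)}$, the bound of Corollary~\ref{cor:ZetaBetterBound}. The paper's route (Theorem~\ref{th:DecExistsWorst}) applies Stinespring continuity to the complementary map of the non-TP map $\calM^Q_{\calE}$, builds partial isometries $J_{\calE},J_\lambda$, and dilates the decoded channel as $(C\otimes R)V_{\calE}$. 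Beyond your result, it gives a single recovery map for all $\calE$-controlled channels, whereas your Bény--Oreshkov recovery depends on $\calN$.

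The average-case half has a genuine gap at exactly the step you call the crux. The paper's $D_{\s{H}}(A,B)=\norm{\sqrt A-\sqrt B}_2$ satisfies $D_{\s{H}}^2=\tr A+\tr B-2\tr(\sqrt A\sqrt B)$, which involves the Hellinger affinity $\tr(\sqrt A\sqrt B)$. The expression you write, $\tr A+\tr B-2\max_U\mathrm{Re}\,\tr(\sqrt A U\sqrt B)=\tr A+\tr B-2\norm{\sqrt A\sqrt B}_1$, is the squared Bures distance. That is the quantity Uhlmann's theorem and fidelity monotonicity control, and it is not equivalent to \eqref{eq:KLHdistanceExpression}.

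The two distances to the KL cone really differ. Take $K=S=2$ and $A_{\s{QEC}}=\ket{\phi}\bra{\phi}$ with $\ket{\phi}=\sqrt{0.7}\ket{00}+\sqrt{0.3}\ket{11}$. Then \eqref{eq:KLHdistanceExpression} gives $D_{\s{H}}(A_{\s{QEC}},\calH_{\s{KL}})^2=K\zeta_{\s{H}}^2=0.71$. But $B=I_2\otimes 0.175\ket{0}\bra{0}$ already has squared Bures distance $0.65$ from $A_{\s{QEC}}$. So a fidelity argument only shows that the Bures distance to $\calH_{\s{KL}}$ does not increase under $A\mapsto(I_K\otimes\bar C)A(I_K\otimes\bar C)^\dag$. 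That does not bound $\zeta_{\s{H}}(\calE_{\calN},Q)$.

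The route can be repaired with the right monotonicity. For $A,B\succeq0$ and any CPTP map $\Phi$, one has $\tr\big(\Phi(A)^{1/2}\Phi(B)^{1/2}\big)\ge\tr(A^{1/2}B^{1/2})$; this is data processing for the Petz quasi-entropy of order $\tfrac12$, via Lieb's concavity theorem. Hence $D_{\s{H}}$ is contractive under CPTP maps, even for unnormalized operators.
\begin{itemize}
\item Apply this to $\Phi(Y)=(I_K\otimes\bar C)Y(I_K\otimes\bar C)^\dag\oplus(I_K\otimes D)Y(I_K\otimes D)^\dag$ with $D=(I_S-\bar C^\dag\bar C)^{1/2}$. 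This map is CPTP precisely because $\norm{C}_\infty\le1$.
\item Since $D_{\s{H}}^2$ is additive over direct sums, $D_{\s{H}}(A',I_K\otimes\bar C\lambda\bar C^\dag)\le D_{\s{H}}(A_{\s{QEC}},I_K\otimes\lambda)$.
\item Minimizing over $\lambda$ gives $\zeta_{\s{H}}(\calE_{\calN},Q)\le\zeta_{\s{H}}(\calE,Q)$.
\end{itemize}
No dilation of $\calN$ itself to a channel is needed.

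By contrast, the paper needs no data-processing inequality here. Theorem~\ref{th:AVpetz} evaluates one fixed recovery, the error-set Petz map, via Schumacher's formula and obtains $1-\calF_{\mathrm{ch}}=\tr(\Delta\,\bar C^\dag\bar C)$, where $\Delta=\tfrac1K\tr_K A_{\s{QEC}}-\tfrac1{K^2}(\tr_K\sqrt{A_{\s{QEC}}})^2\succeq0$ by Lemma~\ref{lem:partialTacesSqrts}. Then $\bar C^\dag\bar C\preceq I_S$ gives the bound $\tr\Delta=\zeta_{\s{H}}(\calE,Q)^2$ at once. This is more elementary and again gives a universal recovery map. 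Your repaired route instead gives the per-channel statement that each channel's own Petz map achieves error $\zeta_{\s{H}}(\calE_{\calN},Q)\le\zeta_{\s{H}}(\calE,Q)$.
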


\section{Universal error-set recovery maps}
In this section, we introduce universal recovery maps for AQEC, define the corresponding optimal decoding errors, and prove their existence under both the worst-case and average-case criteria. Recall that the definition of error-set AQEC requires that, for every $\calE$-controlled channel, there exist a recovery map whose decoding error is at most $\varepsilon$. The recovery map may therefore depend on the particular channel. We are interested instead in a single recovery map that achieves the same error guarantee uniformly over all $\calE$-controlled channels. 

\begin{definition}[Universal recovery maps and optimal decoding errors]\label{def:universal_decoder} Let $Q\subseteq\calH$ be a quantum code, and let $\calE$ be a finite set of error operators from $\calH$ to $\calH'$. A CPTP recovery operation $\calR:L(\calH')\to L(\calH)$ is called an $\s{av}$-$\varepsilon$ universal recovery map, respectively, a $\s{wc}$-$\varepsilon$ universal recovery map, for $\calE$ and $Q$ if, for every $\calE$-controlled channel $\calN$, \begin{equation} d\p{\calR\circ\calN|_{L(Q)},I_{L(Q)}}\leq\varepsilon, \label{eq:universal_decoder} \end{equation} where $d=d_{\mathrm{ch}}$ in the average-case criterion and $d=d_{\mathrm{wc}}$ in the worst-case criterion. 
The optimal average-case and worst-case universal decoding error parameters of $Q$ for $\calE$ are defined as       \begin{equation} 
\varepsilon_{\s{dec}}^{\s{*}}\p{\calE,Q} := \inf\ppp{\varepsilon>0: \text{there exists an $\s{*}$-$\varepsilon$ universal recovery map for $\calE$ and $Q$}},\quad \s{*}\in \ppp{\s{wc},\s{av}}.\label{eq:optimal_universal_decoding_errors} 
    \end{equation} 
\end{definition} 

\subsection{Existence of universal error-set recovery maps}
In what follows, we derive bounds on the optimal decoding error under both the worst-case and average-case criteria. We begin with the average-case setting, where we show that relying on a single universal recovery map incurs no loss: the optimal average-case decoding error coincides with the optimal average-case AQEC error.

We will use the following result.
\begin{theorem*}[Sion's minimax theorem {\cite{Sion1958}}]
    Let $X$ and $Y$ be convex subsets of linear topological spaces such that $Y$ is compact, and assume that $f:X\times Y\to \R$ is a function such that $f(\cdot,y)$ is continuous and quasi-concave on $X$ for all $y$ and $f(x,\cdot)$ is  continuous and quasi-convex on $Y$ for all $x$. Then 
     \[
     \sup_{x\in X}\min_{y\in Y} f(x,y)=\min_{y\in Y}  \sup_{x\in X}f(x,y),
     \]
    where the supremum above is a maximum if $X$ is compact as well.
\end{theorem*}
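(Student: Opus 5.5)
The inequality $\sup_{x}\min_{y} f\le \min_{y}\sup_{x} f$ is immediate, so the work is in the reverse inequality. I would follow Komiya's elementary proof, which uses only connectedness of segments and compactness.

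First I would check that both optima are attained where claimed. The map $y\mapsto \sup_x f(x,y)$ is a supremum of continuous functions, hence lower semicontinuous, so it attains its minimum on the compact set $Y$. Likewise $x\mapsto\min_y f(x,y)$ is an infimum of continuous functions, hence upper semicontinuous, so it attains a maximum when $X$ is compact. For the reverse inequality I fix $\alpha<\min_{y}\sup_{x}f(x,y)$ and aim to produce $x_0\in X$ with $\min_{y} f(x_0,y)>\alpha$.

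\textbf{Step 1 (two-point lemma).} If $x_1,x_2\in X$ satisfy $\alpha<\min_{y}\max\{f(x_1,y),f(x_2,y)\}$, then some $z$ on the segment $[x_1,x_2]$ has $\min_y f(z,y)>\alpha$. I would argue by contradiction.
\begin{enumerate}
\item Pick $\beta$ with $\alpha<\beta<\min_y\max\{f(x_1,y),f(x_2,y)\}$.
\item Set $L_z=\{y: f(z,y)\le\alpha\}$ and $L'_z=\{y:f(z,y)\le\beta\}$. These sets are closed and, by quasi-convexity in $y$, convex.
\item By the choice of $\beta$, the sets $L'_{x_1}$ and $L'_{x_2}$ are disjoint.
\item By quasi-concavity in $x$, for each $z\in[x_1,x_2]$ we have $L_z\subseteq L'_z\subseteq L'_{x_1}\cup L'_{x_2}$.
\item Since $L_z$ is nonempty (the contradiction hypothesis) and convex, hence connected, it lies entirely in one of $L'_{x_1}$, $L'_{x_2}$.
\item This splits $[x_1,x_2]$ into two disjoint nonempty sets, $I=\{z:L_z\subseteq L'_{x_1}\}$ and $J=\{z:L_z\subseteq L'_{x_2}\}$.
\item Using continuity of $f(\cdot,y)$ and the strict gap between $\alpha$ and $\beta$, show that $I$ and $J$ are both open in the segment.
\item Two disjoint nonempty open sets covering a segment contradict its connectedness.
\end{enumerate}
I expect the openness in item 7 to be the main obstacle. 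I would prove it by fixing $z\in I$ and $y\in L_z$, then using continuity in $x$ to show that nearby $z'$ satisfy $L_{z'}\cap L'_{x_1}\neq\emptyset$.

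\textbf{Step 2 (finitely many points).} I would extend Step 1 by induction on $n$: if $\alpha<\min_y\max_{i\le n} f(x_i,y)$, then some $x_0$ in the convex hull of the $x_i$ has $\min_y f(x_0,y)>\alpha$. For the inductive step, replace $Y$ by the compact convex set $Y'=\{y: f(x_n,y)\le\alpha'\}$ for a suitable $\alpha'$ slightly above $\alpha$. On $Y'$ the remaining $n-1$ points satisfy the hypothesis, so the induction hypothesis gives a point $x'$. Then apply Step 1 to the pair $(x',x_n)$ on all of $Y$.

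\textbf{Step 3 (compactness).} Since $\alpha<\sup_x f(x,y)$ for every $y$, the sets $\{y: f(x,y)>\alpha\}$, for $x\in X$, form an open cover of $Y$. Compactness gives a finite subcover indexed by $x_1,\dots,x_n$. Then $\min_y\max_i f(x_i,y)>\alpha$, where the minimum is attained because $\max_i f(x_i,\cdot)$ is continuous on the compact $Y$. Step 2 now produces $x_0$ with $\min_y f(x_0,y)>\alpha$. Letting $\alpha\uparrow\min_y\sup_x f$ yields the equality.
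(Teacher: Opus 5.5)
The paper does not prove this statement. It quotes it from Sion and uses it only as a black box in the proof of Theorem~\ref{th:AvcaseOptimalDecoding}, where $f$ is even affine in each variable, so there is no argument in the paper to compare against. What you reconstruct is Komiya's elementary proof. It replaces the Knaster--Kuratowski--Mazurkiewicz-type argument behind Sion's original proof by connectedness of segments plus compactness. It also needs only upper semicontinuity of $f(\cdot,y)$ and lower semicontinuity of $f(x,\cdot)$.

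Your attainment remarks, items 1--6 and 8 of Step~1, and Step~3 are correct. Step~2 is correct with two small adjustments. You may simply take $\alpha'=\alpha$. Any $\alpha'$ strictly between $\alpha$ and $\min_y\max_{i\le n}f(x_i,y)$ also works, but an arbitrary $\alpha'>\alpha$ does not. If $Y'=\emptyset$, take $x_0=x_n$ directly.

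The one step that does not close as you describe it is item~7, the one you flagged. Fix $z\in I$ and $y\in L_z$. Continuity of $f(\cdot,y)$ gives $f(z',y)<\beta$ for $z'$ near $z$, i.e.\ $y\in L'_{z'}$. It does not give $y\in L_{z'}$, since $f(z',y)$ may exceed $\alpha$. So $y$ does not witness $L_{z'}\cap L'_{x_1}\neq\emptyset$. The dichotomy of item~5, which you proved only for the $\alpha$-level sets, tells you $L_{z'}$ lies in $L'_{x_1}$ or in $L'_{x_2}$, but cannot decide which.

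The missing observation is that the $\beta$-level sets obey the same dichotomy:
\begin{enumerate}
\item $L'_{z'}$ is nonempty, convex (item~2), and contained in the union of the disjoint closed sets $L'_{x_1},L'_{x_2}$ (item~4), so it lies in one of them.
\item It contains $y\in L_z\subseteq L'_{x_1}$, so it lies in $L'_{x_1}$.
\item Hence $L_{z'}\subseteq L'_{z'}\subseteq L'_{x_1}$, so $z'\in I$. The argument for $J$ is symmetric.
\end{enumerate}
It is cleanest to define $I$ and $J$ through $L'_z$ from the start. The $\alpha$-level sets are then used only to guarantee nonemptiness and to supply the point $y$, and this is exactly where the gap $\alpha<\beta$ enters. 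With this repair your argument is a complete proof.
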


\begin{theorem}\label{th:AvcaseOptimalDecoding}
     Let $Q\subset \calH_{\SIN}$ be a code in a finite-dimensional Hilbert space, and let $\calE=\ppp{E_i}_{i\in [S]}$ be an error set with operators $E_i:\calH_{\SIN}\to\calH_{\SOUT}$. Then  \[\varepsilon_{\s{dec}}^{\s{av}}(\calE,Q)= \varepsilon_{\s{opt}}^{\s{av}}(\calE,Q).\] 
     In particular, there exists an $\s{av}$-$\zeta_{\s{H}}(\calE,Q)$ universal recovery map $\calR$. 
\end{theorem}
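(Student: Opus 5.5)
The plan is to apply Sion's minimax theorem to the channel fidelity, viewed as a function of the $\calE$-controlled channel and of the recovery map. Let $X=\mathscr{N}(\calE)$ and let $Y$ be the set of CPTP maps $\calR:L(\calH_{\SOUT})\to L(\calH_{\SIN})$. Define
\[
f(\calN,\calR):=\calF_{\mathrm{ch}}\p{\calR\circ\calN|_{L(Q)},I_{L(Q)}}=\bra{\Phi_Q}\p{I\otimes\calR\circ\calN}\p{\ket{\Phi_Q}\bra{\Phi_Q}}\ket{\Phi_Q}.
\]
The second expression holds because one argument of the fidelity is pure. Thus $f$ is bilinear (affine in each argument) and jointly continuous in finite dimensions. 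Hence $-f$ is quasi-concave and quasi-convex in each argument, which is exactly the regularity Sion's theorem requires.

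Next we check convexity and compactness. The set $Y$ of CPTP maps is convex and compact, being a closed bounded subset of a finite-dimensional space via Choi matrices. The set $X$ is convex, as noted after Definition~\ref{def:errorSetAQEC}. Applying Sion's theorem to $-f$, with $\calN$ as the sup-variable and $\calR$ as the compact min-variable, gives
\[
\sup_{\calN\in X}\min_{\calR\in Y}\p{1-f(\calN,\calR)}=\min_{\calR\in Y}\sup_{\calN\in X}\p{1-f(\calN,\calR)}.
\]
The left side is $\varepsilon_{\s{opt}}^{\s{av}}(\calE,Q)^2$. Here the inner minimum exists by compactness of $Y$, so for each channel the optimal error is attained, and the infimum over admissible $\varepsilon$ matches this sup. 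The right side is $\varepsilon_{\s{dec}}^{\s{av}}(\calE,Q)^2$, and the outer minimum is attained by some $\calR^\star$. This $\calR^\star$ is then an $\s{av}$-$\varepsilon_{\s{opt}}^{\s{av}}$ universal recovery map. Combined with Theorem~\ref{th:sufficientAQEC}, $\calR^\star$ is an $\s{av}$-$\zeta_{\s{H}}(\calE,Q)$ universal recovery map. The inequality $\varepsilon_{\s{opt}}^{\s{av}}\le\varepsilon_{\s{dec}}^{\s{av}}$ is trivial in any case; the minimax equality supplies the reverse direction.

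The main technical point is making the identification of $X$ with a convex subset of a linear topological space precise. An $\calE$-controlled channel is specified by a coefficient matrix $C$, but $f$ should depend only on the channel itself, and $C$ may have unbounded row count. The cleanest route is to parametrize $\calN$ by $\Lambda=C^\dag C$, an $S\times S$ PSD matrix with $\Lambda\preceq I$. Then
\[
\calN(X)=\sum_{k,l}\Lambda_{l,k}E_kXE_l^\dag .
\]
The constraints are $\Lambda\succeq0$, $\Lambda\preceq I$, and the trace-preservation condition $\sum_{k,l}\Lambda_{l,k}E_l^\dag E_k=I$. Each is convex, and together they form a closed subset of a finite-dimensional space. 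Conversely, any such $\Lambda$ yields an $\calE$-controlled channel via $C=\sqrt{\Lambda}$. The map $\Lambda\mapsto f$ is linear, so convexity holds in this parametrization without needing linear independence of $\calE$, and compactness of $X$ even comes for free.

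If $X$ happens to be empty, the statement is vacuous. Otherwise the argument above goes through.
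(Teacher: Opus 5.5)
Your proof is correct and is essentially the paper's argument: Sion's minimax theorem applied to the bilinear channel fidelity over CPTP recovery maps and $\calE$-controlled channels, followed by the same identification of the two sides with $\varepsilon_{\s{opt}}^{\s{av}}$ and $\varepsilon_{\s{dec}}^{\s{av}}$ and an appeal to Theorem~\ref{th:sufficientAQEC}. The only difference is bookkeeping. You let the CPTP recovery maps carry Sion's compactness hypothesis, so the equality needs only convexity of $\mathscr{N}(\calE)$. The paper instead puts compactness on the channel side and proves it in Lemma~\ref{lem:CompactEcontrolled}, using a positive-semidefinite contraction parametrization that is essentially your $\Lambda=C^\dag C$.
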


\begin{proof}
Let $X$ is the set of recovery maps, which is a subset of the linear space of superoperators $L(\calH_{\SOUT})\to L(Q)$ equipped with the norm topology (recall that all norm topologies on a finite-dimensional vector space are equivalent), let $Y$ be the set of $\calE$-controlled channels, and let $f$ be the function defined as
    \[f(\calR,\calN)=\calF_{\mathrm{ch}}(  \calR\circ\calN|_{L(Q)},I_{L(Q)})=\bra{\Phi_Q} (I_{L(Q)}\otimes \calR\circ \calN) (\phi_Q) \ket{\Phi_{Q}}, \]
    where $\ket{\Phi_Q}\in Q\otimes Q$ is the maximally entangled state and $\phi_Q=\ket{\Phi_Q}\bra{\Phi_Q}$. 

    \begin{lemma}\label{lem:CompactEcontrolled}
 The spaces $X$ and $Y$ defined in the proof of Theorem \ref{th:AvcaseOptimalDecoding} are compact and convex subsets of topological vector spaces of superoperators $L(\calH_{\SOUT})\to L(Q)$ and $L(\calH_{\SIN})\to L(\calH_{\SOUT})$, respectively, equipped with the diamond norm topology.
    \end{lemma}
The proof of this lemma is given in Appendix~\ref{app:LemmaCompactProof}.  

For any $\calR\in X$ and $\calN\in Y$ the functions  $f(\calR,\cdot)$ and $f(\cdot,\calN)$ are affine and therefore  continuous and both quasi-convex and quasi-concave. Together with $X$ and $Y$ defined above, they satisfy the
assumtptions of Sion's minimax theorem, and we therefore have: 
    \begin{align*}
       \max_{\calR\in X}\min_{\calN\in Y}  \calF_{\mathrm{ch}}(  \calR\circ\calN|_{L(Q)},I_{L(Q)})=\min_{\calN\in Y}\max_{\calR\in X}  \calF_{\mathrm{ch}}(   \calR\circ\calN|_{L(Q)},I_{L(Q)}).
    \end{align*}
    Using the above and continuity of the square root function we conclude: 
    \begin{align*}
        \varepsilon_{\s{dec}}^{\s{av}}(\calE,Q)&=\inf_{\calR\in X}\sup_{\calN\in Y}d_{\mathrm{ch}}( \calR\circ\calN|_{L(Q)},I_{L(Q)}))\\
        &=\sqrt{1-\max_{\calR\in X}\min_{\calN\in Y} \calF_{\mathrm{ch}}(  \calR\circ\calN|_{L(Q)},I_{L(Q)})}\\
        &=\sqrt{1-\min_{\calN\in Y}\max_{\calR\in X} \calF_{\mathrm{ch}}(  \calR\circ\calN|_{L(Q)},I_{L(Q)})}
        \\&=\sup_{\calN\in Y}\inf_{\calR\in X}d_{\mathrm{ch}}( \calR\circ\calN|_{L(Q)},I_{L(Q)}))\\
        &=\varepsilon_{\s{opt}}^{\s{av}}(\calE,Q).
    \end{align*}
    The existence of an $\s{Av}$-$\zeta_{\s{H}}(\calE,Q)$ universal recovery map now follows immediately from Theorem~\ref{th:sufficientAQEC}.
\end{proof}
     
We next turn to the worst-case criterion. Here, we show that the environment-leakage distance controls not only the optimal AQEC error, as established in Theorem~\ref{th:sufficientAQEC} ß{(=\cite[Theorem~3]{Elimelech2026Theory})}, 
but also the optimal decoding error. Thus, a given error-set condition guarantees the existence of a {\bf single recovery map} that supports $\varepsilon$-AQEC over all $\calE$-controlled channels, where $\varepsilon$ depends only on $Q$ and $\calE$.
\begin{theorem}\label{th:DecExistsWorst}
    Let $Q\subset \calH_{\SIN}$ be a code in a finite-dimensional Hilbert space and let $\calE=\ppp{E_i}_{i\in [S]}$ be an error set with operators $E_i:\calH_{\SIN}\to\calH_{\SOUT}$. Then 
    \[\varepsilon_{\s{dec}}^{\s{wc}}(\calE,Q)\leq \sqrt{\zeta(\calE,Q)}.\] 
        That is, there exists a universal recovery map $\calR_{\calE}$ such that for any $\calE$-controlled channel we have 
    \[d_{\mathrm{wc}}\p{\calR_{\calE}\circ \calN|_{L(Q)},I_{L(Q)}}\leq \sqrt{\zeta(\calE,Q)}\]
\end{theorem}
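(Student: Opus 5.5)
We will construct the universal decoder directly from the B\'eny--Oreshkov framework, applied to a single auxiliary channel that encodes the whole error set. Note that the bound is $\sqrt{\zeta}$ rather than the $\sqrt{2\zeta}$ of Theorem~\ref{th:sufficientAQEC}, so we cannot simply invoke that theorem for one channel. The natural object is the CP map $\calE(X)=\sum_k E_k X E_k^\dag$ together with its Stinespring isometry-like operator $V=\sum_k E_k\otimes\ket{k}:\calH_{\SIN}\to\calH_{\SOUT}\otimes\C^S$. Every $\calE$-controlled channel $\calN$ with coefficient matrix $C$ has Stinespring operator $V_C=(I\otimes C)V$. Hence $\calN=\tr_{E}\circ\,\mathrm{Ad}_{I\otimes C}\circ\,\mathrm{Ad}_V$: every controlled channel factors through the common "extended" output $\calH_{\SOUT}\otimes\C^S$, followed by the contraction $C$ and a partial trace over the environment. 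Its complementary channel on the code is $\hat\calN(X)=C\,\hat\calE(X)\,C^\dag$, where $\hat\calE(X)=\sum_{k,l}\tr(E_l^\dag E_k X)\ket{k}\bra{l}$. On $L(Q)$ this complementary map differs from $X\mapsto \tr(X)\lambda$ (after transposition conventions) exactly by $\calB^{\calE}_{\lambda,Q}$.

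The plan is to prove, for each fixed $\lambda$, a single recovery $\calR_\lambda$ that is independent of $C$, and then take $\lambda$ near-optimal in \eqref{eq:environment_leakage_distance}. The steps are as follows.

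\textbf{Step 1.} For every controlled $\calN$, conjugation by a contraction is diamond-norm non-increasing: $\|C\,\calB(\cdot)\,C^\dag\|_\diamond\le\|C\|_\infty^2\|\calB\|_\diamond\le\zeta$. So the complementary channel of $\calN$ restricted to $Q$ is $\zeta$-close in diamond norm to the constant map $X\mapsto\tr(X)\,C\lambda C^\dag$.

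\textbf{Step 2.} Invoke the B\'eny--Oreshkov duality: if the complementary channel is within $\eta$ (in the relevant Bures/diamond sense) of a constant channel, there is a recovery achieving $d_{\mathrm{wc}}\le\sqrt{\eta}$ under their relation between Bures distance and diamond distance. The loss of the factor 2 compared with Theorem~\ref{th:sufficientAQEC} should come from working with the fidelity/Bures form directly rather than converting through a trace-distance bound.

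\textbf{Step 3 (universality).} The recovery produced by Uhlmann's theorem for $\calN$ is the adjoint-type map built from the isometry relating the purification of the actual output to that of $\ket{\psi}\otimes\sigma_\lambda$. We must show it can be chosen independently of $C$. The intended route is to define the recovery once for the "complete" channel $\tilde\calN=\mathrm{Ad}_V$ viewed on the enlarged output $\calH_{\SOUT}\otimes\C^S$ (normalized, e.g.\ after dividing by $\|\sum E_k^\dag E_k\|$ or completing $V$). We then argue that the output of $\calN$ is obtained from it by a channel acting only on the environment register, which commutes with any recovery acting on $\calH_{\SOUT}$ alone.

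\textbf{Main obstacle.} The difficulty is that $C$ mixes the environment index with which Kraus operator acts, and $C$ need not be an isometry. So $\calN$'s output is not literally a marginal of a fixed larger state, and the Uhlmann isometry a priori depends on $C$. I would first try to dilate $C$ to a unitary $U_C$ on $\C^S\otimes\C^2$ (Halmos dilation). This realizes $\calN$ as a marginal of $(I\otimes U_C)V$ applied to inputs, with the additional ancilla discarded. A recovery acting only on $\calH_{\SOUT}$ that is good for the reference fixed dilation then remains good for all $C$.

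Failing that, I would fall back on a minimax argument as in Theorem~\ref{th:AvcaseOptimalDecoding}. The worst-case fidelity is a minimum over inputs of functions that are concave in $\calR$ (by joint concavity of fidelity), so Sion's minimax theorem applies with $Y$ now the convex compact set of pairs (controlled channel, input state). Combining this with the per-channel bound from Steps 1–2 then yields the stated $\sqrt{\zeta}$.
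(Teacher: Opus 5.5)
Your Steps 1--2 give at best a per-channel bound. The real content of the theorem is Step 3, and neither mechanism you propose for it works.

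\textbf{The Halmos route.} It does not realize $\calN$ as a marginal. Let $U_C$ be a unitary dilation of $C$ on the Kraus register plus an extra qubit. Tracing that whole register out of $(I\otimes U_C)(V\ket{\psi}\otimes\ket{0})$ gives $\sum_k E_k\ket{\psi}\bra{\psi}E_k^\dag=\calM_{\calE}(\ket{\psi}\bra{\psi})$, because a unitary on a discarded system cannot change the marginal. $\calN$ appears only after projecting the extra qubit onto $\ket{0}$, which is a post-selection, not a partial trace. Moreover $\mathrm{Ad}_V$ is not trace preserving, so ``a recovery that is good for the reference dilation'' has no $d_{\mathrm{wc}}$ meaning. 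Normalizing or completing $V$ changes the error set and hence $\zeta$.

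\textbf{The Sion fallback.} It fails the quasi-convexity hypothesis. For fixed $\calR$, the entanglement fidelity $\sum_k\abs{\tr(\rho A_k)}^2$ (with $\ppp{A_k}$ Kraus operators of $\calR\circ\calN$) is affine in $\calN$ but quadratic in $\rho$, so it is not jointly quasi-convex in $(\calN,\rho)$. For example, take $\calE=\ppp{\sigma_x,\sigma_z}$ on a qubit and $\calR=\mathrm{id}$. The pairs ($\rho\mapsto\sigma_z\rho\sigma_z$, input $\ket{+}\bra{+}$) and ($\rho\mapsto\sigma_x\rho\sigma_x$, input $\ket{0}\bra{0}$) are both admissible and give fidelity $0$, yet their midpoint gives $1/4$. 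If you instead minimize over $\rho$ first, $\calN\mapsto\calF_{\mathrm{wc}}(\calR\circ\calN|_{L(Q)},I_{L(Q)})$ is concave, not quasi-convex. This is exactly why the paper uses Sion only for the channel fidelity, which is affine in both arguments.

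\textbf{What is missing.} Your intuition is right: $C$ acts on the Kraus register and so commutes with any recovery acting on $\calH_{\SOUT}$. What you lack is a notion of closeness that survives a non-isometric $C$. Fidelity statements about output states do not, since conjugation by a contraction is not a channel. The paper works instead with dilation operators in operator norm.

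\begin{enumerate}
\item It applies the Stinespring continuity theorem for CP maps (not channels) to the complementary pair of the error-set map itself: $\widehat{\calM}^Q_{\calE}$ and $X\mapsto\tr(XP)\lambda$, with the canonical choice $\lambda=\frac1K\tr_K(A_{\s{QEC}})$. For this $\lambda$, $\norm{\calB^{\calE}_{\lambda,Q}}_{\diamond}\le 2\zeta$.
\item This yields partial isometries $J_{\calE}$ on $\calH_{\SOUT}$ and $J_\lambda$ with $\norm{(I\otimes J_{\calE})V_{\calE}-(I\otimes J_\lambda)V_\lambda}_\infty\le\sqrt{2\zeta}$, where $V_\lambda\ket{\psi}=P\ket{\psi}\otimes\ket{\psi_\lambda}$ and $\ket{\psi_\lambda}$ purifies $\lambda$.
\item The recovery $J_\lambda^\dag J_{\calE}$, completed to be trace preserving, is fixed once and never sees $C$.
\item Since $V_C=(C\otimes I)V_{\calE}$, the operator-norm bound survives left multiplication by the contraction $C\otimes I$.
\item $(C\otimes I)V_\lambda$ dilates $I_{L(Q)}$, because $\tr(C\lambda C^\dag)=\tr(\calN(P/K))=1$.
\item $d_{\mathrm{wc}}\le\frac{1}{\sqrt2}\norm{\cdot}_\infty$ then gives $\sqrt{\zeta}$.
\end{enumerate}

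\textbf{Your per-channel constants.} For a $\lambda$ nearly attaining the infimum in the definition of $\zeta$, $C\lambda C^\dag$ need not be a state. So B\'eny--Oreshkov cannot be applied with diamond error $\zeta$. With the canonical $\lambda$ the diamond error is $2\zeta$, and Fuchs--van de Graaf gives $d\le\sqrt{2\zeta/2}$, not $\sqrt{\eta}$. Your per-channel $\sqrt{\zeta}$ is therefore correct, but not for the reason you give.
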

 {This theorem represents an improvement over Theorem~\ref{th:sufficientAQEC} by both tightening the upper bound for the optimal $\varepsilon$ and showing that it holds under the universal recovery map.} The proof is rather technical, so here we only sketch the main ideas and present a complete argument in    Appendix~\ref{app:PoorfBOdecoder}. 
\begin{proof}[Proof overview] We generalize the approach of \cite{Kretschmann2008Information} from the channel to the error-set setting. The key tool underlying this approach is the Stinespring dilation continuity theorem, which asserts that two CP maps are  close (under the worst-case criterion) if and only if they admit $\infty$-norm close Stinespring dilation operators in a common ancilla space. We apply this idea in both directions: First we use the closeness of the complementary map of the error-set noise superoperator to a constant map to obtain close Stinespring dilations of the error-set noise map and a (scaled) identity channel. These dilations give rise to a recovery map $\calR$. In the other direction, we fix an $\calE$-controlled channel $\calN$, and show that the recovered noisy state $\calR\circ\calN$  admits a Stinespring dilation  which is close to a Stinespring operator of the identity channel. This closeness, by the continuity theorem, implies that $\calR\circ\calN$ is close to the identity channel as required.

With this general theme in mind, let us sketch the main steps of the proof. The definition of $\zeta(\calE,Q)$ implies that for a canonical choice of the matrix $\lambda$, the corresponding  B\'eny-Oreshkov superoperator $\calB_{\lambda,Q}^{\calE}$  satisfies $\|\calB_{\lambda,Q}^{\calE} \|_{\diamond}\leq 2\zeta(\calE,Q)$.  We consider the error-set noise map on the code space, $\calM_{\calE}^Q$, whose Kraus representation is $\ppp{E_iP}_i$. We make two important observations. First,
\[\calB^{\calE}_{\lambda,Q}(X)= \sum_{k,l}\tr\p{( PE_l^\dag E_kP-\lambda_{k,l}P)X}\ket{k}\bra{l}=\widehat{\calM}_{\calE}^Q-\calM_\lambda,\]
where $\widehat{\calM}_{\calE}^Q$ is the complementary map of $\calM_{\calE}^Q$ with respect to the canonical Stinespring dilation operator $\calV_{\calE}$  of $\calM_{\calE}^Q$ (see Definition~\ref{def:Stinespring}), and $\calM_{\lambda}$ is an operator on $L(Q)$ that admits a Kraus dilation operator $V_{\lambda}$. The second observation is that the complementary channel of $\calM_{\lambda}$ via the Stinespring operator $V_{\lambda}$ turns out to be a scaled identity channel $\alpha_\lambda  I_{L(Q)}$. 

The continuity theorem combined with the fact that any two different Stinespring dilation operators differ by a partial isometry on the ancilla spaces implies that there exist partial isometries $J_{\calE}$ and $J_{\lambda}$ such that 
\begin{equation}
    \norm{(I\otimes J_{\calE} )V_{\calE}-(I\otimes J_\lambda )V_{\lambda}}_{\infty}\leq \sqrt{\| \widehat{\calM}_{\calE}^Q-\calM_{\lambda}\|_{\diamond}}\leq \sqrt{2\zeta(\calE,Q)},\label{eq:ContinuityMotivation}
\end{equation}
where on the previous line and in the remainder of this discussion, the identity operators in the tensor product maps 
act on the appropriate spaces (which may be different). We omit these spaces from the notation for better readability.

The bound \eqref{eq:ContinuityMotivation} motivates a recovery map: assume that one wants to (approximately) reverse the operation of the error-set noise map $\calM_{\calE}^Q$. A natural choice for decoding operation would be:
\[\calR(X)=\tr_{\s{env}}\p{R X R^\dag}+(\text{arbitrary completion for TP}), \quad R=J_\lambda^\dag  J_{\calE}.\]
To see that, we note that decoded noise channel on the code space $\calR\circ \calM_{\calE}^Q$  turns out to have Kraus representation of the form $V_{\s{dec}}=(I\otimes J_{\lambda}^\dag J_{\calE})V_{\calE}$ (up to the trace preservation term), which satisfies 
\begin{align}
    \norm{V_{\s{dec}}-V_{\lambda}}_{\infty}&=\nonumber\norm{(I\otimes J_{\lambda}^\dag J_{\calE})V_{\calE}-(I\otimes J_{\lambda}^\dag J_{\lambda})V_{\lambda}}_{\infty}\\
    &\nonumber=\norm{(I\otimes J_{\lambda}^\dag)\Big((I\otimes  J_{\calE})V_{\calE}-(I\otimes  J_{\lambda})V_{\lambda}\Big)}_{\infty}\\
    &\leq\norm{(I\otimes  J_{\calE})V_{\calE}-(I\otimes  J_{\lambda})V_{\lambda}}_{\infty}\leq \sqrt{2\zeta  (\calE,Q)},\label{eq:decodedInfBound}
\end{align}
where the last step uses the fact that $J_{\lambda}^\dag$ is a partial isometry and therefore it is contractive. 
Since $V_{\lambda}$ dilates the scaled identity, 
the continuity theorem implies that the $\calR\circ \calM_{\calE}^Q$ is close to $\alpha_{\lambda}I_{L(Q)}$ as desired. 

In the final step, we transition from recovering the error-set noise map (which generally is not a physical quantum channel) to recovering an actual $\calE$-controlled physical channel $\calN$ given by a coefficient matrix $C$ (see Definition~\ref{def:errorSetAQEC}). A straightforward calculation shows that $\calN$ admits a Stinespring dilation operator $V_{C}$ related to the $V_{\calE}$ by $V_{C}=(C\otimes I)V_{\calE}$. Accordingly, (up to trace preserving terms) the recovered noise channel $R\circ\calN|_Q$ on the code space has the Stinespring representation 
       \begin{align*}
       V_{\s{dec},C}&=(I\otimes R)V_C=(I\otimes R)(C\otimes I)V_{\calE}=(C\otimes I)(I\otimes R)V_{\calE}\\
       &= (C\otimes R)V_\calE.
       \end{align*}
The spectral condition $\norm{C}_{\infty}\leq 1$  together with \eqref{eq:decodedInfBound} gives that  
    \begin{align*}
    \norm{V_{\s{dec},C}-(C\otimes I)V_{\lambda}}_{\infty}&=\norm{(C\otimes I)\Big(  (I\otimes R)V_{\calE}-V_{\lambda}\Big)}_\infty\\
    &\leq\norm{  (I\otimes R)V_{\calE}-V_{\lambda}}_\infty\leq \sqrt{2\zeta(\calE,Q)}. 
    \end{align*}
 By channel normalization, $V_{\lambda_C}:=(C\otimes I)V_{\lambda}$ is a Stinespring dilation for the identity channel. Thus, by Stinespring continuity theorem we have 
     \[ d_{\mathrm{wc}}(R\circ\calN|_{L(Q)},I_{L(Q)})\leq \frac{1}{\sqrt{2}}\norm{V_{\s{dec},C}-V_{\lambda_C}}_{\infty}\leq \sqrt{\zeta(\calE,Q)}. \qedhere
     \]
\end{proof}

Theorem~\ref{th:DecExistsWorst} strengthens the sufficient AQEC condition of \cite[Theorem~3]{Elimelech2026Theory} by a factor of $\sqrt{2}$:
\begin{cor}\label{cor:ZetaBetterBound} For an error set $\calE $ and a quantum code $Q$ we have:
    \[ \varepsilon_{\s{opt}}^{\s{wc}}\p{\calE,Q}  \leq \sqrt{\zeta\p{\calE,Q}}. \]
\end{cor}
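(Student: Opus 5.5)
The corollary follows directly from Theorem~\ref{th:DecExistsWorst}, by comparing the two optimization problems in Definitions~\ref{def:errorSetAQEC} and~\ref{def:universal_decoder}. The plan is to show the general inequality
\[
\varepsilon_{\s{opt}}^{\s{wc}}(\calE,Q)\leq \varepsilon_{\s{dec}}^{\s{wc}}(\calE,Q)
\]
and then chain it with the bound $\varepsilon_{\s{dec}}^{\s{wc}}(\calE,Q)\leq\sqrt{\zeta(\calE,Q)}$ from Theorem~\ref{th:DecExistsWorst}.

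For the first inequality, fix any $\varepsilon$ in the set defining $\varepsilon_{\s{dec}}^{\s{wc}}(\calE,Q)$ in \eqref{eq:optimal_universal_decoding_errors}. Then there is a CPTP map $\calR$ with $d_{\mathrm{wc}}(\calR\circ\calN|_{L(Q)},I_{L(Q)})\leq\varepsilon$ for every $\calN\in\mathscr{N}(\calE)$. For each such $\calN$ we take $\calD=\calR$ in Definition~\ref{def:AQECchannel}. This shows $Q$ is an $\varepsilon$-$\s{wc}$-AQEC code for every $\calE$-controlled channel, so $\varepsilon$ also lies in the set defining $\varepsilon_{\s{opt}}^{\s{wc}}(\calE,Q)$. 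Since the first set is contained in the second, taking infima gives the inequality. In words, a single recovery map working for all channels is in particular a valid channel-dependent choice.

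Theorem~\ref{th:DecExistsWorst} gives an explicit $\calR_{\calE}$ achieving $\varepsilon=\sqrt{\zeta(\calE,Q)}$ for all $\calE$-controlled channels. In the degenerate case $\zeta(\calE,Q)=0$, where the infimum is over $\varepsilon>0$, every positive $\varepsilon$ qualifies, so the infimum is $0$ and the bound still holds. Hence $\varepsilon_{\s{opt}}^{\s{wc}}(\calE,Q)\leq\sqrt{\zeta(\calE,Q)}$, which improves the bound $\sqrt{2\zeta(\calE,Q)}$ of Theorem~\ref{th:sufficientAQEC} by a factor $\sqrt{2}$.

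There is no genuine obstacle here, since all the analytic work sits in Theorem~\ref{th:DecExistsWorst}. The only point needing care is the bookkeeping of the infima over $\varepsilon>0$ described above.
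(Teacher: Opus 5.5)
Your proposal is correct and follows the paper's argument. The paper likewise notes that a universal recovery map is in particular a valid channel-by-channel recovery, which gives $\varepsilon_{\s{opt}}^{\s{wc}}(\calE,Q)\leq\varepsilon_{\s{dec}}^{\s{wc}}(\calE,Q)\leq\sqrt{\zeta(\calE,Q)}$ via Theorem~\ref{th:DecExistsWorst}; your extra care with the infima and the case $\zeta(\calE,Q)=0$ is harmless but not needed.
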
 Indeed, since every universal recovery map is, in particular, a valid recovery map for each individual $\calE$-controlled channel, using Theorem~\ref{th:DecExistsWorst} we have $\varepsilon_{\s{opt}}^{\s{wc}}\p{\calE,Q} \leq \varepsilon_{\s{dec}}^{\s{wc}}\p{\calE,Q} \leq \sqrt{\zeta\p{\calE,Q}}$. Hence, the environment-leakage distance provides a uniform worst-case guarantee through a single recovery map, while also strengthening  
the sufficient condition by a $\sqrt{2}$ factor.

\section{Error-sets Petz recovery map: a polar approach}
A common approach to AQEC is based on the Petz recovery map, also known in the QEC literature as the transpose channel \cite{barnum2002reversing,ng2010simple,noh2018quantum,zheng2024near,li2025optimality,kim2026optimal}. In the coding setting, the Petz recovery map $\calR_{\calN,Q}$ is constructed from a noise channel $\calN$ and a code $Q$. It gives an exact recovery operation whenever $Q$ satisfies the Knill--Laflamme conditions, and remains near optimal in the approximate setting. In the Bures-distance normalization used here, its decoding error is within a factor of $\sqrt{2}$ of the optimal error \cite{barnum2002reversing,ng2010simple}. In \cite{Elimelech2026Theory}, the optimal average-case AQEC error in the error-set model was bounded by analyzing the Petz recovery map $\calR_{\calN,Q}$ separately for each $\calE$-controlled channel $\calN$. It was shown that the resulting average-case Bures distance is bounded by the Knill--Laflamme Hellinger distance $\zeta_{\s{H}}\p{\calE,Q}$. This yields a bound on the optimal AQEC error $\varepsilon_{\s{opt}}^{\s{av}}\p{\calE,Q}$, but does not directly control the optimal decoding error $\varepsilon_{\s{dec}}^{\s{av}}\p{\calE,Q}$, since the recovery map $\calR_{\calN,Q}$ depends on the particular channel $\calN$. Bounding the decoding error instead requires a single recovery map whose performance is controlled simultaneously for all $\calE$-controlled channels.

In this section, we bridge this gap by replacing the channel-dependent Petz map with a Petz map associated directly with the error set $\calE$. The Petz construction does not require the underlying CP map to be trace preserving, and may be applied without assuming that its Kraus operators satisfy a completeness relation \cite{barnum2002reversing,ng2010simple,afham2026projections}. After the standard completion outside the relevant support, it defines a CPTP recovery map. We apply this construction to the error-set CP map
\begin{equation}
    \calM_{\calE}(X):=\sum_{E\in\calE}EXE^\dag. \label{eq:ErrorMapSet}
\end{equation}
The goal of this section is to show that the average-case and worst-case performance of the resulting recovery map can be controlled uniformly over all $\calE$-controlled channels.

We first recall the conventional definition of the Petz recovery map adapter to our error-set setting.

\begin{definition}[Error-set Petz recovery map]\label{def:errorSetPetz}
Let $Q\subseteq\calH_{\SIN}$ be a $K$-dimensional quantum code with orthogonal projector $P$, and let $\calE=\ppp{E_i}_{i\in[S]}$ be an error set with operators $E_i:\calH_{\SIN}\to\calH_{\SOUT}$. Consider the CP map $\calM_{\calE}(X)$ defined in \eqref{eq:ErrorMapSet}, let the reference state $\sigma:=\frac{1}{K}P$, and denote $\Gamma=\calM_{\calE}(P)$. The error-set Petz recovery map associated with $\calM_{\calE}$ and $Q$ is
\[ \calR_{\calE,Q}^{\s{Petz}}(X) := \sigma^{1/2}
\calM_{\calE}^{\dag}\p{ \calM_{\calE}(\sigma)^{-1/2} X \calM_{\calE}(\sigma)^{-1/2}}\sigma^{1/2}, \]
where the inverse is the Moore--Penrose pseudoinverse taken on the support of $\calM_{\calE}(\sigma)$. Equivalently,
\[ \calR_{\calE,Q}^{\s{Petz}}(X) = P\calM_{\calE}^{\dag}\p{ \Gamma^{-1/2} X \Gamma^{-1/2} }P. \]
On the support of $\Gamma$, this map admits a Kraus representation of the form
\begin{equation}
    \calR_{\calE,Q}^{\s{Petz}}(X) = \sum_{i\in[S]}R_iXR_i^\dag, \qquad R_i:=PE_i^\dag\Gamma^{-1/2}.\label{eq:KrausSetPetz}
\end{equation}
The map is trace preserving on this support, and we can view $\calR_{\calE,Q}^{\s{Petz}}$ as a CPTP map on $L(\calH_{\SOUT})$ upon completion on its orthogonal complement:
\[\calR_{\calE,Q}^{\s{Petz}}= \sum_{i\in[S]}R_iXR_i^\dag+ \tr\p{(I-P_{\Gamma})X}\rho_Q,\]
where $P_{\Gamma}$ is the projection on $\supp(\Gamma)$ and $\rho_Q=\ket{\psi_{Q}}\bra{\psi_Q}\in L(Q)$ is an arbitrary pure state. 
\end{definition}

\subsection{Polar representation of the Petz map}
We adopt the polar decomposition approach to the Petz map used in \cite{afham2026projections}, where the Petz map is characterized as the Choi--Bures projection of an unnormalized reverse map onto the set of quantum channels. We adapt this viewpoint to the AQEC setting, taking the case of exact QEC as the source of intuition and motivation for the construction.  Let $Q\subseteq\calH_{\SIN}$ be a $K$-dimensional code with orthogonal projector $P$, and suppose that $Q$ satisfies the Knill--Laflamme conditions for the error set $\calE=\ppp{E_i}_{i\in[S]}$. Equivalently, its QEC matrix $A_{\s{QEC}}$ defined in \eqref{eq:QEC_matrix} belongs to the Knill--Laflamme space \eqref{eq:KL_space} and has the form $A_{\s{QEC}}=I_K\otimes\lambda$ for some positive semidefinite $S\times S$ matrix $\lambda$. Consider the error-synthesis operator $S_{\calE,Q}:Q\otimes\C^S\longrightarrow\calH_{\SOUT}$ given as $S_{\calE,Q}\p{ \ket{\psi}\otimes\ket{i}} :=E_iP\ket{\psi}$, or equivalently
\begin{equation}
     S_{\calE,Q}=\sum_{i\in [S]} E_i P \otimes \bra{i}. \label{eq:ErrorSynthDef}
\end{equation}
where we identify $Q$ with its image in $\calH_{\SIN}$ under the inclusion map $P$. We observe that
\begin{equation}
S_{\calE,Q}^{\dag}S_{\calE,Q} =\sum_{i,j\in[S]} PE_i^\dag E_jP\otimes\ket{i}\bra{j}=A_{\s{QEC}}, \quad \text{and}\quad S_{\calE,Q}S_{\calE,Q}^{\dag}=\sum_{i\in[S]}E_iPE_i^\dag=\calM_{\calE}(P)=\Gamma.
\label{eq:errorSynthesisDagers}
\end{equation}
Upon writing $P=\sum_{k\in[K]}\ket{c_k}\bra{c_k}$
for an orthonormal basis $\ppp{\ket{c_k}}_{k\in[K]}$ of $Q$, the operator $S_{\calE,Q}^\dag S_{\calE,Q}$ in \eqref{eq:errorSynthesisDagers} is precisely the operator represented by the QEC matrix $A_{\s{QEC}}$. 

Let $S_{\calE,Q}=UG$ be the polar decomposition of $S_{\calE,Q}$, where
\[G := \sqrt{S_{\calE,Q}^{\dag}S_{\calE,Q}} = \sqrt{A_{\s{QEC}}} = I_K\otimes\sqrt{\lambda}. \]
Writing $P_{\lambda}$ for the projection onto $\supp(\lambda)$, the projections 
of the partial isometry $U$ are
\[ U^\dag U=I_K\otimes P_{\lambda}, \qquad UU^\dag=P_{\Gamma}.\]
Note that for every codeword $\ket{\psi}\in Q$ and every $i\in[S]$, we have
\[E_i\ket{\psi}= S_{\calE,Q}\p{\ket{\psi}\otimes\ket{i}}=
U\p{I_K\otimes\sqrt{\lambda}}
\p{\ket{\psi}\otimes\ket{i}}=
U\p{\ket{\psi}\otimes\sqrt{\lambda}\ket{i}} \]
Since $\sqrt{\lambda}\ket{i}\in\supp(\lambda)$, applying $U^\dag$ gives $U^\dag E_i\ket{\psi} = \ket{\psi}\otimes\sqrt{\lambda}\ket{i}$.
Thus, $U^\dag$ 
separates the logical state from the error label. In particular, for every $\rho\in L(Q)$,
\[ \tr_S\p{U^\dag E_i\rho E_i^\dag U} = \lambda_{ii}\rho. \]
Hence, after normalization, the logical state is recovered perfectly for each individual error. More generally, trace preservation guarantees that the corresponding scalar factors sum to one for every channel whose Kraus operators lie in $\Span(\calE)$, and the same recovery map therefore exactly recovers every such channel.

This motivates the following definition of the polar decoder:
\begin{equation}
    \calR_{\calE,Q}^{\s{polar}}(X) := \tr_S\p{U^\dag XU} +\tr\p{\p{I-P_{\Gamma}}X}\rho_Q,\label{eq:DefPolar}
\end{equation}
where $\rho_Q\in L(Q)$ is an arbitrary code state. Since $P_{\Gamma}=UU^\dag$ is the projection onto the final support of $U$, the second term completes the map to a CPTP map on $L(\calH_{\SOUT})$. 
\begin{prop}\label{prop:polarPetz} For any code $Q$ and error set $\calE$ we have $ \calR_{\calE,Q}^{\s{polar}}= \calR_{\calE,Q}^{\s{Petz}}$. 
\end{prop}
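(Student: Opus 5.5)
The plan is to show that the two maps have the same Kraus operators on $\supp(\Gamma)$. Their completion terms are literally identical, both being $\tr\p{(I-P_\Gamma)X}\rho_Q$ with the same $\rho_Q$. The key point is that the polar decomposition of the error-synthesis operator can be written from either side. In the construction preceding \eqref{eq:DefPolar} it was written as $S_{\calE,Q}=UG$ with $G=\sqrt{S_{\calE,Q}^\dag S_{\calE,Q}}$. Nothing there actually needs the Knill--Laflamme conditions: for general $Q$ and $\calE$ we simply take $G=\sqrt{A_{\s{QEC}}}$, and \eqref{eq:DefPolar} still makes sense. The same partial isometry $U$ also gives the left polar form $S_{\calE,Q}=\sqrt{S_{\calE,Q}S_{\calE,Q}^\dag}\,U=\Gamma^{1/2}U$, by \eqref{eq:errorSynthesisDagers}.

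First I would check that the two expressions for the Petz map in Definition~\ref{def:errorSetPetz} agree. With $\sigma=P/K$ we get $\calM_{\calE}(\sigma)=\Gamma/K$, so $\calM_{\calE}(\sigma)^{-1/2}=\sqrt{K}\,\Gamma^{-1/2}$ (pseudoinverse on $\supp\Gamma$) and $\sigma^{1/2}=P/\sqrt{K}$. The factors of $K$ cancel, leaving $P\calM_{\calE}^\dag(\Gamma^{-1/2}X\Gamma^{-1/2})P=\sum_i R_iXR_i^\dag$ with $R_i=PE_i^\dag\Gamma^{-1/2}$.

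Second, I would solve the left polar form for $U$. Multiplying $S_{\calE,Q}=\Gamma^{1/2}U$ on the left by the pseudoinverse $\Gamma^{-1/2}$ gives $\Gamma^{-1/2}S_{\calE,Q}=P_\Gamma U$. Since $UU^\dag=P_\Gamma$, the range of $U$ lies in $\supp(\Gamma)$, so $P_\Gamma U=U$. Hence $U=\Gamma^{-1/2}S_{\calE,Q}$ and $U^\dag=S_{\calE,Q}^\dag\Gamma^{-1/2}$. The care needed here is the main subtlety: we must justify the left polar identity with the same $U$ and keep track of supports. I would do this via the standard identity $UGU^\dag=\sqrt{S_{\calE,Q}S_{\calE,Q}^\dag}$, so that $S_{\calE,Q}=UG=UGU^\dag U$.

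Finally, I would expand the partial trace in the basis $\ket{i}$ of $\C^S$:
\[
\tr_S\p{U^\dag XU}=\sum_{i\in[S]}(I\otimes\bra{i})\,S_{\calE,Q}^\dag\Gamma^{-1/2}X\Gamma^{-1/2}S_{\calE,Q}\,(I\otimes\ket{i}).
\]
By \eqref{eq:ErrorSynthDef}, $S_{\calE,Q}(I\otimes\ket{i})=E_iP$, so each summand equals $PE_i^\dag\Gamma^{-1/2}X\Gamma^{-1/2}E_iP=R_iXR_i^\dag$. This matches \eqref{eq:KrausSetPetz}. Adding the identical completion terms gives $\calR_{\calE,Q}^{\s{polar}}=\calR_{\calE,Q}^{\s{Petz}}$.
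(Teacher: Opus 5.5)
Your proposal is correct and follows the paper's proof. Both arguments note that the completion terms coincide, identify $U=\Gamma^{-1/2}S_{\calE,Q}$, and expand $\tr_S(U^\dag XU)$ in the basis $\ket{i}$ to recover the Kraus operators $R_i=PE_i^\dag\Gamma^{-1/2}$. The paper simply asserts $U=\Gamma^{-1/2}S_{\calE,Q}$ ``by the polar decomposition''; you justify it through the left polar form $S_{\calE,Q}=\Gamma^{1/2}U$ together with $P_\Gamma U=U$, and you also check that the two expressions for the Petz map in Definition~\ref{def:errorSetPetz} agree.
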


\begin{proof}
    Since the $\supp(\Gamma)^\perp$-branch of both recovery maps is the same, by the definition of Petz recovery map and the polar recovery map, it is sufficient to show that for all $X$
    \[\tr_S\p{U^\dag XU}=\sum_{i\in [S]}PE_i^\dag\Gamma^{-1/2} X \Gamma^{-1/2}E_i P. \]
    Indeed, by the polar decomposition and \eqref{eq:errorSynthesisDagers}
    we have
    \[U=\p{S_{\calE,Q}S_{\calE,Q}^\dag}^{-1/2}S_{\calE,Q}=\Gamma^{-1/2}S_{\calE,Q}=\sum_{i\in [S]}\Gamma^{-1/2}E_iP\otimes \bra{i}.\]
    The rest follows from a straightforward calculation:
    \begin{align}
        \tr_S\p{U^\dag XU}&=\tr_S\p{\sum_{i,j\in [S]} \p{P E_i^\dag \Gamma^{-1/2}X\Gamma^{-1/2} E_j P} \otimes \ket{j}\bra{i} }=\sum_{i\in [S]} P E_i^\dag \Gamma^{-1/2}X\Gamma^{-1/2} E_i P.\label{eq:Utrace}
    \end{align}
\end{proof}

The polar decoder of \eqref{eq:DefPolar} was introduced in \cite{ma2025haar} in the restricted setting of Haar random codes, or more generally, approximately nondegenerate codes and unitary Hilbert--Schmidt orthogonal errors. In particular \cite{ma2025haar} considered the SVD decomposition of the error synthesis operator $S_{\calE,Q}=\Sigma D \Theta$, where the diagonal matrix $D$ is close to the identity whenever $S_{\calE,Q}$ is an approximate isometry under an approximate non-degeneracy assumption. The suggested recovery map was then defined via a Stinespring dilation operator $U'$ obtained by rounding the diagonal elements of $D$ to $1$, $U'=\Sigma \Theta$, which gives the partial isometry $U^\dag$ of the polar decoder. The equivalence of that polar decoder with the Petz recovery map, established in Proposition~\ref{prop:polarPetz} above, was not observed there. This equivalence allows us to combine the general theory of Petz recovery with the ideas of \cite{ma2025haar} to obtain both average-case and worst-case guarantees for general error sets, beyond both the channel setting usually considered for the Petz map and the restricted non-degenerate Pauli-like setting of \cite{ma2025haar}.

\subsection{Error-set performances of Petz recovery}
In this section we use the polar representation of Petz map to provide uniform error-set guarantees on AQEC capabilities of the Petz recovery map under both the worst- and average-case criteria. 
 \subsubsection*{Average-case analysis}
 We begin by analyzing the average-case error-set performances of Petz map. In Theorem~\ref{th:AVpetz} we prove the average-case error of Petz error-set recovery map is uniformly bounded by the Knill-Laflamme Hellinger distance. 
 \begin{theorem}\label{th:AVpetz}
     Let $Q\subseteq\calH_{\SIN}$ be a $K$-dimensional quantum code, and let $\calE=\ppp{E_i}_{i\in[S]}$ be an error set with operators $E_i:\calH_{\SIN}\to\calH_{\SOUT}$. Then the error-set Petz recovery map $\calR_{\calE,Q}^{\s{Petz}}$ of Definition~\ref{def:errorSetPetz} satisfies 
     \[d_{\mathrm{ch}}\p{\calR_{\calE,Q}^{\s{\s{Petz}}}\circ \calN|_{L(Q)},I_{L(Q)}}\leq \zeta_{\s{H}}(\calE,Q),\]
    For any $\calE$-controlled channel $\calN$.
 \end{theorem}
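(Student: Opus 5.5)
The plan is to bound the channel fidelity from below using only the ``polar'' branch of the recovery map. By Proposition~\ref{prop:polarPetz} we may work with $\calR_{\calE,Q}^{\s{polar}}$. Fix an $\calE$-controlled channel $\calN$ with Kraus operators $A_m=\sum_k c_{m,k}E_k$ and coefficient matrix $C$, $\norm{C}_\infty\le 1$. For a CP map on $L(Q)$ with Kraus operators $\{K_a\}$, the entanglement fidelity with the maximally entangled state is $\frac{1}{K^2}\sum_a|\tr K_a|^2$. The completion term $\tr((I-P_\Gamma)X)\rho_Q$ contributes nonnegatively, so I will discard it. Hence
\[
\calF_{\mathrm{ch}}\p{\calR^{\s{Petz}}_{\calE,Q}\circ\calN|_{L(Q)},I_{L(Q)}}\ \ge\ \frac{1}{K^2}\sum_{m}\sum_{j\in[S]}\abs{\tr_Q\p{(I_K\otimes\bra{j})U^\dag A_mP}}^2 .
\]

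\textbf{Key algebraic step.} Write $\ket{c_m}:=\sum_k c_{m,k}\ket{k}\in\C^S$. By \eqref{eq:ErrorSynthDef}, $A_mP\ket{\psi}=S_{\calE,Q}(\ket{\psi}\otimes\ket{c_m})$, and $U^\dag S_{\calE,Q}=U^\dag UG=G=\sqrt{A_{\s{QEC}}}$. Therefore
\[
U^\dag A_mP=G\,(I_K\otimes\ket{c_m}),
\]
and, setting $T:=\tr_K(G)$ (a positive semidefinite $S\times S$ matrix),
\[
\tr_Q\p{(I\otimes\bra{j})G(I\otimes\ket{c_m})}=\bra{j}T\ket{c_m}.
\]
Summing over $j$ and $m$, with $M:=\sum_m\ket{c_m}\bra{c_m}$, gives
\[
\calF_{\mathrm{ch}}\ge \frac{1}{K^2}\tr(TMT)=\frac{1}{K^2}\norm{TM^{1/2}}_2^2 .
\]
The matrix $M$ is, up to transpose or conjugation conventions, $C^{\top}\bar C$, so $0\preceq M\preceq I$ because $\norm{C}_\infty\le1$.

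\textbf{Using trace preservation.} Next I use trace preservation to rewrite $1$ in the same variables. From $\sum_mA_m^\dag A_m=I$ we get
\[
K=\sum_m\tr(PA_m^\dag A_mP)=\tr\p{A_{\s{QEC}}(I_K\otimes M)}=\norm{G(I_K\otimes M^{1/2})}_2^2,
\]
where the middle equality is checked entrywise from \eqref{eq:errorSynthesisDagers}, taking care with transposition. Set $Y:=G(I_K\otimes M^{1/2})$, so that $\tr_K Y=TM^{1/2}$. Then
\[
1-\calF_{\mathrm{ch}}\le\frac1K\norm{Y}_2^2-\frac{1}{K^2}\norm{\tr_KY}_2^2=\frac1K\norm{Y-I_K\otimes\tfrac1K\tr_KY}_2^2 .
\]
This is the ``variance'' identity: $I_K\otimes\frac1K\tr_KY$ is the Hilbert--Schmidt orthogonal projection of $Y$ onto $\calH_{\s{KL}}$. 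The right-hand side equals
\[
\frac1K\norm{\p{G-I_K\otimes\tfrac1KT}(I_K\otimes M^{1/2})}_2^2\le\frac1K\norm{G-I_K\otimes\tfrac1KT}_2^2,
\]
using $\norm{M^{1/2}}_\infty\le1$. Expanding the last expression gives $\frac1K\tr(A_{\s{QEC}})-\frac1{K^2}\norm{\tr_K\sqrt{A_{\s{QEC}}}}_2^2$, which is $\zeta_{\s{H}}(\calE,Q)^2$ by \eqref{eq:KLHdistanceExpression}. Taking square roots yields the theorem.

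\textbf{Main obstacle.} The delicate point is bookkeeping rather than analysis. I must verify that the Kraus operators of the composed map restricted to $Q$ are exactly $(I\otimes\bra{j})U^\dag A_mP$, and keep consistent transposes/conjugates between $C$, $\ket{c_m}$, and the index ordering of $A_{\s{QEC}}$. This ensures that the same $M$ appears both in the fidelity bound and in the trace-preservation identity. Any convention mismatch there would replace $M$ by $\bar M$, which still satisfies $\norm{\bar M}_\infty\le1$, so the argument should be robust.
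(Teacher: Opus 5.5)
Your proposal is correct and follows essentially the same route as the paper. Both arguments use the Kraus-trace formula for channel fidelity and the identity $U^\dag S_{\calE,Q}=\sqrt{A_{\s{QEC}}}$ to get $\calF_{\mathrm{ch}}=\frac{1}{K^2}\tr(T^2M)$ with $M=C^{T}\bar C=\bar C^\dag\bar C$. Both then use trace preservation to write $1=\frac1K\tr\p{A_{\s{QEC}}(I_K\otimes M)}$ and finish with a variance argument plus $M\preceq I$.

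The only differences are cosmetic. You discard the completion term by nonnegativity instead of showing it vanishes on the code space (Lemma~\ref{lem:supportCodePetz}). You also phrase the paper's positivity step (Lemma~\ref{lem:partialTacesSqrts} combined with $I_S-\bar C^\dag\bar C\succeq 0$) as the Hilbert--Schmidt contraction bound $\norm{(G-I_K\otimes\tfrac1K T)(I_K\otimes M^{1/2})}_2\le\norm{G-I_K\otimes\tfrac1K T}_2$.
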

  
Before proving Theorem~\ref{th:AVpetz}, we record a consequence concerning the suboptimality of the error-set Petz map. For a fixed channel, the channel-adapted Petz recovery map is known to be optimal up to a factor of $\sqrt{2}$ \cite{barnum2002reversing,li2025optimality}. This result does not directly extend to the error-set setting, where a single recovery map constructed from $\calE$ must perform uniformly over all $\calE$-controlled channels. By combining Theorem~\ref{th:AVpetz} with the necessary average-case error-set AQEC conditions of \cite{Elimelech2026Theory}, we nevertheless obtain a quantitative bound for Pauli-like error sets: when $\calE$ consists of Hilbert--Schmidt orthogonal unitary errors, the error achieved by the error-set Petz map differs from the optimal universal decoding error by a factor of at most order $\sqrt{|\calE|}$.
 \begin{cor}Let $\calE$ be an error set composed of Hilbert-Schmidt orthogonal unitary errors and let $Q$ be a quantum code. Then 
  \begin{align*} \frac{1}{\sqrt{2M}}\zeta_{\s{H}}(\calE,Q)&\leq \varepsilon_{\s{opt}}^{\s{av}}(\calE,Q)=\varepsilon_{\s{dec}}^{\s{av}}(\calE,Q)\\
  &\leq \sup_{\calN \in \mathscr{N}(\calE)}d_{\mathrm{ch}}\p{\calR_{\calE,Q}^{\s{\s{Petz}}}\circ \calN|_{L(Q)},I_{L(Q)}}\leq \zeta_{H}(\calE,Q).
  \end{align*}
 \end{cor}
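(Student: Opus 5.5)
The chain of inequalities in the corollary has four links, and I would establish them one at a time, left to right, mostly by invoking earlier results.

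\emph{The equality.} The middle equality $\varepsilon_{\s{opt}}^{\s{av}}(\calE,Q)=\varepsilon_{\s{dec}}^{\s{av}}(\calE,Q)$ is exactly Theorem~\ref{th:AvcaseOptimalDecoding}, so nothing new is needed there.

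\emph{The Petz upper bounds.} Next, $\varepsilon_{\s{dec}}^{\s{av}}(\calE,Q)\le \sup_{\calN\in\mathscr{N}(\calE)} d_{\mathrm{ch}}\p{\calR_{\calE,Q}^{\s{Petz}}\circ\calN|_{L(Q)},I_{L(Q)}}$ holds for a simple reason. After the completion in Definition~\ref{def:errorSetPetz}, the error-set Petz map is a single CPTP map independent of $\calN$. It is therefore an $\s{av}$-$\varepsilon$ universal recovery map with $\varepsilon$ equal to this supremum (or any number exceeding it), and the infimum in \eqref{eq:optimal_universal_decoding_errors} is at most that value. The final inequality, that this supremum is at most $\zeta_{\s{H}}(\calE,Q)$, is Theorem~\ref{th:AVpetz} applied to each $\calE$-controlled $\calN$, followed by taking the supremum. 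Neither step uses the Pauli-like assumption.

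\emph{The lower bound.} The only genuinely new content is $\frac{1}{\sqrt{2M}}\zeta_{\s{H}}(\calE,Q)\le\varepsilon_{\s{opt}}^{\s{av}}(\calE,Q)$. Here $M$ should be $|\calE|=S$, consistent with the remark about a factor of order $\sqrt{|\calE|}$. I would obtain this from the necessary average-case error-set AQEC condition of \cite{Elimelech2026Theory} for Hilbert--Schmidt orthogonal unitary error sets, citing it directly if it is stated in this form.

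If it has to be rederived, I would pick a single $\calE$-controlled channel, the uniform mixture $\calN_0(X)=\frac{1}{M}\sum_{i}E_iXE_i^\dag$. Its Kraus operators $E_i/\sqrt{M}$ correspond to $C=\frac{1}{\sqrt{M}}I_M$, which is a contraction, so $\calN_0\in\mathscr{N}(\calE)$. This gives $\varepsilon_{\s{opt}}^{\s{av}}(\calE,Q)\ge \inf_{\calR} d_{\mathrm{ch}}(\calR\circ\calN_0|_{L(Q)},I_{L(Q)})$. By the near-optimality of the channel Petz map \cite{barnum2002reversing} together with \cite[Corollary~10]{Elimelech2026Theory}, this infimum is at least $\frac{1}{\sqrt2}\zeta_{\s{H}}(\calE_{\calN_0},Q)$. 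Since $\calE_{\calN_0}=\calE/\sqrt{M}$, the QEC matrix scales by $1/M$, its square root by $1/\sqrt M$, and so by the closed form \eqref{eq:KLHdistanceExpression} $\zeta_{\s{H}}(\calE_{\calN_0},Q)=\zeta_{\s{H}}(\calE,Q)/\sqrt{M}$. Together these give the claimed factor $1/\sqrt{2M}$.

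The main obstacle is this lower bound: pinning down exactly which necessary condition from \cite{Elimelech2026Theory} is meant, and whether its constant matches $\sqrt{2M}$. I also need to check the normalization of the $\sqrt2$ optimality factor in the Bures/channel-fidelity convention, and whether the unitary, Hilbert--Schmidt orthogonal hypothesis is needed there, for instance to make the optimal-versus-Petz comparison or the scaling argument exact. The scaling argument above does not obviously use that hypothesis, which suggests the cited result is slightly different in form.
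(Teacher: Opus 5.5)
Your proposal matches the paper's proof, which cites \cite[Proposition 12]{Elimelech2026Theory} for the lower bound, Theorem~\ref{th:AvcaseOptimalDecoding} for the equality, and Theorem~\ref{th:AVpetz} for the rightmost bound; the middle inequality holds, as you say, because the error-set Petz map is a single CPTP map. Your fallback rederivation through the uniform mixture $\calN_0$ is also sound. Note that the hypothesis does enter it: unitarity of the $E_i$ is what makes $\calN_0$ trace preserving and hence a member of $\mathscr{N}(\calE)$, while Hilbert--Schmidt orthogonality is not needed for this inequality.
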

 \begin{proof}
     \cite[Proposition 12]{Elimelech2026Theory} gives the inequality $\frac{1}{\sqrt{2M}}\zeta_{\s{H}}(\calE,Q)\leq \varepsilon_{\s{opt}}^{\s{av}}$.
     The equality follows from the equivalence of the AQEC decoding error and the optimal error in the average case criterion, proved in Theorem~\ref{th:AvcaseOptimalDecoding}, and the rightmost inequality follows from Theorem~\ref{th:AVpetz}
 \end{proof}

 \begin{proof}[Proof of Theorem~\ref{th:AVpetz}] 
     Our proof follows the approach of \cite{zheng2024near}. Let $\calN$ be an $\calE$-controlled channel with Kraus representation $\ppp{F_{\alpha}}_{\alpha\in [R]}$ such that $F_{\alpha}=\sum_{i}c_{\alpha,i}E_i$ and $\norm{C}_{\infty}\leq 1$. Let $A_{\calE}\in L(\C^K\otimes \C^S)$ be the QEC operator of the code associated with $\calE$. We begin with the simple observation that the trace preservation part in Petz map vanishes on the code space.
     \begin{lemma}\label{lem:supportCodePetz}
         For any channel $\calN$ with Kraus representation $\ppp{F_\alpha}_\alpha\subseteq\Span(\calE)$, we have 
         \[\calR_{\calE,Q}^{\s{Petz}}\circ \calN|_{L(Q)}=\calR_{\Gamma}\circ \calN|_{L(Q)}, \]
         where $R_\Gamma$ has Kraus representation $\ppp{R_{i}}_{i\in[S]}$ given in \eqref{eq:KrausSetPetz} is defined as
         \[R_{\Gamma}(X)=\tr(U^\dag X U),\]
         where $U:Q\to \C^S\otimes  \calH_{\SOUT}$ 
                is the partial isometry from the Polar decomposition of the error-synthesis map $S_{\calE,\varepsilon}$.
     \end{lemma}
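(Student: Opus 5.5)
The plan is to show that the completion term $\tr\p{(I-P_\Gamma)X}\rho_Q$ vanishes on every output of $\calN$ applied to a code operator. By linearity it suffices to check pure inputs $\ket{\psi}\bra{\phi}$ with $\ket{\psi},\ket{\phi}\in Q$. For such an input,
\[\calN(\ket{\psi}\bra{\phi})=\sum_\alpha F_\alpha\ket{\psi}\bra{\phi}F_\alpha^\dag .\]
Since $F_\alpha=\sum_i c_{\alpha,i}E_i$ and $\ket{\psi}=P\ket{\psi}$, each vector $F_\alpha\ket{\psi}=\sum_i c_{\alpha,i}E_iP\ket{\psi}$ lies in the range of the error-synthesis operator $S_{\calE,Q}$ of \eqref{eq:ErrorSynthDef}. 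Indeed, it equals $S_{\calE,Q}\p{\ket{\psi}\otimes\sum_i c_{\alpha,i}\ket{i}}$.

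The key step is to identify the range of $S_{\calE,Q}$ with $\supp(\Gamma)$. By \eqref{eq:errorSynthesisDagers}, $\Gamma=S_{\calE,Q}S_{\calE,Q}^\dag$, and the standard linear-algebra fact $\mathrm{ran}(SS^\dag)=\mathrm{ran}(S)$ gives $\supp(\Gamma)=\mathrm{ran}(S_{\calE,Q})$. This also equals the final space of the partial isometry $U$ in the polar decomposition, so $P_\Gamma=UU^\dag$. Consequently $(I-P_\Gamma)F_\alpha\ket{\psi}=0$ for every $\alpha$, and
\[\tr\p{(I-P_\Gamma)\calN(\ket{\psi}\bra{\phi})}=\sum_\alpha\bra{\phi}F_\alpha^\dag(I-P_\Gamma)F_\alpha\ket{\psi}=0 .\]
Hence the completion branch contributes nothing on $L(Q)$.

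It remains to check that the surviving part of $\calR^{\s{Petz}}_{\calE,Q}$ is exactly $\calR_\Gamma$, i.e.\ the map with Kraus operators $R_i=PE_i^\dag\Gamma^{-1/2}$. This is precisely the identity \eqref{eq:Utrace} from the proof of Proposition~\ref{prop:polarPetz}, which shows $\sum_i R_iXR_i^\dag=\tr_S(U^\dag XU)$. Here the partial trace is over the label space $\C^S$, and I read the lemma's $U$ with the tensor ordering used in that proposition. Combining the two steps gives $\calR^{\s{Petz}}_{\calE,Q}\circ\calN|_{L(Q)}=\calR_\Gamma\circ\calN|_{L(Q)}$.

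The only mildly delicate point is the support identification $\supp(\Gamma)=\mathrm{ran}(S_{\calE,Q})$ in finite dimensions. It follows because $\ker(SS^\dag)=\ker(S^\dag)$ and $\mathrm{ran}(S)=\ker(S^\dag)^\perp$. Note that this argument uses only $F_\alpha\in\Span(\calE)$, and not the contraction condition on $C$.
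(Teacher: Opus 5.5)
Your proposal is correct and follows the paper's proof. Both split off the completion branch $\tr\p{(I-P_\Gamma)X}\rho_Q$, use $\Gamma=S_{\calE,Q}S_{\calE,Q}^\dag$ to identify $\supp(\Gamma)$ with the range of $S_{\calE,Q}$, and note that each $F_\alpha\ket{\psi}$ lies in that range, so the completion term vanishes on $L(Q)$; your kernel argument for $\supp(\Gamma)=\mathrm{ran}(S_{\calE,Q})$ and your appeal to \eqref{eq:Utrace} for the $\tr_S\p{U^\dag XU}$ form fill in details the paper leaves implicit.
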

    \begin{proof}
         By definition of Petz map, we can write $\calR_{\calE,Q}^{\s{Petz}}\circ \calN=\calR_{\Gamma}|\circ \calN+\calR_{\Gamma^\perp}\circ \calN$, where 
     \[\calR_{\Gamma}(X)=\sum_{i\in[S]}R_iXR_i^\dag,\quad R_i=PE_i^\dag \Gamma^{-1/2}, \quad \text{and}\quad \calR_{\Gamma^{\perp}}=\tr\p{(I-P_{\Gamma})X}\rho_Q.\]
     By \eqref{eq:errorSynthesisDagers} we have $\Gamma=S_{\calE,Q}S_{\calE,Q}^\dag$ and in particular \[\supp(\Gamma)=\ima(\calS_{\calE,Q})=\Span\ppp{E_i\ket{\psi}~:~ \psi\in Q, i\in [S]}.\]
     Therefore, for any basis operator $\ket{\psi_1}\bra{\psi_2}\in L(Q)$ we have we 
     \[\calR_{\Gamma^\perp}\circ \calN(\ket{\psi_1}\bra{\psi_2})=\tr\p{(I-P_\Gamma)\sum_{\alpha}F_\alpha \ket{\psi_1}\bra{\psi_2} F_{\alpha}^\dag }\rho_{Q}=0,\]
     where the last equality follows since $F_{\alpha}\ket{\psi_1}=\sum_{i}c_{\alpha,i}E_i\ket{\psi_1}\in \supp\p{\Gamma}$ for all $\ket{\psi_1}\in Q$, and $I-P_{\Gamma}$ projects on $\supp(\Gamma)^{\perp}$. 
    \end{proof} 
    
     Lemma~\ref{lem:supportCodePetz} implies that that on the code space the noise recovery map satisfies and has Kraus representation $\ppp{R_i F_{\alpha}}_{i}$ where $i\in [S]$ and $\alpha\in [R]$. Using the well-known channel fidelity formula of a channel with respect to the identity (see \cite[eq.~(8)]{barnum2002reversing}) we have 
     \begin{align}
     F_{\mathrm{ch}}\p{\calR_{\calE,Q}^{\s{\s{Petz}}}\circ \calN|_{L(Q)},I_{L(Q)}}&\nonumber=\sum_{\alpha=0}^{R-1}\sum_{i=0}^{S-1}\abs{\frac{1}{K}\tr\p{P R_i F_\alpha }}^2\\
     &\nonumber=\sum_{\alpha=0}^{R-1}\sum_{i=0}^{S-1}\abs{\frac{1}{K}\tr\p{P R_i F_\alpha P }}^2
     \\&=\sum_{\alpha=0}^{R-1}\sum_{i=0}^{S-1}\abs{\sum_{j=0}^{S-1}\frac{1}{K}c_{\alpha,j}\tr\p{P E_i^\dag \Gamma^{-1/2} E_j P }}^2.\label{eq:FidelcalcAV}
     \end{align}
  To proceed with the analysis of \eqref{eq:FidelcalcAV} we need the following lemma.
      \begin{lemma}
        For all $i,j\in [S]$, $\tr\p{P E_i^\dag \Gamma^{-1/2} E_j P }$ is exactly the $(i,j)$th entry of the QEC matrix $\tr_K\p{\sqrt{A_{\s{QEC}}}}$.
      \end{lemma}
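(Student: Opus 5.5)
We want $\tr(PE_i^\dag\Gamma^{-1/2}E_jP) = \big(\tr_K\sqrt{A_{\s{QEC}}}\big)_{i,j}$, where $\tr_K$ traces out the code factor of $Q\otimes\C^S$. The plan is to express everything through the error-synthesis operator $S=S_{\calE,Q}$ of \eqref{eq:ErrorSynthDef} and its polar decomposition $S=UG$, with $G=\sqrt{S^\dag S}=\sqrt{A_{\s{QEC}}}$ and $\Gamma=SS^\dag$ as in \eqref{eq:errorSynthesisDagers}. The identity needed is the standard polar one,
\[
S^\dag (SS^\dag)^{-1/2} S=\sqrt{S^\dag S},
\]
with the inverse taken as the Moore--Penrose pseudoinverse on $\supp(\Gamma)$.

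\textbf{Step 1 (the polar identity).} Write the singular value decomposition $S=\sum_r s_r\ket{u_r}\bra{v_r}$ with $s_r>0$. Then $\Gamma^{-1/2}=\sum_r s_r^{-1}\ket{u_r}\bra{u_r}$ on $\supp(\Gamma)=\Span\{\ket{u_r}\}$. Hence
\[
S^\dag\Gamma^{-1/2}S=\sum_r s_r\ket{v_r}\bra{v_r}=G .
\]
Equivalently, the proof of Proposition~\ref{prop:polarPetz} gives $U=\Gamma^{-1/2}S$, so $S^\dag\Gamma^{-1/2}S=S^\dag U=GU^\dag U=G$, because $U^\dag U$ is the projection onto $\supp(G)$.

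\textbf{Step 2 (extract the blocks).} Write $S=\sum_j E_jP\otimes\bra{j}$. Then
\[
S^\dag\Gamma^{-1/2}S=\sum_{i,j}PE_i^\dag\Gamma^{-1/2}E_jP\otimes\ket{i}\bra{j},
\]
which is the same block manipulation as in \eqref{eq:errorSynthesisDagers} and \eqref{eq:Utrace}. So the $(i,j)$ block of $\sqrt{A_{\s{QEC}}}$, an operator on $Q$, is $PE_i^\dag\Gamma^{-1/2}E_jP$.

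\textbf{Step 3 (partial trace).} Apply $\tr_K$, i.e.\ take the trace over $Q$ of each block. This gives exactly $\tr(PE_i^\dag\Gamma^{-1/2}E_jP)$ as the $(i,j)$ entry of $\tr_K\sqrt{A_{\s{QEC}}}$.

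\textbf{Main obstacle.} The only delicate point is the pseudoinverse. One must check that $\Gamma^{-1/2}$ on $\supp(\Gamma)=\mathrm{Im}(S)$ is the correct object and that no kernel contributions appear. The SVD in Step 1 handles this directly. One should also keep the index ordering $\ket{i}\bra{j}$ consistent with the convention of \eqref{eq:QEC_matrix}, which may introduce a transpose. That transpose is harmless here, since $\tr_K\sqrt{A_{\s{QEC}}}$ is Hermitian and the convention is fixed by \eqref{eq:errorSynthesisDagers}.
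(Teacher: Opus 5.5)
Your proof is correct and follows the paper's argument: both write $\sqrt{A_{\s{QEC}}}=G=S_{\calE,Q}^\dag\Gamma^{-1/2}S_{\calE,Q}$ via the polar factor $U=\Gamma^{-1/2}S_{\calE,Q}$, read off the $(i,j)$ block $PE_i^\dag\Gamma^{-1/2}E_jP$, and take $\tr_K$; your SVD computation only makes the pseudoinverse step explicit. One small correction to your closing aside: Hermiticity would not make a transpose harmless, since it would replace each entry by its complex conjugate. The issue never arises, however, because \eqref{eq:QEC_matrix} and \eqref{eq:errorSynthesisDagers} both attach $PE_i^\dag E_jP$ to $\ket{i}\bra{j}$.
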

       \begin{proof}
           Indeed, recall that by the polar decomposition,  $\sum_{j\in[S]}E_j P \otimes \bra{j}=\calS_{\calE,Q}=U G$, where 
     \begin{align*}
     G&=\sqrt{S_{\calE,Q}^\dag S_{\calE,Q} }=\sqrt{A_{\s{QEC}}}, \\ U&=\p{S_{\calE,Q}S_{\calE,Q}^\dag}^{-1/2}S_{\calE,Q}=\Gamma^{-1/2}S_{\calE,Q}=\sum_{i\in [S]}\Gamma^{-1/2}E_iP\otimes \bra{i}.
     \end{align*}
     In particular 
     \begin{equation}
         \sqrt{A_{\s{QEC}}}=G=U^\dag S_{\calE,Q}=\sum_{i,j\in [S]}P E_i^\dag \Gamma^{-1/2}E_j P\otimes \ket{i} \bra{j},\label{eq:SqrtQECmatrix}
     \end{equation}
     which implies that 
     \[\tr_K\p{\sqrt{A_{\s{QEC}}}}=\sum_{i,j\in[S]}\tr\p{P E_i^\dag \Gamma^{-1/2}E_j P}  \ket{i} \bra{j}. \qedhere
     \]
       \end{proof}

We continue with the proof of Theorem \ref{th:AVpetz}.     Considering the coefficient matrix $C=\sum_{\alpha,j}c_{\alpha,j}\ket{\alpha}\bra{j}$ as an operator $\C^S\to\C^R$ we obtain 
     \begin{equation*}
         \abs{\sum_{j=0}^{S-1}c_{\alpha,j}\tr\p{P E_i^\dag \Gamma^{-1/2} E_j P }}^2= \abs{\p{C\tr_K\p{\sqrt{ A_{\s{QEC}}}}^T}_{\alpha,i}}^2.
     \end{equation*}
Denote $A:=\tr_K(\sqrt{A_{\s{QEC}}})$. Combining the last equality with \eqref{eq:FidelcalcAV} we have
     \begin{align}
         F_{\mathrm{ch}}\p{\calR_{\calE,Q}^{\s{\s{Petz}}}\circ \calN|_{L(Q)},I_{L(Q)}}&\nonumber=\frac{1}{K^2}\sum_{i=0}^{S-1}\sum_{\alpha=0}^{R-1}|(CA^T)_{\alpha,i}|^2=\frac{1}{K^2}\norm{CA^T}_2^2\\
         &=\frac{1}{K^2}\tr\p{C A^T (A^T)^\dag C^\dag }=\frac{1}{K^2}\tr\p{C (A^T)^2 C^\dag }\label{eq:potivitysrt}\\
         &=\frac{1}{K^2}\tr\p{(C (A^T)^2 C^\dag)^T}=\frac{1}{K^2}\tr\p{\Bar{C} A^2 C^T}=\frac{1}{K^2}\tr\p{A^2\Bar{C}^\dag\Bar{C}}.\label{eq:traceProperuse}
     \end{align} 
     In \eqref{eq:potivitysrt} we used the fact that $\sqrt{A_{\s{QEC}}}\succeq 0$ and the partial trace is CP, and in \eqref{eq:traceProperuse} we used the commutativity and transpose invariance of the trace. 

     Our goal is to bound the quantity 
     \begin{equation}
         1-F_{\mathrm{ch}}\p{\calR_{\calE,Q}^{\s{\s{Petz}}}\circ \calN|_{L(Q)},I_{L(Q)}}=1-\frac{1}{K^2}\tr\p{\tr_K\p{\sqrt{A_{\s{QEC}}}}^2 \Bar{C}^\dag \Bar{C}}.\label{eq:Fidelitybound}
     \end{equation}
     To that end, consider the QEC matrix of the code $Q$ with respect to the Kraus set $\ppp{F_{\alpha}}_\alpha$ of the channel $\calN$, denoted by $A_{\s{QEC}}^{\calN}$. A straightforward calculation shows that when  $A_{\s{QEC}}$ and $A_{\s{QEC}}^{\calN}$ are computed with respect to the same basis of $Q$ (or when considered in operator form) we have the following relation: 
     \[ A_{\s{QEC}}^{\calN}=(I_K\otimes \Bar{C})A_{\s{QEC}}(I_K\otimes \Bar{C}^\dag).\]
     On the other hand, since $\calN$ is trace preserving, we have
     \begin{equation}
         \tr\p{A_{\s{QEC}}^{\calN}}=\tr\p{\sum_{\alpha,\beta \in [R]}(PF_\alpha^\dag F_{\beta}P)\otimes \ket{\alpha}\bra{\beta}}=\sum_{\alpha\in[R]}\tr(F_{\alpha} P F_{\alpha}^\dag)=\tr(\calN(P))=K.\label{eq:conjugationconjugation}
     \end{equation}
     Combining the above with \eqref{eq:Fidelitybound} and \eqref{eq:conjugationconjugation} we have:
     \begin{align}
          1-F_{\mathrm{ch}}\p{\calR_{\calE,Q}^{\s{\s{Petz}}}\circ \calN|_{L(Q)},I_{L(Q)}}&\nonumber=\frac{1}{K}\tr\p{(I_K\otimes \Bar{C})  A_{\s{QEC}}(I_K\otimes \Bar{C})^\dag}-\frac{1}{K^2}\tr\p{\tr_K\p{\sqrt{A_{\s{QEC}}}}^2 \Bar{C}^\dag \Bar{C}}\\
         &=\frac{1}{K}\tr\p{\Bar{C}\,\tr_K\p{A_{\s{QEC}}}\Bar{C}^\dag}-\frac{1}{K^2}\tr\p{\tr_K\p{\sqrt{A_{\s{QEC}}}}^2 \Bar{C}^\dag \Bar{C}}\label{eq:UsePartialTraceProp}\\
         &=\tr\p{\p{\frac{1}{K}\tr_K(A_{\s{QEC}})-\frac{1}{K^2}\tr_K\p{\sqrt{A_{\s{QEC}}}}^2}\Bar{C}^\dag \Bar{C}},\label{eq:AlmostFinalTrace}
     \end{align}
     where in \eqref{eq:UsePartialTraceProp} we used Lemma~\ref{lem:tracPropComposite}. Note that $I_S-\Bar{C}^\dag \Bar{C}\succeq 0$ by the assumption that $\norm{\Bar{C}}_{\infty}=\norm{C}_{\infty}\leq 1$. By Lemma~\ref{lem:partialTacesSqrts} we also have: 
     \[\frac{1}{K}\tr_K(A_{\s{QEC}})-\frac{1}{K^2}\tr_K\p{\sqrt{A_{\s{QEC}}}}^2\succeq 0.\]
     In particular 
          \begin{equation}
              \tr\p{\p{\frac{1}{K}\tr_K(A_{\s{QEC}})-\frac{1}{K^2}\tr_K\p{\sqrt{A_{\s{QEC}}}}^2}(I_S-\Bar{C}^\dag \Bar{C})}\geq 0,\label{eq:prodPOsitivetrace}
          \end{equation}
     as the trace of the composition of positive semidefinite operators is always non-negative. Combining \eqref{eq:prodPOsitivetrace} with \eqref{eq:AlmostFinalTrace} and \eqref{eq:KLHdistanceExpression} we obtain:
     \begin{align*}
          1-F_{\mathrm{ch}}\p{\calR_{\calE,Q}^{\s{\s{Petz}}}\circ \calN|_{L(Q)},I_{L(Q)}}&\leq \tr\p{\p{\frac{1}{K}\tr_K(A_{\s{QEC}})-\frac{1}{K^2}\tr_K\p{\sqrt{A_{\s{QEC}}}}^2}}\\
         &=\frac{1}{K}\tr\p{A_{\s{QEC}}}-\frac{1}{K^2}\norm{\tr_K\p{\sqrt{A_{\s{QEC}}}}}_2^2\\
         &=\zeta_{\s{H}}(\calE,Q)^2.
     \end{align*}
     We conclude by taking a square root on both sides of the above expression. 
 \end{proof}

 \subsubsection*{Worst-case analysis}
Our next goal is to analyze the worst-case AQEC of Petz map in the error-set setting. While we do not provide AQEC guarantees against all $\calE$-controlled channels, we can formulate them for a subset of channels isolated by a stronger 
constraint.
\begin{definition}[Strongly $\calE$-controlled channels]
Let $\calE=\ppp{E_i}_i$ be an error set. A channel $\calN$ is called \emph{strongly $\calE$-controlled}, if it admits a Kraus representation $\ppp{F_\alpha}_{\alpha}$ such that $F_{\alpha}=\sum_{i}c_{\alpha,i}E_i$ and the coefficient matrix $C=(c_{\alpha,i})_{\alpha,i}$ satisfies $\norm{C}_{2}\leq 1$.
\end{definition}

Many natural error sets give rise to rich families of strongly $\calE$-controlled channels, often containing every channel whose Kraus operators lie in $\Span(\calE)$. The canonical example is an error set composed of unitary Hilbert--Schmidt orthogonal operators, which includes Pauli-type and Majorana errors, among others. For such an error set $\calE$, every channel admitting a Kraus representation with operators in $\Span(\calE)$ is automatically strongly $\calE$-controlled, since its coefficient matrix satisfies $\norm{C}_{2}=1$ \cite[Proposition~2.3]{ma2025haar}. In particular, the family of $t$-limited error channels is strongly controlled by the set of Pauli operators of weight at most $t$; see \cite[Example~1]{Elimelech2026Theory} for further discussion. Similarly to $\calE$-controlled channels, the family of strongly $\calE$-controlled channels is convex. This observation also implies that random erasure and deletion channels are strongly controlled by natural choices of error sets; see \cite[Sections~2.1.2, 2.1.3]{Elimelech2026Theory}.

\begin{theorem}\label{th:WCpetz}
     Let $Q\subseteq\calH_{\SIN}$ be a $K$-dimensional quantum code, and let $\calE=\ppp{E_i}_{i\in[S]}$ be an error set with operators $E_i:\calH_{\SIN}\to\calH_{\SOUT}$. Then  for any strongly $\calE$-controlled channel $\calN$, the error-set Petz recovery map $\calR_{\calE,Q}^{\s{Petz}}$ of Definition~\ref{def:errorSetPetz} satisfies 
     \begin{equation}
         d_{\mathrm{wc}}\p{\calR_{\calE,Q}^{\s{\s{Petz}}}\circ \calN|_{L(Q)},I_{L(Q)}}\leq \frac{1}{\sqrt{2}}\norm{\sqrt{A_{\s{QEC}}}-I_{K}\otimes \frac{1}{\sqrt{K}}\sqrt{\tr_K(A_{\s{QEC}})}}_{\infty}.\label{eq:PetzWCcond}
     \end{equation}
 \end{theorem}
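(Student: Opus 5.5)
The plan is to apply the Stinespring continuity argument from the proof overview of Theorem~\ref{th:DecExistsWorst}, using the polar form of the Petz map from Proposition~\ref{prop:polarPetz}. By Lemma~\ref{lem:supportCodePetz}, the completion term of $\calR_{\calE,Q}^{\s{Petz}}$ vanishes on $\calN(L(Q))$. So on the code space $\calR_{\calE,Q}^{\s{Petz}}\circ\calN=\calR_\Gamma\circ\calN$, with $\calR_\Gamma(X)=\tr_S(U^\dag XU)$.

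Let $\ppp{F_\alpha}_{\alpha\in[R]}$ be a Kraus set of $\calN$ with $F_\alpha=\sum_j c_{\alpha,j}E_j$ and $\norm{C}_2\le 1$. A Stinespring operator for $\calR_\Gamma\circ\calN|_{L(Q)}$ is
\[
W:Q\to Q\otimes\C^S\otimes\C^R,\qquad W\ket{\psi}=\sum_{\alpha}\p{U^\dag F_\alpha\ket{\psi}}\otimes\ket{\alpha}.
\]
Since $U^\dag S_{\calE,Q}=G=\sqrt{A_{\s{QEC}}}$ (as in \eqref{eq:SqrtQECmatrix}), we get $U^\dag E_j\ket{\psi}=G(\ket{\psi}\otimes\ket{j})$. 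Hence
\[
W\ket{\psi}=\sum_\alpha G\p{\ket{\psi}\otimes\ket{c_\alpha}}\otimes\ket{\alpha},\qquad \ket{c_\alpha}:=\sum_j c_{\alpha,j}\ket{j}.
\]

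Next, compare $W$ with the operator $W_0$ obtained by replacing $G$ with $G_0:=I_K\otimes\sqrt{\hat\lambda}$, where $\hat\lambda:=\frac1K\tr_K(A_{\s{QEC}})\succeq0$. Then $W_0\ket{\psi}=\ket{\psi}\otimes\ket{w}$ with $\ket{w}=\sum_\alpha\sqrt{\hat\lambda}\ket{c_\alpha}\otimes\ket{\alpha}$.

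The key step is to check that $\norm{w}_2=1$, so that $W_0$ is an isometric Stinespring dilation of $I_{L(Q)}$. We compute
\[
\norm{w}_2^2=\sum_\alpha\bra{c_\alpha}\hat\lambda\ket{c_\alpha}=\frac1K\sum_\alpha\tr\p{PF_\alpha^\dag F_\alpha P}=\frac1K\tr\p{A^{\calN}_{\s{QEC}}}=1,
\]
where the last equality is the trace-preservation identity \eqref{eq:conjugationconjugation}. I would double-check the conjugation and transpose conventions for $C$ here against the relation $A^{\calN}_{\s{QEC}}=(I_K\otimes\bar C)A_{\s{QEC}}(I_K\otimes\bar C^\dag)$. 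Any such mismatch only replaces $C$ by $\bar C$, which has the same Hilbert--Schmidt norm.

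It remains to bound $\norm{W-W_0}_\infty$. Set $D:=G-G_0$. For a unit vector $\ket{\psi}$,
\[
\norm{(W-W_0)\ket{\psi}}_2^2=\sum_\alpha\norm{D(\ket{\psi}\otimes\ket{c_\alpha})}_2^2\le\norm{D}_\infty^2\sum_\alpha\norm{c_\alpha}_2^2=\norm{D}_\infty^2\norm{C}_2^2\le\norm{D}_\infty^2.
\]
This is exactly where the strong control assumption $\norm{C}_2\le1$ is needed; the weaker condition $\norm{C}_\infty\le1$ would not control the sum over $\alpha$.

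Finally, I apply the Stinespring continuity theorem in the form used at the end of the proof of Theorem~\ref{th:DecExistsWorst}: $d_{\mathrm{wc}}\le\frac{1}{\sqrt2}\norm{W-W_0}_\infty$, where both dilations live in the common ancilla $\C^S\otimes\C^R$. This gives \eqref{eq:PetzWCcond}, since $D=\sqrt{A_{\s{QEC}}}-I_K\otimes\frac{1}{\sqrt K}\sqrt{\tr_K(A_{\s{QEC}})}$. I expect the main obstacle to be the normalization of $W_0$: without it, $W_0$ would dilate only a scaled identity, and the continuity bound would pick up an extra normalization error.
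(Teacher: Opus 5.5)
Your proposal is correct and is essentially the paper's own proof. The paper also reduces to $\tr_S(U^\dag X U)$ via Lemma~\ref{lem:supportCodePetz}, writes the decoded dilation as $(U^\dag\otimes I_R)V_C=(\sqrt{A_{\s{QEC}}}\otimes I_R)(\,\cdot\,\otimes\ket{\s{vec}(C)})$, and compares it with the identity dilation $I_Q\otimes(\sqrt{\lambda}\otimes I_R)\ket{\s{vec}(C)}$ (your $\ket{w}$, normalized by trace preservation exactly as you do). It then bounds the difference by $\norm{\sqrt{A_{\s{QEC}}}-I_K\otimes\sqrt{\lambda}}_\infty\norm{C}_2$, using strong control just as you identify, and concludes with the Stinespring continuity theorem.
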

\begin{proof}
    We adopt the strategy used in the proof of Theorem~\ref{th:DecExistsWorst} to provide worst-case guarantees for the proposed recovery map. Given the Petz recovery map and a strongly $\calE$-controlled channel by some 
    coefficient matrix $C$, we find a Stinespring representation of the decoded noisy output $\calR_{\calE,Q}^{\s{Petz}}\circ \calN|_{L(Q)}$ whose distance from some Stinespring dilation operator of the identity is bounded by the r.-h.s.
        of \eqref{eq:PetzWCcond}. A similar approach was also used in \cite{ma2025haar}. We then use Stinespring dilation continuity theorem (see Theorem~\ref{th:SteinContChannels}) to conclude that $\calR_{\calE,Q}^{\s{Petz}}\circ \calN|_{L(Q)}$ is close to the identity with respect to the worst-case Bures distance $d_{\mathrm{wc}}$. 

    Let $P$ be the projector on the code space $Q$ us, and denote $\lambda=\frac{1}{K}\tr_K(A_{\s{QEC}})$. Observe that by \eqref{eq:errorSynthesisDagers}
    \begin{equation}
        \lambda=\frac{1}{K}\tr_K(A_{\s{QEC}})=\frac{1}{K}\sum_{i,j\in[S]}\tr(PE_i^\dag E_j P)  \ket{i}\bra{j}.\label{eq:lambdaDef1}
    \end{equation}
    where $(\ket i)_{i\in [S]}$ is a basis of $\C^s$.
    Let $\calN_C:\calH_{\SIN}\to \calH_{\SOUT}$ be a strongly $\calE$-controlled channel with Kraus operators $\ppp{F_\alpha}_{\alpha\in [R]}$, $R\in \N$, such that $C=(c_{\alpha,i})_{\alpha,i}$ is the linear coefficient matrix defined by the equation $F_\alpha=\sum_{i\in S}c_{\alpha,i}E_i$ that satisfies $\norm{C}_2\leq 1$. By Lemma~\ref{lem:supportCodePetz} we can write 
    \begin{equation}
        \calR_{\calE,Q}^{\s{Petz}}\circ \calN_C|_{L(Q)}=\calR_{\Gamma}\circ \calN_{C}|_{L(Q)}, \quad R_\Gamma(X)=\tr_S(U^\dag XU),\label{eq:OnTheCodeSPacePETZ}
    \end{equation}
    where $U:Q\otimes \calH_S\to \calH_{\SOUT}$ is the polar partial isometry in the polar decomposition of the error synthesis map $S_{\calE,Q}$ defined in \eqref{eq:ErrorSynthDef}, such that 
    \[S_{\calE,Q}= U \sqrt{S_{\calE,Q}^\dag S_{\calE,Q}}= U \sqrt{A_{\s{QEC}}}.\]
In particular, since $U:\calH_{\SIN}\otimes \calH_S\to \calH_{\SOUT} $ is invertible on 
    $\supp(S_{\calE,Q})=\supp (A_{\s{QEC}})$, we have 
    \begin{equation}
        U^\dag S_{\calE,Q}=\sqrt{A_{\s{QEC}}}, \label{eq:polarDager}
    \end{equation}

    Consider the noise operation on the code state $\calN_C|_{L(Q)}$, which can be formally described as $\calN_C \circ \calN_Q$, where $\calN_Q:L(Q)\to L(\calH_{\SIN})$ is simply the inclusion isometry $\calN_Q(X)=PXP$. It admits a Kraus representation  $\ppp{F_{\alpha}P}_{\alpha\in [R]}$ and has a canonical Stinespring dilation operator $V_C:Q \to \calH_{\SOUT}\otimes \calH_R$ of the form
    \begin{equation}\label{eq: VC}
        V_{C}=\sum_{\alpha\in[R]}F_{\alpha}P\otimes \ket{\alpha}=\sum_{i\in [S]}\sum_{\alpha\in[R]}c_{\alpha,i}E_iP \otimes \ket{\alpha}=\sum_{\alpha\in[R]} \p{\sum_{i\in [S]}c_{\alpha,i}E_iP }\otimes \ket{\alpha}.
    \end{equation}
    By  the definition of $\calS_{\calE,Q}$ in \eqref{eq:ErrorSynthDef}, for any $\ket{\psi}$ we have $E_i P\ket{\psi}=S_{\calE,Q}(\ket{\psi}\otimes \ket{i})$, so \eqref{eq: VC} gives rise to
    \begin{align}
        V_{C}\ket{\psi}\nonumber&=\sum_{\alpha\in[R]} \p{\sum_{i\in [S]}c_{\alpha,i}S_{\calE,Q}(\ket{\psi}\otimes \ket{i}) }\otimes \ket{\alpha}=\sum_{\alpha\in[R]} \sum_{i\in [S]}c_{\alpha,i}(S_{\calE,Q}\otimes I_R)(\ket{\psi}\otimes \ket{i}\otimes \ket{\alpha})\\
        &=(S_{\calE,Q}\otimes I_R)\Bigg(\ket{\psi}\otimes \Big(\sum_{\alpha\in[R]} \sum_{i\in [S]}c_{\alpha,i}\ket{i}\otimes \ket{\alpha}\Big)\Bigg)=(S_{\calE,Q}\otimes I_R)\p{\ket{\psi}\otimes \ket{\s{vec}(C)}},\label{eq:VecRepV_C}
    \end{align}
    where, $\ket{\s{vec}(C)}=\sum_{\alpha,i}c_{\alpha,i}\ket{i}\otimes \ket{\alpha}\in \calH_S\otimes \calH_R$ is the vectorization of $C$. 
    Next, we find a Stinespring dilation operator $V_{\s{dec},C}$ for the decoded noisy state on the code space $\calR_{\calE,Q}^{\s{Petz}}\circ \calN_{C}|_{L(Q)}$.  Let $V_{\s{dec},C}: Q\to Q \otimes \calH_S\otimes \calH_R$ be given as $V_{\s{dec,C}}=(U^\dag \otimes I_R )V_C$ and note that for any $X\in L(Q)$ we have 
    \begin{align*}
        \tr_{R,S}\p{V_{\s{dec},C}X V_{\s{dec},C}^\dag}&=\tr_{R,S}\p{(U^\dag \otimes I_R)(V_C X V_C^\dag)(U \otimes I_R) }\\
        &=\tr_S\p{U^\dag \tr_R(V_C X V_C^\dag)U }=\calR_{\calE,Q}^{\s{Petz}}\circ \calN_C(X).
    \end{align*}
     where the second equality follows from Lemma~\ref{lem:tracPropComposite} and the last equality follows from \eqref{eq:OnTheCodeSPacePETZ}. 

     Our goal is to show now that $V_{\s{dec},C}$ is close to some Stinespring dilation of the identity channel on $L(Q)$. Consider the operator $V_{\lambda_C}:Q\to Q\otimes \calH_R\otimes \calH_S$ given by 
     \[V_{\lambda_C}:=I_Q\otimes \ket{\psi_C}, \quad \ket{\psi_C}=(\sqrt{\lambda}\otimes I_R)\ket{\s{vec}(C)}. \]
     We show that $V_{\lambda_C}$ is indeed a dilation for $I_{L(Q)}$. Observe that
     \[\tr_{R,S}(V_{\lambda_C} XV_{\lambda_C}^\dag)=\braket{\psi_C|\psi_C}X, \]
    so it is sufficient to show that  $\ket{\psi_C}$ is normalized. This follows from channel normalization and from \eqref{eq:lambdaDef1}:
    \begin{align*}
        \braket{\psi_C|\psi_C}&=\bra{{\s{vec}(C)}}({\lambda}\otimes I_R)\ket{\s{vec}(C)}=\sum_{i,j\in [S]}\sum_{\alpha,\beta\in [R]}c_{\alpha,i}^* c_{\beta,j}\bra{i}\lambda \ket{j}\braket{\alpha |\beta}\\
        &=\sum_{\alpha\in[R]}\sum_{i,j\in [S]}c_{\alpha,i} ^*c_{\alpha,j}\frac{1}{K}\tr\p{P E_i^\dag E_j P}=\tr\p{\frac{1}{K}\sum_{\alpha \in[R]} F_\alpha PF_{\alpha}^\dag }=\tr\p{\calN_C\big(\frac{P}{K}\big)}=1.
    \end{align*}

    We now bound the $\infty$-norm distance between $V_{\s{dec},C}$ and $V_{\lambda_C}$. For any normalized $\ket{\psi}\in Q$, by \eqref{eq:polarDager} and \eqref{eq:VecRepV_C} we have
    \begin{align*}
        \norm{V_{\s{dec},C}\ket{\psi}-V_{\lambda_C}\ket{\psi}}&=\norm{(U^\dag \otimes I_R)V_C\ket{\psi}-\ket{\psi}\otimes \ket{\psi_C}} \\
        &=\norm{\p{(U^\dag \calS_{\calE,Q} \otimes I_R)-I_K\otimes\sqrt{\lambda}\otimes I_R}(\ket{\psi}\otimes \ket{\s{vec}(C)} )}\\
        &\leq \norm{(U^\dag \calS_{\calE,Q} \otimes I_R)-I_K\otimes\sqrt{\lambda}\otimes I_R}_\infty \norm{\ket{\psi}}\norm{\ket{\s{vec}(C)}}\\
        &=\norm{\sqrt{A_{\s{QEC}}}-I_K \otimes \sqrt{\lambda}}_\infty\norm{C}_2
        \\&\leq \norm{\sqrt{A_{\s{QEC}}}-I_K \otimes \sqrt{\lambda}}_\infty,
    \end{align*}
    where we used the fact that $\norm{\ket{\s{vec}(C)}}=(\sum_{\alpha,i}|c_{\alpha,i}|^2)^{1/2}=\norm{C}_2$. Combining the above with Stinespring dilation continuity (Theorem~\ref{th:SteinContChannels}) we have 
    \[d_{\mathrm{wc}}\p{\calR_{\calE,Q}^{\s{Petz}}\circ \calN_C|_L(Q),I_{L(Q)}}\leq \frac{1}{\sqrt{2}}\norm{V_{\s{dec},C}-V_{\lambda_C}}_{\infty}\leq \frac{1}{\sqrt{2}}\norm{\sqrt{A_{\s{QEC}}}-I_K \otimes \sqrt{\lambda}}_\infty.\]
\end{proof}

\begin{remark} Theorem~\ref{th:WCpetz} recovers the guarantees established for the polar decoder in \cite{ma2025haar}. The $\delta$-approximate nondegeneracy condition considered there requires the error-synthesis operator $S_{\calE,Q}$ to be a $\delta$-approximate isometry, in the sense that each of its singular values differs from $1$ by at most $\delta$. Since the singular values of $S_{\calE,Q}$ are precisely the eigenvalues of $\sqrt{S_{\calE,Q}^{\dag}S_{\calE,Q}} = \sqrt{A_{\s{QEC}}}$, this assumption is equivalent to $\norm{\sqrt{A_{\s{QEC}}}-I_{KS}}_{\infty} \leq \delta.$ Theorem~\ref{th:WCpetz} then shows that $\calR_{\calE,Q}^{\s{Petz}}$ is an $\s{wc}$-$\delta$ universal decoder for all $\calE$-controlled channels. When $\calE$ consists of unitary Hilbert--Schmidt orthogonal errors, this family contains every channel whose Kraus operators lie in $\Span(\calE)$. \end{remark}

\begin{prop}
    Let $Q\subseteq\calH_{\SIN}$ be a $K$-dimensional quantum code, and let $\calE=\ppp{E_i}_{i\in[S]}$ be an error set with operators $E_i:\calH_{\SIN}\to\calH_{\SOUT}$. Then
    \begin{equation}
        \norm{\sqrt{A_{\s{QEC}}}-I_{K}\otimes \frac{1}{\sqrt{K}}\sqrt{\tr_K(A_{\s{QEC}})}}_{\infty}\leq  {\sqrt{2S\zeta(\calE,Q)}}.\label{eq:BOinequality}
    \end{equation}
    In particular, the error-set Petz recovery map $\calR_{\calE,Q}^{\s{Petz}}$ of Definition~\ref{def:errorSetPetz} satisfies 
     \begin{equation}
        d_{\mathrm{wc}}\p{\calR_{\calE,Q}^{\s{\s{Petz}}}\circ \calN|_{L(Q)},I_{L(Q)}}\leq  {\sqrt{S\zeta(\calE,Q)}},\label{eq:BOboundPetzWC}
     \end{equation}
     for any strongly $\calE$-controlled channel $\calN$.
\end{prop}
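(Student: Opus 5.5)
The plan is to reduce the claim to an operator-norm bound on $A_{\s{QEC}}-I_K\otimes\lambda$, where $\lambda=\frac1K\tr_K(A_{\s{QEC}})$, and then pass to square roots. For the latter step I will use the standard operator inequality $\norm{\sqrt{X}-\sqrt{Y}}_\infty\leq\sqrt{\norm{X-Y}_\infty}$ for $X,Y\succeq0$, which follows from operator monotonicity and concavity of $t\mapsto\sqrt t$. It applies here because $A_{\s{QEC}}\succeq0$, and $\lambda\succeq0$ as a normalized partial trace of a positive operator. It therefore suffices to prove
\[
\norm{A_{\s{QEC}}-I_K\otimes\lambda}_\infty\leq 2S\,\zeta(\calE,Q).
\]
Once this is established, \eqref{eq:BOboundPetzWC} follows immediately by inserting \eqref{eq:BOinequality} into Theorem~\ref{th:WCpetz}, since $\frac{1}{\sqrt2}\sqrt{2S\zeta}=\sqrt{S\zeta}$.

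\textbf{Block bound.} Fix $\varepsilon>0$ and choose $\lambda^*$ with $\norm{\calB^{\calE}_{\lambda^*,Q}}_\diamond\leq\zeta(\calE,Q)+\varepsilon$. View $A_{\s{QEC}}-I_K\otimes\lambda^*$ as an $S\times S$ block matrix whose blocks are operators on $Q$. Up to the transposition $k\leftrightarrow l$ of the block index (which does not affect any norm used below), its blocks are exactly the operators $B_{k,l}=PE_l^\dag E_kP-\lambda^*_{k,l}P$ from \eqref{eq:BO_superoperator}. For each pair $(k,l)$ and each rank-one $X=\ket{\phi}\bra{\chi}$ with unit vectors $\ket{\phi},\ket{\chi}\in Q$, the scalar $\tr(B_{k,l}X)$ is a single matrix entry of $\calB^{\calE}_{\lambda^*,Q}(X)$. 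A matrix entry is bounded by the trace norm, so
\[
\abs{\tr(B_{k,l}X)}\leq\norm{\calB^{\calE}_{\lambda^*,Q}(X)}_1\leq\norm{\calB^{\calE}_{\lambda^*,Q}}_\diamond\,\norm{X}_1 .
\]
Taking the supremum over such $X$ gives $\norm{B_{k,l}}_\infty\leq\zeta+\varepsilon$. An $S\times S$ block matrix whose blocks all have operator norm at most $\eta$ has operator norm at most $S\eta$, since it is dominated by the $S\times S$ scalar matrix of block norms, whose norm is at most its Frobenius norm $S\eta$. Hence
\[
\norm{A_{\s{QEC}}-I_K\otimes\lambda^*}_\infty\leq S(\zeta+\varepsilon).
\]

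\textbf{Replacing $\lambda^*$ by $\lambda$.} Applying $\frac1K\tr_K$ to $A_{\s{QEC}}-I_K\otimes\lambda^*$ shows that $\lambda-\lambda^*$ has entries $\frac1K\tr(B_{k,l})$ (again up to transposition). Each entry therefore has modulus at most $\norm{B_{k,l}}_\infty\leq\zeta+\varepsilon$, so by the same Frobenius argument $\norm{I_K\otimes(\lambda-\lambda^*)}_\infty=\norm{\lambda-\lambda^*}_\infty\leq S(\zeta+\varepsilon)$. The triangle inequality then gives
\[
\norm{A_{\s{QEC}}-I_K\otimes\lambda}_\infty\leq 2S(\zeta+\varepsilon),
\]
and letting $\varepsilon\to0$ yields the desired bound $2S\zeta(\calE,Q)$.

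\textbf{Expected difficulty.} The main care point is the index bookkeeping between the B\'eny--Oreshkov blocks $PE_l^\dag E_kP$ and the QEC matrix entries $\bra{c_i}E_k^\dag E_l\ket{c_j}$, including the tensor ordering $K\otimes S$ versus the block layout. Since all the estimates above use only operator norms of blocks and moduli of entries, these conventions should not affect the constants.
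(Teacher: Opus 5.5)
Your proof is correct after one bookkeeping fix, and it reaches the constant $2S$ by a different route from the paper.

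\textbf{Comparison with the paper.} The paper works from the start with the canonical $\lambda=\frac1K\tr_K(A_{\s{QEC}})$.

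\emph{The factor $2$.} The paper imports it from \cite[Remark~8]{Elimelech2026Theory}, which gives $\norm{\calB^{\calE}_{\lambda^T,Q}}_\diamond\le 2\zeta(\calE,Q)$. You instead keep a near-optimal $\lambda^*$ and move to the canonical $\lambda$ at the matrix level, via $\frac1K\tr_K$, an entrywise bound and the triangle inequality. In effect you reprove the content of Remark~8 in exactly the form needed here.

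\emph{The factor $S$.} The paper gets it from the duality $\norm{\calB}_{\s{cb},1}=\norm{\calB^\dag}_{\s{cb},\infty}$. It evaluates $I_S\otimes(\calB^{\calE}_{\lambda^T,Q})^\dag$ on the unnormalized maximally entangled operator $\sum_{i,j}\ket{i}\bra{j}\otimes\ket{i}\bra{j}$, which has norm $S$ and whose image is exactly $A_{\s{QEC}}-I_K\otimes\lambda$. You get it from blockwise estimates on rank-one inputs, so only the $1\to1$ norm of $\calB^{\calE}_{\lambda^*,Q}$ enters, with no ancilla or duality. You then combine these with the block-norm/Frobenius bound.

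The paper's argument is shorter given the imported remark, and it stays inside the cb-norm framework used elsewhere for Stinespring continuity. Yours is self-contained and elementary.

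\textbf{The fix: the transposition is not harmless as you state it.} The $(k,l)$ block of $A_{\s{QEC}}-I_K\otimes\lambda^*$ is $PE_k^\dag E_lP-\lambda^*_{k,l}P$. By contrast, $B_{l,k}=PE_k^\dag E_lP-\lambda^*_{l,k}P$. These agree only when $\lambda^*$ is symmetric, so relabeling the block index alone does not turn one into the other.

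What is true is that the $(k,l)$ block of $A_{\s{QEC}}-I_K\otimes(\lambda^*)^T$ equals $B_{l,k}$. This is exactly why the paper works with $\calB^{\calE}_{\lambda^T,Q}$. Replace $\lambda^*$ by $(\lambda^*)^T$ in both of your steps, so that $\lambda-(\lambda^*)^T$ has entries $\frac1K\tr(B_{l,k})$. Everything then goes through with the same constants.

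\textbf{A smaller point on the diamond norm.} The paper defines $\norm{\cdot}_\diamond$ as a supremum over density operators. Under that convention, your inequality $\norm{\calB^{\calE}_{\lambda^*,Q}(X)}_1\le\norm{\calB^{\calE}_{\lambda^*,Q}}_\diamond\norm{X}_1$ for the non-Hermitian input $X=\ket{\phi}\bra{\chi}$ needs $\calB^{\calE}_{\lambda^*,Q}$ to be Hermitian preserving (Lemma~\ref{lem:superoperatorNorms}).

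This costs nothing. The map $\calB^{\calE}_{\lambda^\dag,Q}(X)=\calB^{\calE}_{\lambda,Q}(X^\dag)^\dag$ has the same diamond norm as $\calB^{\calE}_{\lambda,Q}$, and $\lambda\mapsto\calB^{\calE}_{\lambda,Q}$ is affine. Hence replacing $\lambda^*$ by $\frac12(\lambda^*+(\lambda^*)^\dag)$ does not increase the norm, and you may take $\lambda^*$ Hermitian.
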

\begin{proof}
    Let us denote 
    \[\lambda=\frac{1}{K} \tr_K\p{A_{\s{QEC}}}=\sum_{i,j\in [S]}\frac{1}{K}\tr(PE_i^\dag E_jP) \ket{i}\bra{j},\]
    where the last equality follows from \eqref{eq:errorSynthesisDagers}. We first show that 
    \begin{equation}
        \norm{\sqrt{A_{\s{QEC}}}-I_{K}\otimes \frac{1}{\sqrt{K}}\sqrt{\tr_K(A_{\s{QEC}})}}_{\infty}\leq \sqrt{\norm{A_{\s{QEC}}-I_{K}\otimes \frac{1}{K}\tr_K(A_{\s{QEC}})}_{\infty}}.\label{eq:UsingHolder}
    \end{equation}
    Indeed, it is well-known that the function $f(t)=\sqrt{t}$ is operator-monotone on $[0,\infty)$ in the sence of Definition~\ref{eq:operatorMonotone} (see  \cite[Proposition V.1.8]{bhatia2013matrix}). Thus, \eqref{eq:UsingHolder}  follows directly by applying the  H\"older type inequality of Lemma~\ref{lem:HolderTypef} on $f$. To finish the proof, it is sufficient to show that the r.h.s. of \eqref{eq:UsingHolder} is bounded by $ {\sqrt{2S\zeta(\calE,Q)}}$. 

    To that end, consider the B\'eny-Oreshkov superoperator corresponding to $ {\lambda^T}$:
    \[\calB^{\calE}_{ {\lambda^T},Q}(X)= \sum_{i,j}\tr\p{B_{i,j}X}\ket{i}\bra{j}, \quad B_{i,j} = P {E_j^\dag E_i}P- {\lambda_{j,i}}P.\]
    Note that $\calB_{ {\lambda^T},Q}^{\calE}$ is Hermitian preserving as $\lambda^T$ is a Hermitian matrix (see \eqref{eq:HermitianPereserving} for the details). By \cite[Remark 8]{Elimelech2026Theory}, Lemma~\ref{lem:superoperatorNorms}, and the duality principle of Lemma~\ref{lem:Holder duality} we have 
    \[\norm{(\calB^{\calE}_{ {\lambda^T},Q})^\dag}_{\s{cb},\infty}=\norm{\calB^{\calE}_{ {\lambda^T},Q}}_{\s{cb},1}=\norm{\calB^{\calE}_{ {\lambda^T},Q}}_{\diamond}\leq 2\zeta(\calE,Q),\]
    where $(\calB^{\calE}_{ {\lambda^T},Q})^\dag:L(\calH_{S})\to L(Q)$ is the Hilbert-Schmidt dual operator of $\calB^{\calE}_{ {\lambda^T},Q}$. We claim that $(\calB^{\calE}_{ {\lambda^T},Q})^\dag$ is given as:
    \[(\calB^{\calE}_{ {\lambda^T},Q})^\dag(X)=\sum_{i,j\in[S]}\bra{i}X\ket{j}B_{j,i}.\]
    Indeed, for any $Y\in L(Q)$ and  $X\in L(\calH_{S})$ we have
    \begin{align*}
        \innerP{\calB^{\calE}_{ {\lambda^T},Q}(Y),X}_{\s{HS}}&=\tr\p{\sum_{i,j}\tr(B_{i,j}Y)^*\ket{j}\bra{i}X}=\sum_{i,j}\tr(Y^\dag B_{i,j}^\dag )\bra{i}X\ket{j}\\
        &=\tr\p{\sum_{i,j}\bra{i}X\ket{j} B_{i,j}^\dag Y^\dag}=\tr\p{Y^\dag \bigg( \sum_{i,j}\bra{i}X\ket{j} B_{j,i} \bigg)  }=\innerP{Y,(\calB^{\calE}_{ {\lambda^T},Q})^\dag (X)}_{\s{HS}},
    \end{align*}
    where we used the equality $B_{i,j}=B_{j,i}^\dag$ which follows since $\lambda$ is Hermitian. Consider the matrix $X\in \calH_S\otimes \calH_S$ given by $X=\sum_{i,j\in[S]}\ket{i}\bra{j}\otimes  {\ket{i}\bra{j}}$ where $\ppp{\ket{i}}_i$ is an orthonormal basis, and note that $ {\norm{X}_{\infty}=S}$. By the definition of $\norm{\cdot}_{\s{cb},\infty}$ we have 
    \begin{equation}
        \norm{\p{I_S\otimes (\calB_{ {\lambda^T},Q}^{\calE})^\dag}(X)}_{\infty}\leq  {S}\norm{(\calB_{ {\lambda^T},Q}^{\calE})^\dag}_{\s{cb},\infty}.\label{eq:cbINFTYbound}
    \end{equation}
    On the other hand 
    \begin{align}
        \p{I_S\otimes (\calB_{ {\lambda^T},Q}^{\calE})^\dag}(X)&\nonumber=\sum_{i,j\in[S]}\ket{i}\bra{j}\otimes  {B_{j,i}}\\
        &=\sum_{i,j\in[S]}\ket{i}\bra{j}\otimes P {E_i^\dag E_j}P- \sum_{i,j\in[S]}\ket{i}\bra{j}\otimes  {\lambda_{i,j}}P,\label{eq:tensorProdIk1}
    \end{align}
    where the first equality of \eqref{eq:tensorProdIk1} follows by \eqref{eq:errorSynthesisDagers} and since when restricted to the $K$-dimensional codespace $Q$, the projector $P$ acts as the identity $I_K$. Combining \eqref{eq:UsingHolder}, \eqref{eq:cbINFTYbound} and \eqref{eq:tensorProdIk1} we conclude \eqref{eq:BOinequality}. Finally, \eqref{eq:BOboundPetzWC} follows from \eqref{eq:BOinequality} and Theorem~\ref{th:WCpetz}.
\end{proof}

\subsection{Channel performances of Petz recovery} The performance of the Petz recovery map for a fixed channel is largely understood under the channel-fidelity criterion. Its near-optimality and exact optimality properties, as well as explicit expressions for its performance in terms of the QEC matrix, have been studied extensively \cite{barnum2002reversing,ng2010simple,zheng2024near,li2025optimality,kim2026optimal}. Existing worst-case results are more limited and generally address different criteria or rely on additional structure. In \cite{ng2010simple}, near-optimality was established with respect to the minimum pure-state input-output fidelity, rather than the worst-case entanglement fidelity. A general worst-case trace-norm guarantee for the Petz map, with a code-dimension-dependent loss, was obtained in the subspace and operator-algebra settings in \cite{chen2020entanglement}. More recently, \cite{elovenkova2026covariant} analyzed the worst-case entanglement fidelity of Petz recovery for covariant codes under flagged single-qudit erasure. Their analysis relies on covariance and irreducibility, which imply that the worst-case input is the maximally mixed logical state and thereby reduce the problem to a channel-fidelity calculation. 

Beyond these settings, the literature remains predominantly based on channel fidelity. One reason is the tractable Kraus-operator expression for channel fidelity, due to Schumacher \cite{schumacher1996sending}, which gives a workable expression for the channel fidelity. By contrast, worst-case entanglement fidelity requires minimization over all logical inputs, including states entangled with a reference system, and its general characterization leads to the complementary-channel formalism of B\'eny and Oreshkov \cite{beny2010general}. Building on the polar decomposition analysis of Theorem~\ref{th:WCpetz}, we derive worst-case entanglement-fidelity guarantees for the Petz recovery map for general finite-dimensional channels, without covariance assumptions.

We now adapt the error-set machinery to the case where the error set $\calE_{\calN}=\ppp{E_i}_{i\in[S]}$ is a Kraus set of a quantum channel $\calN:L(\calH_{\SIN})\to L(\calH_{\SOUT})$. The channel $\calN$ is automatically $\calE_{\calN}$-controlled: using its given Kraus representation, the coefficient matrix is $C=I_S$, and hence $\norm{C}_{\infty}=1$. By contrast, this representation does not satisfy the strong control condition, since $ \norm{C}_{2} = \norm{I_S}_{2} = \sqrt{S}$. To apply Theorem~\ref{th:WCpetz}, consider instead the rescaled error set $\calE_{\calN}' := \sqrt{S}\calE_{\calN} = \ppp{\sqrt{S}E_i}_{i\in[S]}$. The channel $\calN$ is strongly $\calE_{\calN}'$-controlled, since its Kraus operators admit the representation $E_i = \frac{1}{\sqrt{S}}\p{\sqrt{S}E_i}$, whose coefficient matrix is $I_S/\sqrt{S}$ and therefore has norm one. A common rescaling of the error operators does not change the corresponding Petz recovery map. On the other hand, the QEC matrix associated with $\calE_{\calN}'$ is $S A_{\s{QEC}}$. Applying Theorem~\ref{th:WCpetz} therefore gives the following channel-level bound. 

\begin{prop}\label{prop:channelWCPetz} Let $\calN:L(\calH_{\SIN})\to L(\calH_{\SOUT})$ be a quantum channel with Kraus set $\calE_{\calN}=\ppp{E_i}_{i\in[S]}$, and let $Q\subseteq\calH_{\SIN}$ be a $K$-dimensional quantum code. Then 
\[d_{\mathrm{wc}}\p{ \calR_{\calN,Q}^{\s{Petz}} \circ \calN|_{L(Q)}, I_{L(Q)} } \leq \sqrt{\frac{S}{2}} \norm{ \sqrt{A_{\s{QEC}}} - I_K\otimes \frac{1}{K} \sqrt{\tr_K\p{A_{\s{QEC}}}} }_{\infty}. \]\end{prop}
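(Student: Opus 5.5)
The plan is to specialize Theorem~\ref{th:WCpetz} to the rescaled error set $\calE_{\calN}'=\sqrt{S}\,\calE_{\calN}$, exactly as set up in the paragraph before the statement, and then handle the normalization of the reference operator inside the norm. The factor written there is $\tfrac1K\sqrt{\tr_K(A_{\s{QEC}})}$, which is not the factor $\tfrac1{\sqrt K}\sqrt{\tr_K(\cdot)}$ that Theorem~\ref{th:WCpetz} produces, so this adjustment is where the real work lies.

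\textbf{Step 1: reduction to the rescaled error set.} By Definition~\ref{def:errorSetPetz}, the Kraus operators $R_i=PE_i^\dag\Gamma^{-1/2}$ are invariant under $E_i\mapsto\sqrt S E_i$: $E_i^\dag$ gains a factor $\sqrt S$ and $\Gamma^{-1/2}$ loses one. Hence $\calR^{\s{Petz}}_{\calE_{\calN}',Q}=\calR^{\s{Petz}}_{\calN,Q}$. The channel $\calN$ is strongly $\calE_{\calN}'$-controlled with $C=I_S/\sqrt S$, so $\norm{C}_2=1$. The QEC matrix of $\calE'_{\calN}$ is $SA_{\s{QEC}}$, so $\sqrt{SA_{\s{QEC}}}=\sqrt S\sqrt{A_{\s{QEC}}}$. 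From here I will not invoke Theorem~\ref{th:WCpetz} as a black box. Instead I rerun its proof with a general reference vector.

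\textbf{Step 2: a flexible reference dilation.} As in that proof, set $V_{\s{dec},C}=(U^\dag\otimes I_R)V_C$. For an arbitrary $\mu\succeq0$ on $\C^S$, put $\ket{\psi'}=(\mu\otimes I_R)\ket{\s{vec}(C)}$. The same estimate as before gives, for every unit $\ket\psi\in Q$,
\[
\norm{V_{\s{dec},C}\ket\psi-\ket\psi\otimes\ket{\psi'}}\le\norm{\sqrt{S}\sqrt{A_{\s{QEC}}}-I_K\otimes\mu}_\infty=:\delta .
\]
I choose $\mu=\sqrt S\,\tfrac1K\sqrt{\tr_K(A_{\s{QEC}})}$, so that $\delta=\sqrt S\,\norm{\sqrt{A_{\s{QEC}}}-I_K\otimes\tfrac1K\sqrt{\tr_K(A_{\s{QEC}})}}_\infty$, which is $\sqrt{S}$ times the norm appearing in the statement.

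\textbf{Step 3 (main obstacle): normalization.} With this choice of $\mu$, the vector $\ket{\psi'}$ is no longer a unit vector, so $I_Q\otimes\ket{\psi'}$ is not a dilation of $I_{L(Q)}$ and Theorem~\ref{th:SteinContChannels} does not apply directly. To repair this I use the fact that $V_{\s{dec},C}$ is an isometry on $Q$. Indeed, $V_C$ is an isometry because $\calN$ is trace preserving, and $U^\dag$ acts isometrically on $\supp(\Gamma)\supseteq\operatorname{Im}V_C$; this is the content of Lemma~\ref{lem:supportCodePetz}.

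Consequently $\big|1-\norm{\psi'}\big|\le\delta$, and I replace $\ket{\psi'}$ by the unit vector $\ket{\hat\psi}=\ket{\psi'}/\norm{\psi'}$. This vector is fixed, i.e.\ independent of $\ket\psi$, so $I_Q\otimes\ket{\hat\psi}$ is a genuine Stinespring dilation of the identity channel.

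Bounding $\norm{V_{\s{dec},C}\ket\psi-\ket\psi\otimes\ket{\hat\psi}}$ in terms of $\delta$ is the step I expect to be delicate.
\begin{itemize}[leftmargin=*]
\item The crude estimate loses a factor $2$.
\item A sharper route compares the unit vector $u=V_{\s{dec},C}\ket\psi$ with the ray spanned by $\ket\psi\otimes\ket{\psi'}$. If the angle $\theta$ between them satisfies $\sin\theta\le\delta$, then $\norm{u-\hat v}=2\sin(\theta/2)$, which is at most about $\sqrt2\,\delta$ once $\theta\le\pi/2$.
\item I would first try to show that the constant can actually be taken equal to $1$ after minimizing over a global phase. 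I would do this using the alternative form $d_{\mathrm{wc}}=\min\sqrt{1-\mathrm{Re}\,\langle\cdot,\cdot\rangle}$ from the continuity theorem, since only the real part of the overlap enters there.
\end{itemize}
If the constant does come out as $1$, then applying Theorem~\ref{th:SteinContChannels} yields $d_{\mathrm{wc}}\le\tfrac1{\sqrt2}\delta$, which is exactly the stated bound $\sqrt{S/2}\,\norm{\sqrt{A_{\s{QEC}}}-I_K\otimes\tfrac1K\sqrt{\tr_K(A_{\s{QEC}})}}_\infty$.

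If that sharpening fails, the fallback is to absorb the normalization loss into the constant. In that case I would also check whether the reference $\tfrac1K\sqrt{\tr_K(\cdot)}$ can be replaced by the normalized $\tfrac1{\sqrt K}\sqrt{\tr_K(\cdot)}$, which makes Step~3 unnecessary and is what Theorem~\ref{th:WCpetz} gives directly.
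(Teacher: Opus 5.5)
Your Step~1 is exactly the paper's argument: the Petz map is invariant under the common rescaling, $\calN$ is strongly $\calE_{\calN}'$-controlled with $C=I_S/\sqrt S$, and $A'_{\s{QEC}}=SA_{\s{QEC}}$. The paper stops there and applies Theorem~\ref{th:WCpetz} directly, using
$\sqrt{A'_{\s{QEC}}}-I_K\otimes\tfrac{1}{\sqrt K}\sqrt{\tr_K(A'_{\s{QEC}})}=\sqrt S\,\bigl(\sqrt{A_{\s{QEC}}}-I_K\otimes\tfrac{1}{\sqrt K}\sqrt{\tr_K(A_{\s{QEC}})}\bigr)$.
So what is actually proved is the bound with $\tfrac{1}{\sqrt K}$. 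The $\tfrac1K$ in the displayed statement is a misprint; the paper's own verification line after the statement uses $\tfrac{1}{\sqrt K}$. Your last-sentence fallback is therefore the real proof, and in it Step~3 never arises. Indeed $\tfrac{1}{\sqrt K}\sqrt{\tr_K(A'_{\s{QEC}})}=\sqrt S\sqrt\lambda$ with $\lambda=\tfrac1K\tr_K(A_{\s{QEC}})$, and $(\sqrt S\sqrt\lambda\otimes I_R)\ket{\s{vec}(C)}$ is automatically a unit vector because $\calN$ is trace preserving.

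The genuine gap is that Steps~2--3 aim at the literal $\tfrac1K$ inequality, which is false, so no argument can deliver the hoped-for constant $1$. Here is a counterexample.
\begin{itemize}
\item Take $\calN(\rho)=\tfrac12(\rho+Z\rho Z)$ on a qubit, with $E_0=I/\sqrt2$, $E_1=Z/\sqrt2$ (so $S=2$), and $Q=\C^2$ (so $K=2$).
\item Then $A_{\s{QEC}}=\tfrac12(I_2\otimes I_2+Z\otimes X)$ is a projector, so $\sqrt{A_{\s{QEC}}}=A_{\s{QEC}}$, and $\tr_K(A_{\s{QEC}})=I_2$.
\item With the $\tfrac1K$ reference, $\sqrt{A_{\s{QEC}}}-\tfrac12 I_4=\tfrac12 Z\otimes X$, so the claimed right-hand side is $\tfrac12$.
\item But $\Gamma=I$, so the Petz map is $\calN$ itself, and $\calN\circ\calN=\calN$ is completely dephasing. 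Its $\calF_{\mathrm{wc}}$ equals $2^{-1/2}$, attained at the maximally entangled input. Hence $d_{\mathrm{wc}}=\sqrt{1-2^{-1/2}}\approx0.541>\tfrac12$.
\item With $\tfrac1{\sqrt K}$ the right-hand side is $2^{-1/2}\approx0.707$, which is consistent.
\end{itemize}
A sanity check points the same way. For an exact code, $A_{\s{QEC}}=I_K\otimes\lambda$, the $\tfrac1K$ right-hand side equals $\sqrt{S/2}\,(1-K^{-1/2})\norm{\sqrt\lambda}_\infty\neq0$.

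Your Step~3 can still be made rigorous, with a weaker constant. Set $u=V_{\s{dec},C}\ket\psi$, $\hat v=\ket\psi\otimes\ket{\hat\psi}$ and $r=\norm{\ket{\psi'}}$. From $\norm{u-r\hat v}\le\delta$ and $\norm u=\norm{\hat v}=1$ you get $\mathrm{Re}\,\langle\hat v,u\rangle\ge\tfrac{1+r^2-\delta^2}{2r}\ge\sqrt{1-\delta^2}$. Hence $d_{\mathrm{wc}}\le\sqrt{1-\sqrt{1-\delta^2}}\le\delta$. This is the $\tfrac1K$ version with $\sqrt S$ in place of $\sqrt{S/2}$. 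In the example above it holds with equality at $\delta=2^{-1/2}$, confirming that the factor $\sqrt2$ cannot be recovered.
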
 Indeed, the QEC matrix of the rescaled error set is $A_{\s{QEC}}' = S A_{\s{QEC}} $, and hence \begin{align*}
     \norm{ \sqrt{A_{\s{QEC}}'} - I_K\otimes \frac{1}{\sqrt{K}} \sqrt{\tr_K\p{A_{\s{QEC}}'}} }_{\infty} = \sqrt{S} \norm{ \sqrt{A_{\s{QEC}}} - I_K\otimes \frac{1}{\sqrt{K}} \sqrt{\tr_K\p{A_{\s{QEC}}}} }_{\infty}. 
\end{align*} The claim now follows from Theorem~\ref{th:WCpetz}.

\subsection*{Acknowledgments}
\addcontentsline{toc}{section}{Acknowledgments}
  D.E. acknowledges support from the Yad Hanadiv Foundation through the Rothschild Fellowship.   
  V.V.A. acknowledges NSF grant OMA2120757 (QLCI).   
  The research of A.B. was partially supported by NSF grants CIF-2330909 and CIF-2526035. 

\appendix
\makeatletter
\section*{Appendices}

 \section{Technicalities and auxiliary results}
 \subsection{States in channels in Hilbert spaces: norms and properties}
 Let us shortly present the state and superoperator norms used in this work
 \begin{definition}[State norms]\label{Def:StateNorms}
Let $\calH$, $\calH'$ be a finite-dimensional spaces and $p\in  [1,\infty]$. The $p$-norm of an operator $X:\calH\to \calH'$ is defined as 
\[\norm{X}_p=\p{\sum_{\lambda \in \s{S}(X)}\lambda^p}^{\frac{1}{p}
},\]
where $\s{S}(X)$ is the multiset of singular values of $X$, and for $p=\infty$ (by taking a limit) $\norm{X}_{\infty}$ is the maximal singular value of $X$.
\end{definition} 
\begin{definition}\label{eq:operatorMonotone}
    A real valued $f:[0,\infty)\to \R$ is called operator-monotone if for any finite dimensional Hilbert space $\calH$  and for any positive-semidefinite operators  $A $ and  $B$ such that $A\succeq B$ we have $f(A)-f(B)\geq 0$, where $f$ acts on $X=\sum_{i}\lambda_i\ket{i}\bra{i}$ by $f(X)=\sum_i f(\lambda_i) \ket{i}\bra{i}$ 
    (and $\ppp{\ket{i}}$ are orthonormal vectors).
\end{definition}
\begin{lemma}\label{lem:HolderTypef} \cite[Theorem X.1.1]{bhatia2013matrix}
Let $\calH$ be  a finite-dimensional Hilbert space and let $A,B\in L(\calH)$ be positive-semidefinite operators. For any  operator-monotone function  $f:\R_+\to \R$ with $f(0)=0$: 
\[ \norm{f(A)-f(B)}_{\infty}\leq f(\norm{A-B}_{\infty}).\]
\end{lemma}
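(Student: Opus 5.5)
The statement to prove is Lemma~\ref{lem:HolderTypef}: for positive semidefinite $A,B$ and operator-monotone $f$ with $f(0)=0$, we have $\norm{f(A)-f(B)}_\infty\leq f(\norm{A-B}_\infty)$. The strategy is to combine operator monotonicity with a subadditivity property of $f$.

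\textbf{Step 1: properties of $f$.} A nonnegative operator-monotone function on $[0,\infty)$ is operator concave. For our needs the scalar facts suffice: $f$ is nondecreasing, concave, and $f\geq 0$, since $f(0)=0$ and $f$ is monotone. A concave function with $f(0)\ge 0$ is subadditive, i.e.\ $f(s+t)\le f(s)+f(t)$ for $s,t\ge 0$.

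\textbf{Step 2: reduction.} Set $\delta=\norm{A-B}_\infty$. Then $-\delta I\preceq A-B\preceq \delta I$. It suffices to show $f(A)-f(B)\preceq f(\delta)I$; the reverse inequality follows by swapping $A$ and $B$. Together they give $\norm{f(A)-f(B)}_\infty\le f(\delta)$, because $f(A)-f(B)$ is Hermitian.

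\textbf{Step 3: the main inequality.} From $A\preceq B+\delta I$ and operator monotonicity, $f(A)\preceq f(B+\delta I)$. It therefore remains to prove
\[
f(B+\delta I)\preceq f(B)+f(\delta)I .
\]
Since $B$ and $B+\delta I$ are simultaneously diagonalizable, this reduces to the scalar inequality $f(b+\delta)\le f(b)+f(\delta)$ for every eigenvalue $b\ge 0$ of $B$. That is exactly the subadditivity from Step 1. Combining, $f(A)-f(B)\preceq f(\delta)I$.

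\textbf{Main obstacle.} The step needing justification is concavity of $f$ on $[0,\infty)$. It is a classical consequence of operator monotonicity (Löwner's theorem, or the fact that operator-monotone functions on $[0,\infty)$ are operator concave), and I would cite it from \cite{bhatia2013matrix}. Boundary behavior at $0$ is handled by continuity. Everything else is elementary.
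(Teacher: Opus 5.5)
Your proposal is correct and is essentially the standard proof of \cite[Theorem~X.1.1]{bhatia2013matrix}, which the paper cites rather than reproves. That proof also goes: concavity with $f(0)=0$ gives scalar subadditivity; $A\preceq B+\delta I$ and operator monotonicity give $f(A)\preceq f(B+\delta I)\preceq f(B)+f(\delta)I$; and symmetry finishes. The only thing to adjust is your closing remark about the boundary. The paper's definition does not guarantee continuity at $0$: the function with $f(0)=0$ and $f(t)=1$ for $t>0$ is operator monotone on $[0,\infty)$, since $A\succeq B\succeq 0$ implies $\supp B\subseteq\supp A$. Continuity is also not needed, because subadditivity only uses $f(\theta x)\ge\theta f(x)$ for $\theta\in[0,1]$, and this follows from concavity on $(0,\infty)$ together with $f(0^+)\ge f(0)=0$.
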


 \begin{definition}[Superoperator norms]\label{Def:superOpNorms}
Let $\calH$, $\calH'$ be a finite-dimensional spaces, and $\calN:L(\calH)\to L(\calH')$ be a superoperator. 
\begin{enumerate}
    \item For $p\in  [1,\infty]$ The $p$-completely bounded (cb) norm of $\calN$ is defined to be 
    \[\norm{\calN}_{\s{cb},p}=\sup_{M\in \N}\sup_{\norm{X}_p\leq 1}\norm{\calN\otimes I_{M}(X)}_p,\]
    where $I_{M}$ is the identity on $L(\C_M)$, and $X$ ranges over operators in $L(\calH\otimes \calH_M)$.
    \item The diamond norm of $\calN$ is defined similarly to the $\norm{\cdot}_{\s{cb},1}$,  with the difference that maximization runs over density operators:
    \[\norm{\calN}_{\diamond}:=\sup_{M\in \N}\sup_{\rho\in D(\calH)}\norm{\calN\otimes I_{M}(\rho)}_1.
    \]
\end{enumerate}
 \end{definition}
\begin{lemma}[Superoperator norm properties {\rm (\cite[Theorem 3.51]{watrous2018theory})}]\label{lem:superoperatorNorms} If $\calN$ is a Hermitian preserving superoperator $L(\calH)\to L(\calH')$, then $\norm{\calN}_{\diamond}=\norm{\calN}_{\s{cb},1}$.
\end{lemma}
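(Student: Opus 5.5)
Since $\norm{\calN}_{\diamond}$ and $\norm{\calN}_{\s{cb},1}$ are suprema of the same quantity over nested sets, the plan is to prove the two inequalities separately. The inequality $\norm{\calN}_{\diamond}\leq\norm{\calN}_{\s{cb},1}$ is immediate, because every density operator $\rho$ satisfies $\norm{\rho}_1=1$ and so lies in the set over which $\norm{\cdot}_{\s{cb},1}$ maximizes. All the work is in the reverse inequality. For it, I fix $M$ and an operator $X\in L(\calH\otimes\calH_M)$ with $\norm{X}_1\leq 1$, and show that $\norm{(\calN\otimes I_M)(X)}_1\leq\norm{\calN}_{\diamond}$.

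\textbf{Step 1: reduction to rank-one inputs.} I take the singular value decomposition
\[
X=\sum_k s_k\ket{u_k}\bra{v_k},
\]
where $\ket{u_k},\ket{v_k}$ are unit vectors and $\sum_k s_k=\norm{X}_1\leq 1$. By linearity of $\calN\otimes I_M$ and the triangle inequality,
\[
\norm{(\calN\otimes I_M)(X)}_1\leq\sum_k s_k\norm{(\calN\otimes I_M)(\ket{u_k}\bra{v_k})}_1 .
\]
It therefore suffices to prove $\norm{Z}_1\leq\norm{\calN}_{\diamond}$, where $Z:=(\calN\otimes I_M)(\ket{u}\bra{v})$ and $\ket{u},\ket{v}$ are arbitrary unit vectors.

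\textbf{Step 2: symmetrization with an extra qubit.} This is the key step, and the only place where Hermitian preservation is used. I enlarge the ancilla to $\calH_M\otimes\C^2$, which is allowed because the supremum in $\norm{\cdot}_{\diamond}$ runs over all ancilla dimensions. Define the unit vectors
\[
\ket{w_\pm}=\tfrac{1}{\sqrt2}\p{\ket{u}\otimes\ket{0}\pm\ket{v}\otimes\ket{1}}.
\]
Expanding the projectors, the diagonal blocks cancel in the difference and
\[
\ket{w_+}\bra{w_+}-\ket{w_-}\bra{w_-}=\ket{u}\bra{v}\otimes\ket{0}\bra{1}+\ket{v}\bra{u}\otimes\ket{1}\bra{0}.
\]
Apply $\calN\otimes I_{M}\otimes I_2$ to both sides. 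The map $\calN\otimes I_M$ is Hermitian preserving whenever $\calN$ is. Since $\ket{v}\bra{u}=(\ket{u}\bra{v})^\dag$, this gives $(\calN\otimes I_M)(\ket{v}\bra{u})=Z^\dag$. The image of the right-hand side is therefore the block operator
\[
\begin{pmatrix}0&Z\\ Z^\dag&0\end{pmatrix}.
\]
This block operator is Hermitian with eigenvalues $\pm\sigma_j(Z)$, where $\sigma_j(Z)$ are the singular values of $Z$, so its trace norm equals $2\norm{Z}_1$.

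\textbf{Step 3: conclusion.} By the triangle inequality and the definition of the diamond norm, applied to the pure states $\ket{w_\pm}\bra{w_\pm}$,
\[
2\norm{Z}_1\leq\norm{(\calN\otimes I_{2M})(\ket{w_+}\bra{w_+})}_1+\norm{(\calN\otimes I_{2M})(\ket{w_-}\bra{w_-})}_1\leq 2\norm{\calN}_{\diamond}.
\]
Combined with Step 1, this gives $\norm{\calN}_{\s{cb},1}\leq\norm{\calN}_{\diamond}$, and hence equality.

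Only two points need checking, and both are routine. The first is that the enlarged ancilla dimension $2M$ is still covered by the supremum over $M\in\N$ in Definition~\ref{Def:superOpNorms}. The second is the block trace-norm identity: the block operator squares to $ZZ^\dag\oplus Z^\dag Z$, whose eigenvalues are the squares $\sigma_j(Z)^2$, so its trace norm is $2\sum_j\sigma_j(Z)$.
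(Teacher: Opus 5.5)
Your proof is correct, and it is essentially the standard argument behind the cited result \cite[Theorem~3.51]{watrous2018theory}, which the paper invokes without reproducing: reduce to rank-one inputs, then use Hermitian preservation and an extra qubit to realize $Z$ as the off-diagonal block of the image of the Hermitian operator $\ket{w_+}\bra{w_+}-\ket{w_-}\bra{w_-}$, and conclude with the triangle inequality over pure states. The cited theorem additionally fixes the ancilla dimension to $\dim\calH$, which needs a further Schmidt-decomposition step; you rightly skip it, since both norms in Definition~\ref{Def:superOpNorms} take the supremum over all $M$.
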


 We shall require the following well-known duality principle of completely bounded norms (see \cite[Theorem 1]{Johnston2009Computing}).
     \begin{lemma}[H\"older duality lemma] \label{lem:Holder duality}
         Let $\calH$, $\calH'$ be a finite-dimensional spaces, and $\calN:L(\calH)\to L(\calH')$ be a superoperator. Then 
         \[\norm{\calN}_{\s{cb},1}=\norm{\calN^\dag}_{\s{cb},\infty},\]
         where $\calN^\dag$ is the adjoint of $\calN$ with respect to the Hilbert-Schmidt inner product. 
     \end{lemma}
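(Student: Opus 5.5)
The plan is to prove the identity level by level in the ancilla dimension $M$ and then take the supremum over $M$. For a fixed $M$, set $\Phi_M := \calN\otimes I_M : L(\calH\otimes\calH_M)\to L(\calH'\otimes\calH_M)$. By Definition~\ref{Def:superOpNorms}, $\norm{\calN}_{\s{cb},1}=\sup_M \norm{\Phi_M}_{1\to1}$, where
\[
\norm{\Phi}_{p\to p}:=\sup_{\norm{X}_p\leq 1}\norm{\Phi(X)}_p ,
\]
and $\norm{\calN^\dag}_{\s{cb},\infty}$ is defined in the same way with $p=\infty$. Two facts are then needed for each fixed $M$:
\begin{itemize}
\item $(\calN\otimes I_M)^\dag=\calN^\dag\otimes I_M$ with respect to the Hilbert--Schmidt inner product;
\item for any linear map $\Phi$ between finite-dimensional operator spaces, $\norm{\Phi}_{1\to1}=\norm{\Phi^\dag}_{\infty\to\infty}$.
\end{itemize}
Combining them gives $\norm{\calN\otimes I_M}_{1\to1}=\norm{\calN^\dag\otimes I_M}_{\infty\to\infty}$ for every $M$. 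Taking $\sup_M$ on both sides yields the claim.

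For the first fact, I check the adjoint relation on product operators. For $A\in L(\calH)$, $B,D\in L(\calH_M)$ and $C\in L(\calH')$,
\[
\innerP{C\otimes D,(\calN\otimes I_M)(A\otimes B)}_{\s{HS}}=\innerP{C,\calN(A)}_{\s{HS}}\innerP{D,B}_{\s{HS}}=\innerP{\calN^\dag(C),A}_{\s{HS}}\innerP{D,B}_{\s{HS}} .
\]
The right-hand side equals $\innerP{(\calN^\dag\otimes I_M)(C\otimes D),A\otimes B}_{\s{HS}}$. Product operators span both tensor-product spaces, and the inner product is sesquilinear, so the identity extends to all operators. This gives $(\calN\otimes I_M)^\dag=\calN^\dag\otimes I_M$.

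For the second fact, I use that the trace norm and the operator norm are dual under the Hilbert--Schmidt pairing. Concretely, $\norm{Z}_1=\sup_{\norm{Y}_\infty\leq1}\abs{\innerP{Y,Z}_{\s{HS}}}$, which follows from the polar decomposition of $Z$ together with the inequality $\abs{\tr(Y^\dag Z)}\leq\norm{Y}_\infty\norm{Z}_1$. Symmetrically, $\norm{W}_\infty=\sup_{\norm{X}_1\leq1}\abs{\innerP{W,X}_{\s{HS}}}$, where the supremum is attained on rank-one $X$. With these,
\[
\norm{\Phi}_{1\to1}=\sup_{\norm{X}_1\leq1}\ \sup_{\norm{Y}_\infty\leq1}\abs{\innerP{Y,\Phi(X)}_{\s{HS}}}=\sup_{\norm{Y}_\infty\leq1}\ \sup_{\norm{X}_1\leq1}\abs{\innerP{\Phi^\dag(Y),X}_{\s{HS}}}=\norm{\Phi^\dag}_{\infty\to\infty}.
\]
The middle equality is just the exchange of two suprema.

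I do not expect a genuine obstacle; the only points to watch are bookkeeping. The main one is that in the $\s{cb},p$ norms the supremum runs over \emph{all} operators $X$ with $\norm{X}_p\leq1$, not only over density operators. This matters because the duality argument needs arbitrary (non-positive) test operators $X$ and $Y$. That is exactly why the lemma is stated for $\norm{\cdot}_{\s{cb},1}$ rather than for $\norm{\cdot}_\diamond$, with Lemma~\ref{lem:superoperatorNorms} used afterwards to pass to the diamond norm for Hermitian-preserving maps. All spaces are finite-dimensional, so every supremum is over a compact set and is attained; no limiting arguments are needed.
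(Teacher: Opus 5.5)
Your proof is correct. The two ingredients are the level-$M$ identity $(\calN\otimes I_M)^\dag=\calN^\dag\otimes I_M$ and the Hilbert--Schmidt duality between $\norm{\cdot}_1$ and $\norm{\cdot}_\infty$, which lets you exchange the two suprema; together they give the equality at each $M$ and hence of the cb norms.

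The paper gives no argument of its own for this lemma; it only cites \cite[Theorem~1]{Johnston2009Computing}. Your standard duality argument is therefore a self-contained replacement for that citation.

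One small overstatement: the outer supremum over $M\in\N$ is not over a compact set. This does not matter, because equality at every fixed $M$ already forces equality of the suprema, and no attainment is used anywhere.
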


\begin{lemma}[Partial traces in composite systems property]\label{lem:tracPropComposite}
Let $\calH=\calH_A\otimes \calH_{B}\otimes \calH_C$ be a finite-dimensional system. Let $Y,W\in L(\calH_B\otimes \calH_C)$ and $X\in L(\calH)$  be linear operators. Then
\[\tr_{BC}\p{(Y\otimes I_{C})X(W^\dag\otimes I_{C})}=\tr_{B}\p{Y\tr_{C}(X) W^\dag}.\]
\end{lemma}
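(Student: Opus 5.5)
This identity is proved by expanding both sides in product bases. Linearity in $X$ does most of the work: both sides are linear in $X$, and $L(\calH)=L(\calH_A)\otimes L(\calH_B)\otimes L(\calH_C)$ is spanned by elementary tensors. It therefore suffices to take $X=X_A\otimes X_{BC}$, or even $X=X_A\otimes X_B\otimes X_C$.

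On reading the statement, I interpret $Y,W$ as operators on $\calH_A\otimes\calH_B$, not on $\calH_B\otimes\calH_C$. Only then do the tensor-with-$I_C$ notation and the appearance of $\tr_C$ inside $\tr_B$ make sense. The output of both sides is an operator on $\calH_A$. This is exactly how the lemma is used in the proofs of Theorem~\ref{th:AVpetz} and Theorem~\ref{th:WCpetz}. There $Y=U^\dag$ or $\bar C$ acts on the first two factors, and the traced factors are the ancilla factors. I will state this identification explicitly and then prove the identity in that form.

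For the elementary-tensor case I first reduce further. Any $Y,W$ on $\calH_A\otimes\calH_B$ are sums of $Y_A\otimes Y_B$ and $W_A\otimes W_B$, and both sides are linear in $Y$ and antilinear in $W$. So it suffices to check product $Y$ and $W$. With $X=X_A\otimes X_B\otimes X_C$, the left side is
\[
\tr_{BC}\p{Y_AX_AW_A^\dag\otimes Y_BX_BW_B^\dag\otimes X_C}=\tr\p{Y_BX_BW_B^\dag}\tr(X_C)\,Y_AX_AW_A^\dag.
\]
For the right side, $\tr_C(X)=\tr(X_C)\,X_A\otimes X_B$, so it equals
\[
\tr(X_C)\,\tr_B\p{(Y_A\otimes Y_B)(X_A\otimes X_B)(W_A\otimes W_B)^\dag}=\tr(X_C)\,\tr\p{Y_BX_BW_B^\dag}\,Y_AX_AW_A^\dag.
\]
The two expressions coincide.

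A basis-free alternative is to use that $\tr_{BC}=\tr_B\circ\tr_C$. One also needs that $\tr_C$ commutes with left and right multiplication by operators of the form $Z\otimes I_C$, i.e.\ $\tr_C\p{(Y\otimes I_C)X(W^\dag\otimes I_C)}=Y\tr_C(X)W^\dag$. This is the defining $\calH_A\otimes\calH_B$-bimodule property of the partial trace, and it can be verified on elementary tensors as above. Applying $\tr_B$ then finishes the proof.

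There is no real obstacle here. The only delicate point is fixing the correct reading of which factors $Y$ and $W$ act on, so that the statement typechecks.
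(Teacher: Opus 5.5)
Your proof is correct, and your reading of the hypothesis is the right one: $Y,W\in L(\calH_B\otimes\calH_C)$ is a typo for $L(\calH_A\otimes\calH_B)$. The paper's own proof uses exactly that reading, since it writes $I_{AB}\otimes\bra{c_i}$.

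The paper's argument is essentially your ``basis-free alternative,'' but it checks the bimodule step differently. It expands $\tr_C(Z)=\sum_i(I_{AB}\otimes\bra{c_i})Z(I_{AB}\otimes\ket{c_i})$ in an orthonormal basis $\ppp{\ket{c_i}}$ of $\calH_C$. It then uses $(I_{AB}\otimes\bra{c_i})(Y\otimes I_C)=Y(I_{AB}\otimes\bra{c_i})$, and the analogous identity for $W^\dag$, to pull $Y$ and $W^\dag$ out of the sum. This gives $\tr_C\p{(Y\otimes I_C)X(W^\dag\otimes I_C)}=Y\tr_C(X)W^\dag$, and it finishes with $\tr_{BC}=\tr_B\circ\tr_C$.

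Your main route is a multilinearity reduction to $X=X_A\otimes X_B\otimes X_C$, $Y=Y_A\otimes Y_B$, $W=W_A\otimes W_B$. It is also valid, and both of your computations check out.

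What the paper's route buys is that $Y$ and $W$ are never decomposed. It therefore applies verbatim when $Y,W$ map some other space into $\calH_A\otimes\calH_B$, which is how the lemma is actually invoked. Examples are $U^\dag:\calH_{\SOUT}\to Q\otimes\calH_S$ in the proof of Theorem~\ref{th:WCpetz} and $\Bar{C}:\C^S\to\C^R$ in the proof of Theorem~\ref{th:AVpetz}. Your product decomposition is set up for endomorphisms of $\calH_A\otimes\calH_B$ and would need re-bookkeeping in those cases.

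You can get the same generality with less work by splitting only $X=X_{AB}\otimes X_C$. Then $(Y\otimes I_C)X(W^\dag\otimes I_C)=YX_{AB}W^\dag\otimes X_C$, and both sides equal $\tr(X_C)\,\tr_B\p{YX_{AB}W^\dag}$, with no decomposition of $Y$ or $W$ at all.
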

\begin{proof}
    The proof follows immediately from the definition of the partial trace: for any $Z\in L(\calH)$ we have $\tr_{BC}(Z)=\tr_B(\tr_C(Z))$. Thus, it is sufficient to show that 
    \[\tr_C\p{(Y\otimes I_{C})X(W^\dag\otimes I_{C})}=Y\tr_{C}(X) W^\dag.\]
    Indeed, let $\ppp{\ket{c_i}}_i$ be an orthonormal basis for $\calH_C$. We have 
    \begin{align*}
        \tr_C\p{(Y\otimes I_{C})X(W^\dag\otimes I_{C})}&=\sum_{i}I_{AB}\otimes  \bra{c_i}  \bigg((Y\otimes I_{C})X(W^\dag\otimes I_{C})\bigg)I_{AB}\otimes\ket{c_i}\\&
        =\sum_{i}(Y\otimes  \bra{c_i})  X (W^\dag\otimes\ket{c_i})\\&
        =Y\p{\sum_{i}(I_{AB}\otimes  \bra{c_i})  X (I_{AB}\otimes\ket{c_i}) }W^\dag\\
        &=Y\tr_C(X) W^\dag.   \qedhere
    \end{align*}
\end{proof}
\begin{lemma}\label{lem:partialTacesSqrts}
    Let $X\in \C^{K}\otimes \C^S$ be a positive semidefinite operator. Then 
    \[\tr_K(X)-\frac{1}{K}\p{\tr_K\sqrt{X}}^2\succeq 0.\]
\end{lemma}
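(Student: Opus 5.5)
The plan is to exhibit the claimed operator as a scalar multiple of a Gram matrix, which makes positivity automatic. Write $Y:=\sqrt{X}\succeq 0$ and expand it in block form with respect to an orthonormal basis $\ppp{\ket{k}}_{k\in[K]}$ of $\C^K$:
\[
Y=\sum_{k,l\in[K]}\ket{k}\bra{l}\otimes Y_{k,l},\qquad Y_{k,l}\in L(\C^S).
\]
Since $Y$ is Hermitian, $Y_{k,l}^\dag=Y_{l,k}$. Then $\tr_K(\sqrt{X})=\sum_k Y_{k,k}$. Because $X=Y^2$, the diagonal blocks of $X$ are $X_{k,k}=\sum_{l}Y_{k,l}Y_{l,k}=\sum_l Y_{k,l}Y_{k,l}^\dag$. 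Hence
\[
\tr_K(X)=\sum_{k,l\in[K]}Y_{k,l}Y_{k,l}^\dag .
\]

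Next, drop the off-diagonal blocks. Each term $Y_{k,l}Y_{k,l}^\dag$ is positive semidefinite, so $\tr_K(X)\succeq\sum_k Y_{k,k}Y_{k,k}^\dag=\sum_k Y_{k,k}^2$. Here $Y_{k,k}$ is Hermitian, and in fact positive semidefinite as a compression of $Y\succeq0$. It therefore suffices to prove the operator Cauchy--Schwarz (variance) inequality
\[
\sum_{k\in[K]}Y_{k,k}^2-\frac1K\Big(\sum_{k\in[K]}Y_{k,k}\Big)^2\succeq 0 .
\]

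For this step, set $M:=\frac1K\sum_k Y_{k,k}$. A direct expansion gives
\[
\sum_k (Y_{k,k}-M)^2=\sum_k Y_{k,k}^2-2KM^2+KM^2=\sum_k Y_{k,k}^2-KM^2 .
\]
The left-hand side is a sum of squares of Hermitian operators, hence positive semidefinite. The right-hand side is exactly the expression above, since $KM^2=\frac1K(\sum_k Y_{k,k})^2$. Combining this with the previous step gives $\tr_K(X)-\frac1K(\tr_K\sqrt{X})^2\succeq\sum_{k\ne l}Y_{k,l}Y_{k,l}^\dag+\sum_k(Y_{k,k}-M)^2\succeq0$.

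The only point needing care is the block identity for $\tr_K(Y^2)$: it uses $Y_{l,k}=Y_{k,l}^\dag$ so that every summand is of the form $ZZ^\dag$. There is no real obstacle beyond this bookkeeping; the rest is the expansion of a variance.
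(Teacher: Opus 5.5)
Your proof is correct and is essentially the paper's argument. The paper centers at $X'=\frac1K I_K\otimes\tr_K\sqrt{X}$ and uses $\tr_K\big((\sqrt{X}-X')^2\big)\succeq 0$. Written in blocks, this is exactly your identity $\tr_K(X)-\frac1K(\tr_K\sqrt{X})^2=\sum_{k\neq l}Y_{k,l}Y_{k,l}^\dag+\sum_k(Y_{k,k}-M)^2$, so your final $\succeq$ is in fact an equality, and your two-step bookkeeping just unpacks the paper's one-line variance argument.
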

\begin{proof}
    The proof follows a simple variance-like 
        argument. Let $X':=\frac{1}{K}I_{K}\otimes \tr_K(\sqrt{X})$. Note that $X'\succeq 0$ hence it is Hermitian, and therefore also $\sqrt{X}-X'$. In particular 
    \[(\sqrt{X}-X')^2=(\sqrt{X}-X')(\sqrt{X}-X')^\dag\succeq 0.\]
    Since the partial trace operation is CPTP, we have 
    \begin{align}
        \tr_K\p{(\sqrt{X}-X')^2}\succeq 0. \label{eq:partialTracePositiveSquare}
    \end{align}
    On the other hand, 
\begin{align*}
    \tr_K\p{(\sqrt{X}-X')^2}&=\tr_K\p{X- \frac{1}{K}\sqrt{X}(I_{K}\otimes tr_K(\sqrt{X})) -\frac{1}{K}(I_{K}\otimes \tr_K(\sqrt{X}))\sqrt{X} +\frac{1}{K^2}(I_{K}\otimes \tr_K(\sqrt{X})^2}\\
    &=\tr_{K}(X)-\frac{1}{K}\tr_K\p{\sqrt{X}}^2,
\end{align*}
where in the last equality we used the equality 
\[\tr_K\p{\sqrt{X}(I_{K}\otimes \tr_K(\sqrt{X}))}=\tr_K\p{(I_{K}\otimes \tr_K(\sqrt{X}))\sqrt{X}}=\tr_K\p{\sqrt{X}}^2,\]
which follows from Lemma~\ref{lem:tracPropComposite}.
\end{proof}

 \section{Stinespring dilation and the continuity theorem}\label{sec:Stinespring}
 Stinespring theorem~\cite{Stinespring1955Positive} gives a characterization of completely positive superoperators via an auxilary system and  dilation map.  Let $\calH$ and $\calH'$ be finite-dimensional Hilbert spaces and  let $\calN:L(\calH)\to L(\calH')$ be a CP map. In the Schr\"{o}dinger picture, Stinespring theorem implies that $\calN$ is of the form 
 \begin{equation}
     \calN(\rho)=\tr_{M}(V\rho V^\dag),\label{eq:StinepringSchrod}
 \end{equation}
 where $\calH_M$ is a finite-dimensional reference system and $W:\calH\to \calH'\times \calH_M $ is called a dilation operator, which happens to be an isometry when $\calN$ is a quantum channel (namely, it is also trace preserving). In the Heisenberg picture, the same Stinespring dilation operator also characterizes the action of the Hilbert-Schmidt adjoint of $\calN$:
 \begin{equation}
     \calN^{\dag}(\rho)=V^\dag (\rho\otimes I_M) V.\label{eq:StinespringHeis}
 \end{equation}
 Let us define Stinespring's dilations.
  \begin{definition}[Stinespring dilation] \label{def:Stinespring} For a CP map $\calN:L(\calH)\to L(\calH')$,  a Stinespring representation is a pair $(\calH_M,V)$ giving rise to \eqref{eq:StinepringSchrod} and \eqref{eq:StinespringHeis}. 
  A Stinespring dilation $(\calH_M,V)$ is called \emph{minimal} if it satisfies
 \begin{equation}
     \Span\ppp{(X\otimes I_M)V_{\calN} \ket{\psi} ~:~ \ket{\psi}\in \calH, X\in L(\calH')}=\calH'\otimes \calH_M,\label{eq:minStinespring}. 
 \end{equation}
  \end{definition}
 For a CP map with Kraus operators $\calE=\ppp{E_i}_{i\in [S]}$ the corresponding \textit{canonical Stinespring dilation} $V:\calH\to \calH'\otimes \calH_{S}$ is given by
 \[V\ket{\psi}=\sum_{i\in[S]}E_i\ket{\psi}  \otimes \ket{i}, \]
 where $\ppp{\ket{i}}_{i\in [S]}$ is an orthonormal basis for $\calH_{S}$. We refer to \cite{watrous2018theory,wilde2017quantum} for  further information on finite-dimensional Stinespring representations.

 Stinespring dilation continuity theorem for channels \cite[Theorem~1]{Kretschmann2008Information} asserts that two quantum channels are close if and only if they admit close dilation operators. While the continuity theorem was originally stated in the Heisenberg picture, we formulate it in the Schr\"{o}dinger picture, using the duality lemma (Lemma~\ref{lem:Holder duality}).
 \begin{theorem}[Stinespring dilation continuity theorem for channels \cite{Kretschmann2008Information}]\label{th:SteinContChannels}
     Let $\calH$ and $\calH'$ be finite-dimensional Hilbert space and   $\calN,M:L(\calH)\to L(\calH')$ be  CPTP maps. Then
     \[\inf_{V_{\calN},V_{\calM}}\norm{V_{\calN}-V_{\calM}}_\infty^2\leq \norm{\calN -\calM}_{\s{cb},1}=\norm{\calN^\dag -\calM^\dag}_{\s{cb},\infty}\leq \inf_{V_{\calN},V_{\calM}}2 \norm{V_{\calN}-V_{\calM}}_\infty, \]
     where the minimization runs over all dilation operators of $\calN$ and $\calM$ in a joint reference system.  
     More generally, it was proved that the minimal distance between Stinespring dilations is in fact the Bures distance:
     \begin{equation}
        d_{\mathrm{wc}}(\calN,\calM)= \frac{1}{\sqrt{2} }\inf_{V_{\calN},V_{\calM}}\norm{V_{\calN}-V_{\calM}}_\infty.\label{eq:BuresDistanceStinespring}
     \end{equation}
 \end{theorem}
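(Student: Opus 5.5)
The approach is to prove the Bures-distance identity \eqref{eq:BuresDistanceStinespring} first, and then derive the two norm inequalities from it. The key tools are Uhlmann's theorem and a minimax exchange over input states and environment contractions.

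First I fix dilations $V_{\calN},V_{\calM}:\calH\to\calH'\otimes\calH_M$ in a common (sufficiently large) environment. For a purification $\ket{\psi_\rho}$ of $\rho$, the vectors $(I\otimes V_{\calN})\ket{\psi_\rho}$ and $(I\otimes V_{\calM})\ket{\psi_\rho}$ purify the two outputs $(I\otimes\calN)(\psi_\rho)$ and $(I\otimes\calM)(\psi_\rho)$. All purifications in a common space differ by a unitary $u$ on $\calH_M$, so Uhlmann's theorem gives
\[
\calF\big((I\otimes\calN)(\psi_\rho),(I\otimes\calM)(\psi_\rho)\big)=\max_{u}\,\mathrm{Re}\,\tr\big(\rho\, V_{\calN}^\dag (I\otimes u)V_{\calM}\big).
\]
The phase can be absorbed into $u$, which lets me write a real part instead of an absolute value. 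On the dilation side, both $V_{\calN}$ and $(I\otimes u)V_{\calM}$ are isometries, so for unit $\ket{\psi}$ (purified by a reference if needed),
\[
\norm{(V_{\calN}-(I\otimes u)V_{\calM})\ket\psi}^2=2-2\,\mathrm{Re}\bra\psi V_{\calN}^\dag(I\otimes u)V_{\calM}\ket\psi .
\]
Maximizing this over $\ket\psi$, equivalently over $\rho$ after linearization, shows that $\tfrac12\norm{V_{\calN}-(I\otimes u)V_{\calM}}_\infty^2 = 1-\min_\rho \mathrm{Re}\,\tr(\rho V_{\calN}^\dag (I\otimes u)V_{\calM})$. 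Taking the infimum over $u$ then yields $\tfrac12\inf\norm{V_{\calN}-V_{\calM}}^2_\infty = 1-\max_u\min_\rho(\cdots)$. By contrast, $d_{\mathrm{wc}}^2=1-\min_\rho\max_u(\cdots)$. The identity therefore reduces to exchanging $\min_\rho$ and $\max_u$.

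The minimax step is the main obstacle. The function $(\rho,T)\mapsto\mathrm{Re}\,\tr(\rho V_{\calN}^\dag(I\otimes T)V_{\calM})$ is bilinear, but the unitaries do not form a convex set. I would replace them by the unit ball of contractions $\norm{T}_\infty\le1$, which is compact and convex and is the convex hull of the unitaries. Sion's theorem (stated above) then applies with $\rho$ ranging over the compact convex set $D(\calH)$.

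To return from contractions to genuine dilations, I would enlarge the environment: a contraction $T$ on $\calH_M$ dilates to a unitary on $\calH_M\oplus\calH_M$ via the Halmos construction. Padding $V_{\calM}$ with zeros in the second block turns $(I\otimes T)V_{\calM}$ into the compression of a unitary image, and the isometry identity above then still holds for the padded dilations. This step is also why the infimum must be taken over dilations in a joint reference system of arbitrary dimension.

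Once the identity is in hand, both inequalities follow from it.

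\emph{Left inequality.} The Fuchs--van de Graaf bound $1-\calF\le\tfrac12\norm{\cdot}_1$, applied at the worst input, gives $\inf\norm{V_{\calN}-V_{\calM}}_\infty^2=2d_{\mathrm{wc}}^2\le\norm{\calN-\calM}_\diamond$. This equals $\norm{\calN-\calM}_{\s{cb},1}$ by Lemma~\ref{lem:superoperatorNorms}, since $\calN-\calM$ is Hermitian preserving.

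\emph{Right inequality.} I would bound directly. Write $VXV^\dag-WXW^\dag=(V-W)XV^\dag+WX(V-W)^\dag$, where the dilations are tensored with an arbitrary reference. Partial traces are trace-norm contractive and $\norm{V}_\infty=\norm{W}_\infty=1$, so $\norm{(\calN-\calM)\otimes I(X)}_1\le 2\norm{V-W}_\infty\norm{X}_1$.

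\emph{Middle equality.} This is Lemma~\ref{lem:Holder duality}.
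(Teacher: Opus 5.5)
Your proof is correct. The paper gives no argument of its own for this statement. It quotes the result from Kretschmann--Schlingemann--Werner, and the only step it performs is the passage to the Schr\"odinger picture via Lemma~\ref{lem:Holder duality}, which is also how you obtain the middle equality. Your proposal is therefore a self-contained replacement for the citation rather than an alternative to an argument in the text.

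Your argument has four ingredients:
\begin{itemize}
\item Uhlmann's theorem, to express both $2d_{\mathrm{wc}}^2$ and $\inf\norm{V_\calN-V_\calM}_\infty^2$ through $\mathrm{Re}\,\tr(\rho V_\calN^\dag(I\otimes T)V_\calM)$;
\item Sion's theorem, to exchange the optimizations over states and contractions;
\item the Halmos dilation, to return from contractions to genuine dilations;
\item Fuchs--van de Graaf and the triangle bound, which give both inequalities as corollaries of the identity.
\end{itemize}
Apart from the Halmos construction, these are tools already present in the paper. The argument also gives slightly more than the quoted statement: the infimum is attained once the environment of any fixed pair of dilations is doubled.

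It also shows that the direction the paper actually uses is the elementary half. In Step~5 of Appendix~\ref{app:PoorfBOdecoder} and in Theorem~\ref{th:WCpetz}, the paper needs only $d_{\mathrm{wc}}\le\frac{1}{\sqrt2}\norm{V_1-V_2}_\infty$ for one specific pair of isometric dilations. That requires only $\mathrm{Re}\,\tr(\rho V_1^\dag V_2)\le\calF\big((I\otimes\calN)(\psi_\rho),(I\otimes\calM)(\psi_\rho)\big)$ for each $\rho$, with no minimax.

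What the citation buys in return is generality. The KSW continuity theorem holds for arbitrary CP maps (Theorem~\ref{th:SteinContCP}, Lemma~\ref{lem:StinespringCont}). The paper needs this in Step~1 of Appendix~\ref{app:PoorfBOdecoder}, where $\widehat{\calM}^Q_\calE$ and $\calM_\lambda$ are not trace preserving. Your central identity $\norm{(V-W)\psi}^2=2-2\,\mathrm{Re}\langle\psi|V^\dag W|\psi\rangle$ requires both dilations to be isometries, so it does not carry over verbatim.

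There are three bookkeeping points to tighten when writing it out.
\begin{itemize}
\item The equality $\tfrac12\inf\norm{V_\calN-V_\calM}_\infty^2=1-\max_u\min_\rho(\cdots)$ holds only within a fixed environment, where dilations of one channel differ by environment unitaries. The global statement needs two further facts. For the lower bound, use the weak inequality $\max\min\le\min\max=\calF_{\mathrm{wc}}$ in every environment; this holds because Uhlmann's theorem applies in any environment that carries a dilation. For attainment, use the doubling.
\item In the doubling, embed both $V_\calN$ and $V_\calM$ in the first summand of $\calH_M\oplus\calH_M$. Take $(I\otimes\tilde U)\tilde V_\calM$ as the new dilation of $\calM$, so that $\tilde V_\calN^\dag(I\otimes\tilde U)\tilde V_\calM=V_\calN^\dag(I\otimes T)V_\calM$.
\item To identify $\min_\rho\max_{\norm{T}_\infty\le1}$ with $\calF_{\mathrm{wc}}$, use that for fixed $\rho$ the real-linear functional attains its maximum over the contraction ball at a unitary.
\end{itemize}
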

 
In \cite{Kretschmann2008Continuity}, the authors extended the continuity theorem from the finite-dimensional channel scenario to general CP maps on C$^*$-algebras. If $\mathscr{A}$ is a C$^*$-algebra, $\calH$ is a Hilbert space, and $\calN:\mathscr{A}\to  L(\calH)$ is CP map, then it admits a Stinespring representation: $(\pi,\calK, V)$  where $\calK$ is a Hilbert space, $\pi:\mathscr{A}\to L(\calK)$ is a $^*$-homomorphism and $V:\calH\to \calK$ is an operator such that for any $X\in \mathscr{A}$
\[\calN(X)=V^{\dag}\pi(X)V.\]
 \begin{theorem}[Stinespring dilation continuity theorem for CP maps]\label{th:SteinContCP}
    Suppose that $\mathscr{A}$ is a C$^*$-algebra, $\calH'$ is a Hilbert space, and $\calN,\calM:\mathscr{A}\to  L(\calH')$ are CP maps. Then
     \begin{equation}
         \frac{\norm{\calN -\calM}_{\s{cb},\infty}}{\sqrt{\norm{\calN}_{\s{cb},\infty}}+\sqrt{\norm{\calM}_{\s{cb},\infty}}} \leq \inf_{V_{\calN},V_{\calM}}\norm{V_{\calN}-V_{\calM}}_\infty\leq \sqrt{\norm{\calN -\calM}_{\s{cb},\infty}},\label{eq:StinespringContCP}
     \end{equation}
     where the minimization runs over all dilations $(\pi,\calK,V_{\calM}),(\pi,\calK,V_{\calN})$  of $\calM$ and $\calM$ via the same $^*$-homomorphism. Furthermore, there exists a Hilbert space $\calK$, a $^*$-homomorphism $\pi$ and dilation operators $V_{\calN}$,$V_{\calM}$ that realize the minimization of \eqref{eq:StinespringContCP}.
 \end{theorem}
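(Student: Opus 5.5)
The plan has two parts. The left-hand inequality in \eqref{eq:StinespringContCP} is a direct algebraic estimate valid for any pair of dilations through a common $^*$-homomorphism. The right-hand inequality requires constructing a good pair, and I would obtain it through a minimax argument over the commutant. Throughout I would assume $\mathscr{A}$ is unital (the finite-dimensional case is all the paper needs). In the non-unital case one replaces $\pi(1)$ by an approximate unit and takes limits.

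\emph{Lower bound.} Fix dilations $(\pi,\calK,V_{\calN})$ and $(\pi,\calK,V_{\calM})$. I would use the telescoping identity
\[
\calN(X)-\calM(X)=(V_{\calN}-V_{\calM})^\dag\pi(X)V_{\calN}+V_{\calM}^\dag\pi(X)(V_{\calN}-V_{\calM}).
\]
Tensoring with $I_M$ replaces $\pi$ by the $^*$-homomorphism $\pi\otimes I_M$, which is contractive. Hence $\norm{\calN-\calM}_{\s{cb},\infty}\le\norm{V_{\calN}-V_{\calM}}_\infty\p{\norm{V_{\calN}}_\infty+\norm{V_{\calM}}_\infty}$. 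Finally, $\norm{V_{\calN}}_\infty^2=\norm{\calN(1)}_\infty=\norm{\calN}_{\s{cb},\infty}$ for CP maps by the Russo--Dye/Paulsen norm identity, and likewise for $\calM$. Taking the infimum over dilations gives the left inequality.

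\emph{Upper bound.} First I would fix a single common dilation of both maps. Take a minimal dilation $(\pi,\calK,V)$ of $\calN+\calM$. By the Arveson--Radon--Nikodym theorem, $\calN(X)=V^\dag T_{\calN}\pi(X)V$ and $\calM(X)=V^\dag T_{\calM}\pi(X)V$ with $0\le T_{\calN},T_{\calM}\in\pi(\mathscr{A})'$. Setting $V_{\calN}=T_{\calN}^{1/2}V$ and $V_{\calM}=T_{\calM}^{1/2}V$ gives dilations through the same $\pi$. Any other admissible $V_{\calM}$ is then of the form $uV_{\calM}$ with $u$ a partial isometry or contraction in $\pi(\mathscr{A})'$; this follows after enlarging $\calK$ by tensoring with an ancilla, which keeps all such dilations in one space.

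This reduces the task to bounding $\inf_{u}\norm{V_{\calN}-uV_{\calM}}_\infty^2$. Writing $\norm{W}_\infty^2=\sup_\rho\tr(\rho W^\dag W)$ over density operators $\rho$ on $\calH'$, the quantity becomes
\[
\inf_{\norm{u}\le1}\sup_\rho\Big[\tr\rho\,\calN(1)+\tr\rho\, V_{\calM}^\dag u^\dag uV_{\calM}-2\,\mathrm{Re}\,\tr\p{\rho V_{\calN}^\dag uV_{\calM}}\Big].
\]
I would bound $u^\dag u\le 1$ so that the objective is affine in $u$ and affine in $\rho$. The unit ball of the von Neumann algebra $\pi(\mathscr{A})'$ is weak$^*$ compact and convex, so Sion's minimax theorem (stated earlier) lets me exchange $\inf_u$ and $\sup_\rho$. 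Afterwards I would replace the contraction by a unitary via Russo--Dye extreme-point arguments, or accept contractions, since the final equality is only needed as an inequality.

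For fixed $\rho$, the quantity $\sup_{\norm{u}\le1}\mathrm{Re}\,\tr(\rho V_{\calN}^\dag uV_{\calM})$ is, by Uhlmann's theorem, the fidelity between the two positive functionals $X\mapsto\tr(\rho\,\calN(X))$ and $X\mapsto\tr(\rho\,\calM(X))$ on $\mathscr{A}$. Their vector representatives are $V_{\calN}\rho^{1/2}$ and $V_{\calM}\rho^{1/2}$, and purifications differ by elements of the commutant. The Fuchs--van de Graaf inequality $a+b-2F\le\norm{\omega_{\calN}-\omega_{\calM}}$ then bounds the bracket by $\sup_{\norm{X}\le1}\abs{\tr\rho(\calN-\calM)(X)}\le\norm{\calN-\calM}_{\s{cb},\infty}$. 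Taking square roots gives the right inequality.

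Attainment of the infimum would follow from compactness of the commutant's unit ball together with weak$^*$ continuity of $u\mapsto\norm{V_{\calN}-uV_{\calM}}$ (lower semicontinuity suffices). I expect the main obstacle to be the Uhlmann step in this C$^*$-algebraic form, namely showing that all vector representatives of a functional in the fixed representation $\pi$ are related by commutant contractions. Here the minimality of $V$ for $\calN+\calM$ and the enlargement by an ancilla must be handled carefully.
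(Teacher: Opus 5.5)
The paper does not prove this theorem; it quotes it from \cite{Kretschmann2008Continuity}, so your outline has to stand on its own. Your lower bound is fine. If $\pi$ is not unital, first replace $V$ by $\pi(1)V$, which does not increase $\norm{V_\calN-V_\calM}_\infty$. The plan for the upper bound is also sound: fix a common dilation, relax to contractions in $\pi(\mathscr A)'$, apply Sion over the weak$^*$-compact ball, then finish with a fidelity/norm estimate. The gap is the Uhlmann step. As representatives of $\omega^\rho_\calN=\tr(\rho\,\calN(\cdot))$ and $\omega^\rho_\calM$, the vectors $V_\calN\rho^{1/2}$ and $V_\calM\rho^{1/2}$ live in the Hilbert--Schmidt space $\calK\otimes\overline{\calH'}$, with $\mathscr A$ acting by $\pi(\cdot)\otimes I$. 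The commutant there is $\pi(\mathscr A)'\,\bar{\otimes}\,L(\overline{\calH'})$, which also contains the right multiplications $W\mapsto Wy$. Uhlmann's theorem computes $F(\omega^\rho_\calN,\omega^\rho_\calM)$ as a supremum over that larger unit ball. Your supremum over $u\otimes I$ is therefore only $\le F$, strictly so in general. That is the wrong direction: to bound the bracket from above you need a \emph{lower} bound on $\sup_u\mathrm{Re}\,\tr(\rho V_\calN^\dag uV_\calM)$. For pure $\rho$ the two suprema coincide, which is why the slip is easy to miss. After the minimax swap, however, the supremum over $\rho$ must range over mixed states.

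A quick check confirms the chain cannot be right as written. Complete boundedness is never used, since your last step bounds by the plain norm, so the argument would give $\inf\norm{V_\calN-V_\calM}_\infty^2\le\norm{\calN-\calM}$. That is false. Take the Werner--Holevo channels $\Phi_\pm(\rho)=(\tr(\rho)I\pm\rho^T)/(d\pm1)$ with $\calN=\Phi_+^\dag$ and $\calM=\Phi_-^\dag$. The plain norm of the difference is $O(1/d)$, whereas $\norm{\calN-\calM}_{\s{cb},\infty}=\norm{\Phi_+-\Phi_-}_\diamond=2$, so your own lower bound forces $\inf\norm{V_\calN-V_\calM}_\infty\ge 1$.

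The repair is to stabilize. Take $\rho$ of finite rank $n$ with purification $\Omega_\rho\in\calH'\otimes\C^n$. Define $\tilde\omega_\calN(X)=\langle\Omega_\rho,(\calN\otimes\mathrm{id}_n)(X)\Omega_\rho\rangle$ on $\mathscr A\otimes L(\C^n)$; it is represented by $(V_\calN\otimes I)\Omega_\rho$ in $\pi\otimes\mathrm{id}_n$. The commutant of $(\pi\otimes\mathrm{id}_n)(\mathscr A\otimes L(\C^n))$ is exactly $\pi(\mathscr A)'\otimes I_n$. Moreover, $\langle(V_\calN\otimes I)\Omega_\rho,(uV_\calM\otimes I)\Omega_\rho\rangle=\tr(\rho V_\calN^\dag uV_\calM)$. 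Hence your supremum equals $F(\tilde\omega_\calN,\tilde\omega_\calM)$ exactly. Fuchs--van de Graaf then bounds the bracket by $\norm{\tilde\omega_\calN-\tilde\omega_\calM}\le\norm{(\calN-\calM)\otimes\mathrm{id}_n}\le\norm{\calN-\calM}_{\s{cb},\infty}$. This is where the cb norm genuinely enters. It is the same purified fidelity that underlies \eqref{eq:BuresDistanceStinespring}.

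A secondary issue concerns contractions. Russo--Dye does not help, because you are minimizing a convex function of $u$. Also, a strict contraction $u$ does not make $uV_\calM$ a dilation of $\calM$; it only dilates $V_\calM^\dag u^\dag u\,\pi(\cdot)V_\calM$. Instead, pass to $\pi\oplus\pi$ and use the Halmos unitary $U=\begin{pmatrix}u&(I-uu^\dag)^{1/2}\\(I-u^\dag u)^{1/2}&-u^\dag\end{pmatrix}$, which lies in the commutant of $\pi\oplus\pi$. Then $\norm{V_\calN\oplus0-U(V_\calM\oplus0)}_\infty^2=\norm{\calN(1)+\calM(1)-2\,\mathrm{Re}\,V_\calN^\dag uV_\calM}_\infty$ exactly. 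This gives an honest pair of dilations realizing your relaxed value. Combined with weak$^*$ compactness and lower semicontinuity, it also yields the attainment claim.
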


 Let us translate the CP map version of Stinespring continuity to our finite-dimensional setting. Let $\calH$ and $\calH'$ be finite-dimensional Hilbert spaces and let $\calN,\calM:L(\calH)\to L(\calH')$ be CP maps. Note that also $\calN^\dag,\calM^\dag: L(\calH')\to L(\calH)$ are CP maps (see \cite[Proposition 2.18]{watrous2018theory}).  Applying Theorem~\ref{th:SteinContCP} for $\calN^\dag$ and $\calM^\dag$ with $\mathscr{A}=L(\calH')$, we conclude that there exists a Hilbert space $\calK$, a $^*$-homomorphism $\pi:L(\calH')\to L(\calK)$ and dilation operators $\hat{V}_{\calN},\hat{V}_{\calM}:\calH\to \calK$ such that
 \[\calN^{\dag}(X)=\hat{V}_{\calN}\pi(X)\hat{V}_{\calN}, \quad  \calM^\dag(X) = \hat{V}_{\calM} \pi(X)\hat{V}_{\calM},\]
 and
 \begin{equation}
     \frac{\norm{\calN^\dag -\calM^\dag}_{\s{cb},\infty}}{\sqrt{\norm{\calN^\dag}_{\s{cb},\infty}}+\sqrt{\norm{\calM^\dag}_{\s{cb},\infty}}} \leq \inf_{V_{\calN},V_{\calM}}\norm{V_{\calN}-V_{\calM}}_\infty=\norm{\hat{V}_{\calN}-\hat{V}_{\calM}}_\infty\leq \sqrt{\norm{\calN^\dag -\calM^\dag}_{\s{cb},\infty}}.\label{eq:ContinuityCP}
 \end{equation}
 We recall that any $*$-homomorphism $\pi: L(\calH')\to L(\calK)$ from a finite-dimensional full operator algebra is unitarily equivalent to a $*$-homomorphism $\calI^M :L(\calH')\to L(\calH'\otimes \calH_M)$, $\calH_M=\C^M$ of  the form $\calI^M(X)=X\otimes I_M$ (see \cite[Corollary III.1.2]{Davidson1996CStar}). That is, there exists a unitary $U:\calK\to \calH\otimes \calH_M$ such that $\pi$ factors as
 \[\pi(X)=U(X\otimes I_M)U^\dag.\]
 In particular, without loss of generality, we can restrict ourself to representations of the form $(\calI^M,\calH'\otimes \calH_M,V)$.

 Stinespring representation is clearly not unique. However, it is well known that any two Stinespring representations of a CP map  $(\pi,\calK,V_{\calN})$, and $(\hat{\pi},\hat{K},\hat{V}_{\calN})$ are related by a partial isometry (see \cite[Eq. (2)]{Kretschmann2008Continuity}). Namely, there exists a partial isometry $W:\calK\to\hat{K}$ such that 
 \begin{equation}
     WV_{\calN}=\hat{V}_{\calN}, \quad W^\dag \hat{V}_{\calN}=V_{\calN}, \quad  W\pi(X)=\hat{\pi}(X)W^\dag.\label{eq:StineEquiv}
 \end{equation}
 In the finite-dimensional case where $(\pi,\calK)=(\calI^M,\calH'\otimes \calH_M)$ and $(\hat{\pi},\hat{\calK})=(\calI^{\hat{M}},\calH'\otimes \calH_{\hat{M}})$, \eqref{eq:StineEquiv} implies that $W$ is of the form $W=I_{\calH'}\otimes J$, where $J:\calH_M\to \calH_{\hat{M}} $  is a partial isometry. Furthermore, if $V_{\calN}$ is minimal in the sense of \eqref{eq:minStinespring}, then $J$ is an actual isometry. We refer to \cite[Section~III.B]{Kretschmann2008Information} for a complete derivation of the above statement. 

 We now combine combine the above conclusion with \eqref{eq:StinepringSchrod}, \eqref{eq:StinespringHeis}, \eqref{eq:ContinuityCP}, \eqref{eq:StineEquiv}, and Lemma~\ref{lem:Holder duality} to obtain the following finite-dimensional version of Stinespring continuity theorem for CP maps: 
 \begin{lemma}[Stinespring dilation continuity for finite-dimensional CP maps]\label{lem:StinespringCont}
     Let $\calH$ and $\calH'$ be finite-dimensional Hilbert spaces and $\calN,\calM:L(\calH)\to L(\calH')$ be CP maps with Stinespring dilation operators $V_{\calN}:\calH\to \calH' \otimes \calH_{M_{\calN}}$ and $V_{\calM}:\calH\to \calH' \otimes \calH_{M_{\calM}}$ respectively. Then there exists $M\in \N$ and partial isometries $J_{\calN}:\calH_{M_{\calN}}\to \calH_M$, $J_{\calM}:\calH_{M_{\calM}}\to \calH_M$ such that 
      \begin{equation*}
     \frac{\norm{\calN  -\calM }_{\s{cb},1}}{\sqrt{\norm{\calN }_{\s{cb},1}}+\sqrt{\norm{\calM }_{\s{cb},1 }}} \leq \inf_{\Tilde{V}_{\calN},\Tilde{V}_{\calM}}\norm{\Tilde{V}_{\calN}-\Tilde{V}_{\calM}}_\infty=\norm{(I_{\calH'}\otimes J_{\calN})V_{\calN}-(I_{\calH'}\otimes J_{\calM}){V}_{\calM}}_\infty\leq \sqrt{\norm{\calN  -\calM }_{\s{cb}, 1}}.
 \end{equation*}
 and 
 \begin{align*}
      (I_{\calH'}\otimes J_{\calN}^\dag J_{\calN})V_{\calN}=V_{\calN}, \quad (I_{\calH'}\otimes J_{\calM}^\dag J_{\calM})V_{\calM}=V_{\calM}.
 \end{align*}
 If $V_{\calN}$ (or $V_{\calM}$) is minimal in the sense of \eqref{eq:minStinespring}, then $J_{\calN}$ (or $J_{\calM}$) is an isometry.
 \end{lemma}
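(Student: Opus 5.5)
The plan is to derive the lemma from the C$^*$-algebraic continuity theorem (Theorem~\ref{th:SteinContCP}) applied to the Heisenberg-picture maps. We then translate everything back to the Schr\"odinger picture, in three steps.

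\emph{Step 1: the optimal dilation pair.} Since $\calN,\calM$ are CP, so are $\calN^\dag,\calM^\dag:L(\calH')\to L(\calH)$. Applying Theorem~\ref{th:SteinContCP} with $\mathscr{A}=L(\calH')$ gives a Hilbert space $\calK$, a $^*$-homomorphism $\pi$, and operators $\hat V_{\calN},\hat V_{\calM}$ attaining the infimum in \eqref{eq:StinespringContCP}. By the structure of $^*$-representations of the full matrix algebra, $\pi$ is unitarily equivalent to $\calI^M(X)=X\otimes I_M$. Absorbing that unitary into $\hat V_{\calN},\hat V_{\calM}$ does not change the $\infty$-norm of their difference. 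Hence we may assume $\hat V_{\calN},\hat V_{\calM}:\calH\to\calH'\otimes\calH_M$ with $\calN^\dag(X)=\hat V_{\calN}^\dag(X\otimes I_M)\hat V_{\calN}$, and similarly for $\calM$. By \eqref{eq:StinespringHeis} and \eqref{eq:StinepringSchrod}, these are Schr\"odinger-picture dilations: $\calN(\rho)=\tr_M(\hat V_{\calN}\rho\hat V_{\calN}^\dag)$.

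\emph{Step 2: converting the norms.} Lemma~\ref{lem:Holder duality} gives $\norm{\calN^\dag-\calM^\dag}_{\s{cb},\infty}=\norm{\calN-\calM}_{\s{cb},1}$, and likewise $\norm{\calN^\dag}_{\s{cb},\infty}=\norm{\calN}_{\s{cb},1}$ and $\norm{\calM^\dag}_{\s{cb},\infty}=\norm{\calM}_{\s{cb},1}$. Substituting these into \eqref{eq:ContinuityCP} produces the two-sided bound of the lemma, with the middle term equal to $\norm{\hat V_{\calN}-\hat V_{\calM}}_\infty$.

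\emph{Step 3: realizing the optimal pair from the given dilations.} We express $\hat V_{\calN},\hat V_{\calM}$ through the prescribed dilations $V_{\calN},V_{\calM}$. Both $V_{\calN}$ and $\hat V_{\calN}$ are Stinespring representations of the same CP map $\calN^\dag$, with tensor-type homomorphisms $\calI^{M_{\calN}}$ and $\calI^M$. By \eqref{eq:StineEquiv} there is a partial isometry $W$ with $WV_{\calN}=\hat V_{\calN}$ and $W^\dag\hat V_{\calN}=V_{\calN}$. Moreover $W$ intertwines $X\otimes I_{M_{\calN}}$ with $X\otimes I_M$ for all $X\in L(\calH')$. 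Every operator that intertwines the two tensor-type representations of the full matrix algebra $L(\calH')$ has the form $I_{\calH'}\otimes J_{\calN}$, by a commutant argument with matrix units. This gives $\hat V_{\calN}=(I_{\calH'}\otimes J_{\calN})V_{\calN}$. Combining the two relations yields
\[(I_{\calH'}\otimes J_{\calN}^\dag J_{\calN})V_{\calN}=W^\dag WV_{\calN}=V_{\calN}.\]
The same argument gives $J_{\calM}$ for $\calM$. The equality $\inf\norm{\tilde V_{\calN}-\tilde V_{\calM}}_\infty=\norm{(I\otimes J_{\calN})V_{\calN}-(I\otimes J_{\calM})V_{\calM}}_\infty$ then follows because the pair $\hat V$ attains the infimum. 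If $V_{\calN}$ is minimal in the sense of \eqref{eq:minStinespring}, the vectors $(X\otimes I)V_{\calN}\ket{\psi}$ span the whole space. Therefore $W^\dag W$, which fixes them via the intertwining relation, is the identity, so $J_{\calN}$ is an isometry.

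The main obstacle is Step 3. The issue is that \eqref{eq:StineEquiv} is usually stated with $W$ going from a minimal representation to a general one, whereas we need it from an arbitrary (possibly non-minimal) $V_{\calN}$ into the optimal $\hat V_{\calN}$. I would handle this by routing through the minimal dilation $V_{\min}$: take isometries $V_{\min}\mapsto V_{\calN}$ and $V_{\min}\mapsto\hat V_{\calN}$, then compose the adjoint of the first with the second. The composition is a partial isometry with the required intertwining and support properties, and the identity $(I\otimes J^\dag J)V_{\calN}=V_{\calN}$ holds because $V_{\calN}$ has range inside the initial space of the first isometry's image.
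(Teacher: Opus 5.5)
Your proposal is correct and follows essentially the same argument as the paper. Both apply Theorem~\ref{th:SteinContCP} to $\calN^\dag,\calM^\dag$, reduce $\pi$ to the form $X\mapsto X\otimes I_M$, convert the norms with Lemma~\ref{lem:Holder duality}, and extract $J_{\calN},J_{\calM}$ from the intertwining partial isometry of \eqref{eq:StineEquiv} via the matrix-unit commutant argument. Your detour through the minimal dilation in Step~3 is valid but unnecessary, since \eqref{eq:StineEquiv} already relates any two (possibly non-minimal) representations: define $W$ isometrically on the span of the vectors $\pi(X)V_{\calN}\ket{\psi}$ and by zero on its orthogonal complement.
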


\section{Proofs}
\subsection{Proof of Theorem~\ref{th:DecExistsWorst}}\label{app:PoorfBOdecoder}

Denote $K=\dim(Q)$, let $P$ be the projector on the code $Q$ and set $K_i:=E_i P$. We begin by translating the B\'eny-Oreshkov type assumption to the Stinespring dilation framework, which is the centerpiece of our proof.\vspace{0.2cm}

\noindent\textbf{Step 1: Stinespring dilations from B\'eny-Oreshkov superoperators.}
Consider the map $\calM^Q_{\calE}:L(\calH_{\SIN})\to L(\calH_{\SOUT})$ defined by $\calE$: 
\[\calM^Q_{\calE}(X)=\sum_{i=0}^{S-1}K_i X K_i^\dag.\]
Note that $\calM^Q_{\calE}$ is a CP map which in general is not trace preserving. Observe that $\calM_{\calE}^Q$ can also be seen as a map $L(Q)\to L(\calH_{\SOUT})$ when considering the restriction of $P$ to $Q$, which is the inclusion operator $Q\to \calH_{\SIN}$ and $P^\dag:\calH_{\SIN}\to Q$ remains the usual projection on $Q$. We note that $\calM^Q_{\calE}$ has the canonical Stinespring representation
\begin{equation}
    \calM^Q_{\calE}(X)=\tr_{S}\p{V_{\calE} X V_{\calE}^\dag}, \quad V_{\calE}\ket{\psi}=\sum_{i=0}^{S-1}K_i\ket{\psi} \otimes \ket{i}=\sum_{i=0}^{S-1}E_iP\ket{\psi} \otimes \ket{i},\label{eq:VcalEdef}
\end{equation}
where $V_{\calE}:\calH_{\SIN}\to \calH_{\SOUT}\otimes \calH_S$ is the canonical dilation operator
and  $\ppp{\ket{i}}_{i\in [S]}$ is an orthonormal basis for an $S$-dimensional reference system  $\calH_S$. Let $\widehat{\calM}^Q_{\calE}:\calH_{\SIN}\to \calH_{S}$ be the complementary CP map of $\calM^Q_{\calE}$ defined by tracing out the output space:
\[\widehat{\calM}^Q_{\calE}:= \tr_{{\calH_{\SOUT}}}\p{V_{\calE} X V_{\calE}^\dag}=\sum_{i,j=0}^{S-1}\tr(K_j^\dag K_i X) \ket{i}\bra{j}.\]
Note that for a fixed $S\times S$ matrix $\lambda$, $\calB_{\lambda,Q}^{\calE}$ equals the difference  between $\widehat{\calM}^Q_{\calE}$ and the constant map on the code space, which is defined as $\calM_{\lambda}(X):=\tr(XP)\lambda$ and satisfies $\calM_{\lambda}(X)=\tr(X)\lambda$ for $X\in L(Q)$. Indeed
\begin{align*}
    \calB_{\lambda,Q}^{\calE}(X)&=\sum_{i,j=0}^{S-1}\tr\p{(PE_j^\dag E_iP -\lambda_{i,j}P)X}\ket{i}\bra{j}
    \\
    &=\sum_{i,j=0}^{S-1}\tr(K_j^\dag K_i X) \ket{i}\bra{j}-\tr(P X) \sum_{i,j=0}^{S-1}\lambda_{i,j}\ket{i}\bra{j}=\widehat{\calM}^Q_{\calE}-\calM_{\lambda},
\end{align*}
and thus $\norm{\calB_{\lambda,Q}^{\calE}}_{\diamond}$ measures the diamond norm distance $\|\widehat{\calM}^Q_{\calE}-\calM_{\lambda}\|_\diamond$. Now let us choose a positive definite matrix $\lambda$, given by 
\begin{equation}
    \lambda_{i,j}=\frac{1}{K}\tr(PE_j^\dag E_i)=\frac{1}{K}\tr(PE_j^\dag E_iP)=\frac{1}{K}\tr(K_j^\dag K_i).\label{eq:lambdaDef}
\end{equation}
Note that $\calB_{\lambda,Q}^\calE$ is Hermitian preserving because for any Hermitian operator $X$,
\begin{align}
        \nonumber\calB_{\lambda,Q}^{\calE}(X)^\dag&= \sum_{i,j=0}^{S-1}\tr\p{(PE_j^\dag E_iP -\lambda_{i,j}P)X}^*\ket{j}\bra{i}
        \\\nonumber&=\sum_{i,j=0}^{S-1}\tr\p{X^\dag(PE_i^\dag E_jP -\lambda_{i,j}^*P)}\ket{j}\bra{i}\\
        &=\sum_{i,j=0}^{S-1}\tr\p{X(PE_i^\dag E_jP -\lambda_{j,i}P)}\ket{j}\bra{i}= \calB_{\lambda,Q}^{\calE}(X).\label{eq:HermitianPereserving}
    \end{align}
By \cite[Remark 8]{Elimelech2026Theory} and Lemma~\ref{lem:superoperatorNorms}  we have 
\begin{equation}
\norm{\widehat{\calM}^Q_{\calE}-\calM_{\lambda}}_{\s{cb},1}=\norm{\calB_{\lambda,Q}^{\calE}}_{\s{cb},1}=\norm{\calB_{\lambda,Q}^{\calE}}_{\diamond}\leq 2\zeta(\calE,Q).\label{eq:BOoperatorBoundByzeta}
\end{equation}
where the first equality follows since $\norm{\cdot}_{\s{cb},1}$ and $\norm{\cdot}_{\diamond}$ agree for Hermitian preserving operators (see Lemma~\ref{lem:superoperatorNorms}).

Consider the operator $V_{\lambda}:\calH_{\SIN}\to Q \otimes \calH_T\otimes \calH_{T'}$, where $\calH_T$ and $\calH_{T'}$ are two isomorphic copies of the space $\supp(\lambda)\subseteq \calH_S$, defined as 
\begin{equation}
    V_{\lambda}\ket{\psi}=P\ket{\psi}\otimes \ket{\psi_\lambda},\quad \text{where} \quad \ket{\psi_{\lambda}}=\sum_{i\in [T]} \sqrt{p_i}\ket{\lambda_i}\otimes\ket{\lambda'_{i}},\label{eq:VlambdaDef}
\end{equation}
and $\ppp{\ket{\lambda_i}}_{i\in[T]}$ is an orthonormal basis such that $\lambda=\sum_i p_i \ket{\lambda_i}\bra{\lambda_i}$ (which exists by the spectral theorem).
A straightforward calculation shows that $V_{\lambda}$ dilates $M_{\lambda}$. That is, for all $X\in L(\calH_{\SIN})$
\[M_{\lambda}(X)=\tr_{\calH_{\SIN}\otimes \calH_{T'}}\p{V_{\lambda} X V_{\lambda}^\dag}=\tr(XP)\lambda,\]
and in particular, if $X\in L(Q)$
\[M_{\lambda}(X)=\tr_{Q\otimes \calH_{T'}}\p{V_{\lambda} X V_{\lambda}^\dag}=\tr(X)\lambda.
\]

Using the Stinespring dilations continuity theorem (see Lemma~\ref{lem:StinespringCont}, Appendix~\ref{sec:Stinespring}) combined with \eqref{eq:BOoperatorBoundByzeta}, we conclude that there exists a finite-dimensional Hilbert space $\calH_M$ and partial isometries $J_{\calE}:\calH_{\SOUT}\to \calH_M$ and $J_{\lambda}:Q\otimes \calH_{T'}\to \calH_M$ such that 
 \begin{equation}
     \norm{(I_{\calH_S}\otimes J_{\calE})V_{\calE}-(I_{\calH_S}\otimes J_{\lambda}){V}_{\lambda}}_\infty\leq \sqrt{\norm{\widehat{\calM}^Q_{\calE}  -\calM_{\lambda} }_{\s{cb}, 1}}\leq \sqrt{2\zeta(\calE,Q)}
     \label{eq:ContinuityCPfirst}
 \end{equation}
and 
 \begin{align}
      (I_{\calH_{S}}\otimes J_{\calE}^\dag J_{\calE})V_{\calE}=V_{\calE}, \quad (I_{\calH_{S}}\otimes J_{\lambda}^\dag J_{\lambda})V_{\lambda}=V_{\lambda}.\label{eq:commute}
 \end{align}
 Further, denote 
 \[
 A=\ppp{(X\otimes I_{Q\otimes \calH_{T'}})V_{\lambda} \ket{\psi} ~:~ \ket{\psi}\in \calH_{\SIN}, X\in L(\calH_T)}
 \] 
 and note that $\Span(A)=Q\otimes \calH_T\otimes \calH_{T'}$. To see this, observe that for all $i$ and $j$, 
 $X_{i,j}=\ket{\lambda_j}\bra{\lambda_i}$ and all $\ket{\psi}$,
 \begin{equation}
     (X_{i,j}\otimes I_{Q\otimes \calH_{T'}})V_{\lambda} \ket{\psi}= \sqrt{p_i}P\ket{\psi}\otimes \ket{\lambda_j}\otimes \ket{\lambda_i'}.\label{eq:minimalS}
 \end{equation}
 Since $p_i>0$, varying\eqref{eq:minimalS} over all $\ket{\psi}\in Q$ and $i,j\in [T]$ we obtain a spanning set for $Q\otimes \calH_{T}\otimes \calH_{T'}$. In particular, $V_{\lambda}$ is minimal in the sense of Definition~\ref{def:Stinespring} and thus, by Lemma~\ref{lem:StinespringCont}, $J_{\lambda}$ is an isometry. 
 
 We are now ready to define the universal recovery map. \vspace{0.2cm}

\noindent\textbf{Step 2: Defining the recovery map.} We now use $J_{\calE}$ and $J_{\lambda}$ to define the recovery operation $\calR:L(\calH_{\SOUT})\to L(\calH_{\SIN})$. Denote the image of $J_{\lambda}$ by $\calH_{J_{\lambda}}\subseteq M$ and let $\calH_{J_{\lambda}}^\perp$ be its orthogonal complement. Let $P_{J_{\lambda}}$ and $P_{J_{\lambda}^{\perp}}$ be projections on $\calH_{J_{\lambda}}$ and $\calH_{J_{\lambda}}^\perp$, respectively. Define the operators $R_1$, $R_2$ and $R_3$ as follows:
\begin{align}
    \nonumber&R_1:\calH_{\SOUT}\to Q\otimes \calH_{T'}, && R_1:=J_{\lambda}^{\dag} J_{\calE},\\
    \nonumber&R_{2}:\calH_{\SOUT}\to \calH_{J_{\lambda}^{\perp}}, &&R_2:=P_{J_{\lambda}^\perp} J_{\calE}, \\ &R_3:\calH_{\SOUT}\to \calH_{\SOUT}, && R_3:= I_{\calH_{\SOUT}}-J_{\calE}^{\dag}J_{\calE}.\label{eq:R_iDef}
\end{align}
 We define the recovery map $\calR:L(\calH_{\SOUT} )\to L(Q)$ as 
\begin{equation}
    \calR(X)=\tr_{{T'}}\p{R_1 X R_1^\dag}+\tr\p{R_2 X R_2^\dag} \rho_Q +\tr\p{R_3 X R_3^\dag}\rho_Q,\label{eq:RecoveryBO}
\end{equation}
for some arbitrary fixed pure state $\rho_Q=\ket{\psi_Q}\bra{\psi_Q}\in D(Q)$. Note that $\calR$ is clearly a CP map since each of the terms on the right in \eqref{eq:RecoveryBO} is a composition of a map of the form form $X\to RXR^\dag$ and a partial trace, which are both CP maps. It remains to check that $\calR$ is trace preserving: 
\begin{align}
    \tr(\calR(X))&\nonumber= \tr\p{\tr_{{T'}}\p{R_1 X R_1^\dag}}+\p{\tr\p{R_2 X R_2^\dag}  +\tr\p{R_3 X R_3^\dag}}\tr(\rho_Q)\\
    &\nonumber=\tr\p{R_1 X R_1^\dag} +\tr\p{R_2 XR_2^\dag}+\tr\p{R_3 X R_3^\dag}\\
      &\nonumber=\tr\p{  R_1^\dag R_1 X} +\tr\p{R_2^\dag R_2 X}+\tr\p{ R_3^\dag R_3 X}\\
     &\label{eq:projJ}=\tr\p{(J_\calE^\dag J_{\lambda}J_{\lambda}^\dag J_{\calE} +J_\calE^\dag P_{J_\lambda^\perp} J_{\calE}+I_{\calH_{\SOUT}}-J_\calE^\dag  J_{\calE})X}
     \\&\label{eq:projJ_lambda}=\tr\p{(J_\calE^\dag P_{J_{\lambda}} J_{\calE} +J_\calE^\dag P_{J_\lambda^\perp} J_{\calE}+I_{\calH_{\SOUT}}-J_\calE^\dag  J_{\calE})X}\\
     &\label{eq:SumProj}=\tr\p{(J_\calE^\dag J_{\calE}+I_{\calH_{\SOUT}}-J_\calE^\dag  J_{\calE})X}=\tr(X).
\end{align}
In \eqref{eq:projJ} we used the fact that $R_3=I_{\calH_{\SOUT}}-J_{\calE}^\dag J_{\calE}$ is a projection operator (on the orthogonal complement of the support of $J_{\calE}$), which implies $R_3^\dag R_3=R_3$. Eq. \eqref{eq:projJ} 
follows since $J_{\lambda}$ is an isometry and therefore $J_{\lambda}J_{\lambda}^\dag=P_{J_{\lambda}}$, and in \eqref{eq:SumProj} uses $P_{J_{\lambda}}+P_{J_{\lambda}^{\perp}}=I_{M}$.\vspace{0.2cm}

\noindent\textbf{Step 3: Stinespring dilation of the action of an $\calE$-controlled channel on the code space.} We now consider the operation of a $\calE$-controlled channel on a code space, and express the its canonical Stinespring dilation in terms of $V_{\calE}$. Let $\calN_C:\calH_{\SIN}\to \calH_{\SOUT}$ be an $\calE$-controlled channel with Kraus operators $\ppp{F_k}_{k\in R}$, $R\in \N$ such that $C=(c_{k,i})_{k,i}$ is the linear coefficient matrix defined by the equation
    \[
    F_k=\sum_{i\in S}c_{k,i}E_i.
    \]
We think of $C$ as an operator $\calH_{S}\to \calH_{R}$ given by
\begin{equation}
    C=\sum_{k\in [R]}\sum_{i\in [S]}c_{k,i}\ket{k}\bra{i}, 
    \label{eq:Coperator}
\end{equation}
where $\ppp{\ket{i}}_{i \in [R]} $ and $\ppp{\ket{k}}_{i \in [S]} $ are orthonormal bases for $\calH_{R}$ and $\calH_{S}$ respectively. The $\calE$-controlled assumption implies that $C$ can be chosen such that $\norm{C}_\infty\leq 1$. Consider the operator $V_{C}:\calH_{\SIN}\to :\calH_{\SOUT}\otimes \calH_R  $ defined as $V_C=(I_{\calH_{\SOUT}}\otimes C )V_{\calE}$, 
where $V_\calE$ was defined in \eqref{eq:VcalEdef}.

\begin{lemma}\label{lem:DilationNoisyStates}
    The operator $V_C$ is a dilation operator of the channel $\calN_C$ on the code space $Q$ (which is the channel $\calN_C\circ \calN_Q$, where $\calN_Q:L(Q)\to L(\calH_{\SIN})$ is the inclusion encoding superoperator $\calN_Q(X)=PXP$). That is
    \begin{equation}
        \tr_{{R}}\p{V_{C}X V_{C}^\dag}=\calN_C\circ\calN_Q(X)\label{eq:Ncdilation}=\calN_C|_{L(Q)}(X).
            \end{equation}
\end{lemma}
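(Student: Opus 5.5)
The plan is to compute $V_C$ explicitly and then read off the partial trace over $\calH_R$. Combining \eqref{eq:VcalEdef} with the operator form \eqref{eq:Coperator} of $C$, for every $\ket{\psi}\in\calH_{\SIN}$ we get
\[
V_C\ket{\psi}=(I_{\calH_{\SOUT}}\otimes C)\sum_{i\in[S]}E_iP\ket{\psi}\otimes\ket{i}=\sum_{i\in[S]}\sum_{k\in[R]}c_{k,i}E_iP\ket{\psi}\otimes\ket{k}=\sum_{k\in[R]}F_kP\ket{\psi}\otimes\ket{k}.
\]
Thus $V_C$ coincides with the canonical Stinespring operator of the Kraus family $\ppp{F_kP}_{k\in[R]}$. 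This matches the form \eqref{eq: VC} already used in the proof of Theorem~\ref{th:WCpetz}.

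Next, for $X\in L(Q)$ (or more generally $X\in L(\calH_{\SIN})$), expand
\[
V_CXV_C^\dag=\sum_{k,l\in[R]}F_kPXPF_l^\dag\otimes\ket{k}\bra{l}.
\]
Taking the partial trace over $\calH_R$ keeps only the diagonal terms $k=l$, giving
\[
\tr_R\p{V_CXV_C^\dag}=\sum_{k}F_k(PXP)F_k^\dag=\calN_C(\calN_Q(X)).
\]
For $X\in L(Q)$ we have $PXP=X$, so this equals $\calN_C|_{L(Q)}(X)$, which is \eqref{eq:Ncdilation}.

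There is no real obstacle here; the only step needing care is bookkeeping. The paper's expository sketch writes the dilation as $(C\otimes I)V_{\calE}$, while the appendix writes it as $(I\otimes C)V_{\calE}$. One must apply $C$ to the ancilla factor $\calH_S$, mapping it to $\calH_R$, and use the index convention $F_k=\sum_i c_{k,i}E_i$ consistently with $C=\sum c_{k,i}\ket{k}\bra{i}$. Note also that the lemma does not use the contraction property $\norm{C}_\infty\leq 1$. That property only enters later, when bounding $V_C$ against the dilation of the identity.
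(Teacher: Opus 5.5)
Your proposal is correct and follows the paper's proof: you expand $(I_{\calH_{\SOUT}}\otimes C)V_{\calE}$ into the canonical Stinespring operator $\sum_k F_kP\otimes\ket{k}$ of the Kraus family $\ppp{F_kP}_k$ and then take the partial trace over $\calH_R$. You also carry out explicitly the partial-trace computation that the paper leaves as a ``straightforward calculation,'' and your remark that $\norm{C}_\infty\leq 1$ plays no role in this lemma is accurate.
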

\begin{proof}
    Note that $V_C$ acts as
\begin{align}
    V_C\ket{\psi}\nonumber&=(I_{\calH_{\SOUT}}\otimes C) V_{\calE}\ket{\psi}=I_{\calH_{\SOUT}}\otimes C \p{\sum_{i=0}^{S-1}K_i\ket{\psi} \otimes \ket{i}}\\
    &=\sum_{k=0}^{R-1} \sum_{i=0}^{S-1} c_{k,i}K_i\ket{\psi} \otimes \ket{k}=\sum_{k=0}^{R-1} F_k P\ket{\psi} \otimes \ket{k}.\label{eq:ChannelVCcanonical}
\end{align}
The above expression is the canonical Stinespring dilation for $\calN_C|_{L(Q)}=\calN_C\circ \calN_Q$ which has the Kraus representation $\ppp{F_kP}_k$ (and \eqref{eq:Ncdilation} can be verified by a straightforward calculation). 
\end{proof}\vspace{0.2cm}

\noindent\textbf{Step 4: Stinespring dilation of the decoded output for a noisy state}. Our next goal is to find a Stinespring dilation operator to obtain the decoded output $\calR\circ\calN_C|_{L(Q)}$ form its original representation \eqref{eq:RecoveryBO}. We begin by embedding images of the operators $R_1,R_2,R_3, V_{C}$ in a common reference system. Consider the space
\[\calH_{\s{Unif}}:=Q\otimes \calH_{R}\otimes \calH_{T'}\otimes \calH_{J_{\lambda}^\perp}\otimes \calH_{\SOUT}\otimes \calH_{\s{Flag}}, \quad \calH_{\s{Flag}}=\Span\ppp{\ket{1},\ket{2},\ket{3}},\]
and the environment 
\begin{equation}
    \calH_{\s{env}}:= \calH_{R}\otimes \calH_{T'}\otimes \calH_{J_{\lambda}^\perp}\otimes \calH_{\SOUT}\otimes \calH_{\s{Flag}}, \quad  \calH_{\s{Unif}}=Q\otimes \calH_{\s{env}}.\label{eq:Henv}
\end{equation}
If $\calH_{J_{\lambda}^{\perp}}$ is the zero space we replace it with $\C$. We define the embedding operators as follows: 
    \begin{equation}
\left.
\begin{aligned}
&U_1:Q\otimes \calH_{T'}\otimes \calH_R\to \calH_{\s{Unif}},
\qquad&&U_1= (I_{Q\otimes \calH_{T'}\otimes \calH_R})\otimes \ket{\psi_{J_\calE^\dag}}
\otimes \ket{\psi_{\SOUT}}\otimes \ket{1}\\
&U_2:\calH_{J_{\lambda}^\perp}\otimes \calH_R\to \calH_{\s{Unif}},
&&U_2= (I_{\calH_{J_{\lambda}^\perp}\otimes \calH_R})\otimes \ket{\psi_Q}
\otimes \ket{\psi_{T'}}\otimes \ket{\psi_{\SOUT}}\otimes \ket{2}\\
&U_3:\calH_{\SOUT}\otimes \calH_R\to \calH_{\s{Unif}},
&&U_3= (I_{\calH_{\SOUT}\otimes \calH_R})\otimes \ket{\psi_Q}
\otimes \ket{\psi_{J_{\lambda}^{\perp}}}\otimes \ket{\psi_{T'}}\otimes \ket{3},
\end{aligned}\label{eq:UiDef}
\right\}
\end{equation}
    where $ \ket{\psi_{J_{\lambda}^{\perp}}}$, $  \ket{\psi_{T'}}$ and $\ket{\psi_{\SOUT}}$ are arbitrary unit vectors in $\calH_{J_{\lambda}^\perp}$, $\calH_{T'}$ and $\calH_{\SOUT}$, respectively, and $\ket{\psi_{Q}} $ is vector such that $\ket{\psi_{Q}} \bra{\psi_{Q}} =\rho_Q$, which is the state from the definition of $\calR$ \eqref{eq:RecoveryBO}. 
    Note that $U_1,U_2$ and $U_3$ are isometries. Consider the operator $V_{\s{dec},C}:\calH_{\SIN} \to \calH_{\s{Unif}}$ defined as
    \[V_{\s{dec},C}:=U_1(I_{R}\otimes R_1)V_C+U_2(I_{R}\otimes R_2)V_C+ U_3(I_{R}\otimes R_3)V_C.\]
    \begin{lemma}\label{lem:dilationDecodedState}
        The operator $V_{\s{dec},C}$ is a Stinespring dilation for $\calR \circ\calN_C|_{L(Q)}=\calR \circ\calN_C\circ \calN_Q$, where $\calN_Q:L(Q)\to L(\calH_{\SIN})$ is the inclusion encoding map $X\to PXP$. That is, for any $X\in L(Q)$
        \[\tr_{{\s{env}}}\p{V_{\s{dec},C}X V_{\s{dec},C}^\dag}=\calR \circ\calN_C\circ\calN_Q(X).
        \]
    \end{lemma}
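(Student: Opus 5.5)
The plan is to expand $\tr_{\s{env}}(V_{\s{dec},C}XV_{\s{dec},C}^\dag)$ into nine terms $\tr_{\s{env}}\big(U_a(I_R\otimes R_a)V_CXV_C^\dag(I_R\otimes R_b)^\dag U_b^\dag\big)$, $a,b\in\{1,2,3\}$. Each term is then matched with the corresponding piece of \eqref{eq:RecoveryBO} composed with $\calN_C|_{L(Q)}$.

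\textbf{Cross terms.} Each $U_a$ tensors in the flag vector $\ket{a}\in\calH_{\s{Flag}}$, and $\calH_{\s{Flag}}$ is part of $\calH_{\s{env}}$. Since $\braket{b|a}=\delta_{ab}$, the partial trace over the flag kills every off-diagonal term with $a\neq b$. Only the three diagonal terms survive.

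\textbf{Term $a=b=1$.} Here $U_1$ acts as the identity on $Q\otimes\calH_{T'}\otimes\calH_R$ and appends fixed unit vectors on the remaining factors of $\calH_{\s{env}}$. Tracing out these factors therefore only produces the factor $1$ from the norms of the appended vectors. We are left with
\[
\tr_{T'\otimes R}\big((I_R\otimes R_1)V_CXV_C^\dag(I_R\otimes R_1^\dag)\big).
\]
Using Lemma~\ref{lem:tracPropComposite}, we first trace out $R$, which commutes past the operator $R_1$ acting on the other factor. This gives
\[
\tr_{T'}\big(R_1\tr_R(V_CXV_C^\dag)R_1^\dag\big)=\tr_{T'}\big(R_1\,\calN_C|_{L(Q)}(X)\,R_1^\dag\big)
\]
by Lemma~\ref{lem:DilationNoisyStates}. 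This is exactly the first term of $\calR\circ\calN_C|_{L(Q)}(X)$.

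\textbf{Terms $a=b=2$ and $a=b=3$.} These are handled the same way. Now the $Q$-factor of $U_a$ carries the fixed vector $\ket{\psi_Q}$, while the image of $R_a$ lies in an environment factor, namely $\calH_{J_\lambda^\perp}$ or $\calH_{\SOUT}$. Tracing over the environment gives $\rho_Q$ multiplied by the full trace $\tr(R_a\tr_R(V_CXV_C^\dag)R_a^\dag)$. By Lemma~\ref{lem:DilationNoisyStates} this equals $\tr(R_a\calN_C|_{L(Q)}(X)R_a^\dag)\,\rho_Q$, which matches the second and third terms of \eqref{eq:RecoveryBO}.

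Summing the three diagonal contributions yields the claim.

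\textbf{Main obstacle.} The expected difficulty is bookkeeping rather than anything conceptual. One must keep track of which tensor factor is the output and which belong to the environment, ordering the factors of $\calH_{\s{Unif}}$ consistently so that the $Q$-factor is never traced. One must also check that $(I_R\otimes R_a)$ is well-typed against $V_C$, whose output lies in $\calH_{\SOUT}\otimes\calH_R$. I would fix an explicit factor ordering at the start and verify each term on rank-one inputs $X=\ket{\psi_1}\bra{\psi_2}$.
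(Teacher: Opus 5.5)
Your proposal is correct and follows the paper's proof essentially step for step. The cross terms vanish by orthogonality of the flag states. The $U_1$ branch reduces, via Lemma~\ref{lem:tracPropComposite}, to $\tr_{T'}\big(R_1\tr_R(V_CXV_C^\dag)R_1^\dag\big)$. The $U_2,U_3$ branches give $\tr\big(R_i\tr_R(V_CXV_C^\dag)R_i^\dag\big)\rho_Q$, and Lemma~\ref{lem:DilationNoisyStates} together with \eqref{eq:RecoveryBO} finishes the argument.
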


    \begin{proof}
        We begin by observing that $U_1$, $U_2$ and $U_3$ are supported on orthogonal spaces (since they are 
        associated with distinct orthogonal flag states $\ket{1}\!,\ket{2}\!,$ and $\ket{3}$). In particular, when tracing out the environment, cross terms vanish and we have:
        \begin{equation}
            \tr_{{\s{env}}}\p{V_{\s{dec},C}X V_{\s{dec},C}^\dag}=\sum_{i=1,2,3}\tr_{{\s{env}}}\p{V_{\s{dec},i}X V_{\s{dec},i}^\dag},\quad V_{\s{dec},i}:=U_i(I_{R}\otimes R_i)V_C.\label{eq:KeyForDilation0}
        \end{equation}
       We proceed by showing that for any $X$ we have
    \begin{equation}
        \tr_{\s{env}}\p{V_{\s{dec},i}XV_{\s{dec},i}^\dag}=\begin{cases}
        \tr_{{T'}}\p{R_1 \tr_{R}\p{V_C X V_C^\dag} R_1^\dag} & i=1,\\
        \tr\p{R_i \tr_{R}\p{V_C X V_C^\dag} R_i^\dag}\rho_Q & i=2,3.
    \end{cases} \label{eq:KeyForDilation}
    \end{equation}
    To this end, let 
    \[\calH_{\s{env}_1}:=\calH_{J_{\lambda}^\perp}\otimes \calH_{\SOUT}\otimes \calH_{\s{Flag}}\quad \calH_{\s{env}_2}:=\calH_{T'}\otimes \calH_{\SOUT}\otimes \calH_{\s{Flag}}, \quad \calH_{\s{env}_3}:=\calH_{J_{\lambda}^\perp}\otimes \calH_{T'}\otimes \calH_{\s{Flag}}.
    \]
By the definition of $U_1$ \eqref{eq:UiDef}
    \begin{align}
        \tr_{\s{env}}\p{V_{\s{dec},1}XV_{\s{dec},1}^\dag}\nonumber&=\tr_{\s{env}}\p{(I_{R}\otimes R_1)V_C X V_C^\dag(I_{R}\otimes R_1^\dag)\otimes \ket{\psi_{J_\calE^\dag}} \bra{\psi_{J_\calE^\dag}} \otimes \ket{\psi_{\SOUT}}\bra{\psi_{\SOUT}}\otimes \ket{1}\bra{1} }\\
        &\label{eq:Compotrace}=\tr_{{\s{env}_1}}\p{(I_{R}\otimes R_1)V_C X V_C^\dag(I_{R}\otimes R_1^\dag)},
        \\&=\tr_{{T'}}\p{R_1 \tr_{R}\p{V_C X V_C^\dag} R_1^\dag},\label{eq:compoLemmaConj}
    \end{align}
    where in \eqref{eq:Compotrace} we used the fact that the partial trace has the property that, for any operator $X$ acting on a composite system $A\otimes B\otimes C$, we have $\tr_{BC}(X)=\tr_B(\tr_C(X)))$, and in \eqref{eq:compoLemmaConj} we used Lemma~\ref{lem:tracPropComposite} with $\calH_A=Q$, $\calH_B=\calH_{T'}$ and $\calH_{C}=\calH_R$.
    For $i=2$, using the same augments, we obtain:  
    \begin{align}
        \tr_{\s{env}}\p{V_{\s{dec},2}XV_{\s{dec},2}^\dag}\nonumber&\nonumber=\tr_{\s{env}}\p{\ket{\psi_{Q}} \bra{\psi_{Q}} \otimes (I_{R}\otimes R_2)V_C X V_C^\dag(I_{R}\otimes R_2^\dag)\otimes  \ket{\psi_{T'}} \bra{\psi_{T'}} \otimes \ket{\psi_{\SOUT}}\bra{\psi_{\SOUT}}\otimes \ket{2}\bra{2} }\\
        &\label{eq:rho_Q}= \tr\p{(I_{R}\otimes R_2)V_C X V_C^\dag(I_{R}\otimes R_2^\dag)}\rho_Q\\
        &=\tr\p{R_2 \tr_{R}\p{V_C X V_C^\dag} R_2^\dag},\label{eq:theSameLemmaaGAIN}
    \end{align}
    where \eqref{eq:rho_Q} uses that $\rho_{Q}=\ket{\psi_Q}\bra{\psi_Q}$ by definition, and \eqref{eq:theSameLemmaaGAIN} uses Lemma~\ref{lem:tracPropComposite} again where we identify $\calH_{A}=\C$, $\calH_B=\calH_{J_{\lambda}}$ and $\calH_C=\calH_R$. The calculation for $i=3$ follows the same sequence of steps as in the case $i=2$, and we leave the details to the reader. 
    
    We now conclude the proof by combining \eqref{eq:KeyForDilation0}, \eqref{eq:KeyForDilation} with   the noisy channel output expression of Lemma~\ref{lem:DilationNoisyStates} the definition of the $\calR$ given in \eqref{eq:RecoveryBO}:
    \begin{align*}
        \tr_{{\s{env}}}\p{V_{\s{dec},C}X V_{\s{dec},C}^\dag}&=\sum_{i=1,2,3}\tr_{{\s{env}}}\p{V_{\s{dec},i}X V_{\s{dec},i}^\dag}\\&
        =\tr_{{T'}}\p{R_1 \tr_{R}\p{V_C X V_C^\dag} R_1^\dag}  +\sum_{i=2,3.}
        \tr\p{R_i \tr_{R}\p{V_C X V_C^\dag} R_i^\dag}\rho_Q  \\
        &=\tr_{{T'}}\p{R_1  \p{\calN_{C}\circ \calN_{Q}(X)} R_1^\dag}  +\sum_{i=2,3.}
        \tr\p{R_i  \p{\calN_{C}\circ \calN_{Q}(X)} R_i^\dag}\rho_Q \\
       &= \calR\circ \calN_{C}\circ \calN_{Q}(X).
    \end{align*}
      \end{proof}
    \noindent\textbf{Step 5: Closeness of the decoded output channel to the identity channel.} We now make the final step of the proof and show that the worst-case distance between $\calR \circ \calN_C\circ \calN_Q$ and the identity channel $I_L(Q)$ is bounded by $\sqrt{ \zeta(\calE,Q)}$. To this end, by Eq.~\eqref{eq:BuresDistanceStinespring} and Lemma~\ref{lem:dilationDecodedState}, it is sufficient to find a dilation operator $V_{\lambda_C}:Q\to \calH_{\s{Unif}}$ that dilates identity channel $I_{L(Q)}$, such that $\norm{V_{\s{dec},C}-V_{\lambda_C}}_{\infty}\leq \sqrt{2\zeta(\calE,Q)}$. The Stinespring dilation operator of the constant channel $V_{\lambda}$, defined in \eqref{eq:VlambdaDef}, and the channel coefficient operator $C$, defined in \eqref{eq:Coperator}, yield a natural candidate: $(I_{\calH_{T'}\otimes Q}\otimes C) V_{\lambda}$. Since the image of this map lies in $Q\otimes \calH_R\otimes \calH_{T'}$, for it to be comparable with $V_{\s{dec},C}$, we have to embed it in the combined space using $U_1$ \eqref{eq:UiDef}. Formally, define
    \[V_{\lambda_{C}}:=U_1(I_{\calH_{T'}\otimes Q}\otimes C) V_{\lambda}.\]
    \begin{lemma} The operator $V_{\lambda_C}$ dilates the identity. That is, for all $X\in L(Q)$ we have  
    \[\tr_{\s{env}}\p{V_{\lambda_C}XV_{\lambda_C}^\dag}=X,\]
    where $\calH_{\s{env}}$ is the environment system defined in \eqref{eq:Henv}. 
    \end{lemma}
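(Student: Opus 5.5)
The plan is to reduce the statement to a normalization identity. Because $U_1$ is an isometry that tensors in fixed unit vectors, it does not affect partial traces. So it suffices to compute $(I\otimes C)V_\lambda$ explicitly and read off the reduced operator on $Q$.

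First unwind the definitions. By \eqref{eq:VlambdaDef}, for $\ket{\psi}\in Q$,
\[
(I_{Q\otimes\calH_{T'}}\otimes C)V_\lambda\ket{\psi}=\ket{\psi}\otimes (C\otimes I_{T'})\ket{\psi_\lambda}=:\ket{\psi}\otimes\ket{\phi_C}.
\]
Here $C$ acts on the $\calH_T\subseteq\calH_S$ factor and maps it into $\calH_R$, and $\ket{\phi_C}\in\calH_R\otimes\calH_{T'}$ is a fixed vector independent of $\ket{\psi}$. Applying $U_1$ from \eqref{eq:UiDef} appends the fixed unit vectors $\ket{\psi_{J_\calE^\dag}}$, $\ket{\psi_{\SOUT}}$ and $\ket{1}$. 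Hence $V_{\lambda_C}\ket{\psi}=\ket{\psi}\otimes\ket{\Phi}$ with $\ket{\Phi}\in\calH_{\s{env}}$ and $\norm{\Phi}=\norm{\phi_C}$. Tracing out the environment, as in the analogous computation for $V_{\lambda_C}$ in the proof of Theorem~\ref{th:WCpetz}, gives
\[
\tr_{\s{env}}\p{V_{\lambda_C}XV_{\lambda_C}^\dag}=\braket{\phi_C|\phi_C}\,X .
\]
It therefore remains to show $\braket{\phi_C|\phi_C}=1$.

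Next compute the norm. Since the $\ket{\lambda_i'}$ are orthonormal,
\[
\braket{\phi_C|\phi_C}=\sum_i p_i\bra{\lambda_i}C^\dag C\ket{\lambda_i}=\tr\p{C^\dag C\,\lambda}=\tr\p{C\lambda C^\dag}.
\]
Insert \eqref{eq:lambdaDef}, $\lambda_{i,j}=\frac1K\tr(K_j^\dag K_i)$, so that $\lambda=\frac1K\sum_{i,j}\tr(K_j^\dag K_i)\ket{i}\bra{j}$. Then
\[
\tr(C\lambda C^\dag)=\frac1K\sum_{k}\sum_{i,j}c_{k,i}\,\overline{c_{k,j}}\,\tr\p{K_j^\dag K_i}=\frac1K\sum_k\tr\p{(F_kP)^\dag F_kP}=\frac1K\tr\p{\calN_C(P)}=1.
\]
The last equality holds because $\calN_C$ is trace preserving and $\tr P=K$; this is the same normalization argument used in the proof of Theorem~\ref{th:WCpetz}.

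The main obstacle is bookkeeping of index conventions. Two points need care. First, the operator $C$ defined in \eqref{eq:Coperator} acts on $\calH_S$, but $V_\lambda$ outputs in $\calH_T\cong\supp(\lambda)$. I would regard $\calH_T$ as a subspace of $\calH_S$, so that $C$ restricts to it; vectors outside $\supp(\lambda)$ carry zero weight, so nothing is lost. Second, one must check that the index ordering $c_{k,i}\overline{c_{k,j}}$ paired with $\tr(K_j^\dag K_i)$ matches $\lambda_{i,j}$ rather than its transpose. If the transposes do not line up, I would use the Hermiticity of $\lambda$, or replace $C$ by $\bar C$ as done in the proof of Theorem~\ref{th:AVpetz}. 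In either case the sum reassembles into $\frac{1}{K}\sum_k\tr(F_kPF_k^\dag)$.
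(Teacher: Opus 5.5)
Your proposal is correct and follows the paper's proof essentially line for line. You factor $V_{\lambda_C}\ket{\psi}=\ket{\psi}\otimes\ket{\Phi}$ with the fixed unit vectors appended by $U_1$, reduce to the normalization $\tr(C\lambda C^\dag)=1$, and expand via $\lambda_{i,j}=\frac{1}{K}\tr(K_j^\dag K_i)$ into $\frac{1}{K}\tr(\calN_C(P))=1$. The index-ordering worry you raise does not actually arise: with $C=\sum_{k,i}c_{k,i}\ket{k}\bra{i}$ one gets $\tr(C\lambda C^\dag)=\sum_k\sum_{i,j}c_{k,i}\lambda_{i,j}\overline{c_{k,j}}$, which pairs directly with \eqref{eq:lambdaDef} without any transposition or conjugation of $C$.
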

    \begin{proof}
        Using the same argument as in the proof of Lemma~\ref{lem:dilationDecodedState} we have 
       \begin{align}
           \tr_{\s{env}}\p{V_{\lambda_C}XV_{\lambda_C}^\dag}&\nonumber=\tr_{\s{env}}\p{U_1(I_{\calH_{T'}\otimes Q}\otimes C) V_{\lambda}XV_{\lambda}^\dag (I_{\calH_{T'}\otimes Q}\otimes C^\dag)U_1^\dag} \\&=\tr_{T'R}\p{(I_{\calH_{T'}\otimes Q}\otimes C) V_{\lambda}XV_{\lambda}^\dag (I_{\calH_{T'}\otimes Q}\otimes C^\dag)}.\label{eq:envAgain}
       \end{align}
        Recall that $V_{\lambda}= I_Q\otimes \ket{\psi_{\lambda}}$ where $\ket{\psi_{\lambda}}=\sum_{i\in[T]}\sqrt{p_i}\ket{\lambda_{i}}\otimes \ket{\lambda_{i}'}\in \calH_T\otimes \calH_{T'}$ and therefore,
        \begin{align*}
            (I_{\calH_{T'}\otimes Q}\otimes C) V_{\lambda}=I_{Q}\otimes \p{ (I_{\calH_{T'})\otimes C}\ket{\psi_{\lambda}}}:=I_Q\otimes \ket{\psi_{\lambda_{C}}},
        \end{align*}
        where 
        \begin{equation}
            \ket{\psi_{\lambda_{C}}}=\sum_{i\in[T]}\sqrt{p_i} C\ket{\lambda_i}\otimes \ket{\lambda_{i}'}\in \calH_{R}\otimes \calH_{T'}.\label{eq:psiLanbdaCdef}
        \end{equation}
    Combining this with \eqref{eq:envAgain}, we have 
    \[\tr_{\s{env}}\p{V_{\lambda_C}XV_{\lambda_C}^\dag}=\tr_{T'R}\p{X\otimes \ket{\psi_{\lambda_C}}\bra{\psi_{\lambda_C}}}=\braket{\psi_{\lambda_C} | \psi_{\lambda_C}} X.\]
Thus, to prove the lemma it suffices to show that $\braket{\psi_{\lambda_C} | \psi_{\lambda_C}}=1$. 
    
    Recalling \eqref{eq:lambdaDef} and \eqref{eq:VlambdaDef}, we have
    \[\lambda=\sum_{i\in[T]}p_i\ket{\lambda_i}\bra{\lambda_i}=\sum_{i,j\in [S]}\tr\p{\hat{P}E_j^\dag E_i} \ket{i}\bra{j},\]
    where $\hat{P}=\frac{1}{K}P\in D(Q)$. By \eqref{eq:psiLanbdaCdef} we have 
    \begin{align*}
        \braket{\psi_{\lambda_C} | \psi_{\lambda_C}} &=\sum_{i,j\in [T]}\sqrt{p_ip_j}\bra{\lambda_i}C^\dag C \ket{ \lambda_j }   \braket{\lambda_i'| \lambda_j'}=\sum_{i\in [T]}p_i\bra{\lambda_i}C^\dag C \ket{ \lambda_i } \\
        &=\sum_{i\in[T]}p_i\tr(C^\dag C \ket{\lambda_i}\bra{\lambda_i})=\tr\p{C^\dag C\sum_{i\in {T}}p_i\ket{\lambda_i}\bra{\lambda_i}}\\
        &=\tr(C\lambda C^\dag)=\tr\p{C \bigg(\sum_{i,j\in [S]}\tr\p{\hat{P} {E_j^\dag E_i}}\ket{i}\bra{j}\bigg) C^\dag}\\
        &=\tr\p{\sum_{k,l\in [R]}\sum_{i,j\in [S]}  {c_{k,i}c_{l,j}^*} \bigg(\tr\p{\hat{P} {E_j^\dag E_i}}\bigg)\ket{k}\bra{l} }\\
        &=\sum_{k\in [R]}\sum_{i,j\in [S]}  {c_{k,i}c_{k,j}^*} \bigg(\tr\p{\hat{P} {E_j^\dag E_i}}\bigg)\\
        &=\sum_{k\in [R]}\sum_{i,j\in [S]}  {c_{k,i}c_{k,j}^*} \bigg(\tr\p{ {E_i}\hat{P} {E_j^\dag} }\bigg)\\
        &=\sum_{k\in [R]} \tr\p{\bigg( {\sum_{i\in [S]}c_{k,i}E_i}\bigg)\hat{P}\bigg( {\sum_{j\in [S]}c_{k,j}^*E_j^\dag}\bigg)}\\&=\tr\p{\sum_{k\in [R]}F_k \hat{P}F_k^\dag}=\tr\p{\calN_C(\hat{P})}=1,
    \end{align*}
    where the last equality follows since $\tr(\hat{P})=1$ and $\calN_C$ is a quantum channel and therefore is trace preserving.
    
    \end{proof}

    We now make the final step, and prove closeness of $V_{\s{dec},C}$ and $V_{\lambda_C}$. 
    \begin{lemma}
        \[\norm{V_{\s{dec},C}-V_{\lambda_C}}_{\infty}\leq \norm{(I_{\s{\calH_{S}}}\otimes J_{\calE})V_{\calE}-(I_{\calH_S}\otimes J_{\lambda})V_{\lambda}}_{\infty}.\]
        where the $\norm{\cdot}_{\infty}$ is with respect the operator norm on the code space $Q$
    \end{lemma}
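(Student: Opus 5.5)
The idea is to expand $V_{\s{dec},C}-V_{\lambda_C}$ into three pieces according to the flag decomposition, and to show that the $R_1$-branch reproduces the bound of the proof overview. The $R_2$ and $R_3$ branches are then controlled by the same quantity, because they measure exactly the part of $(I\otimes J_{\calE})V_{\calE}$ that leaks out of the range of $J_\lambda$. Since $U_1,U_2,U_3$ are isometries with mutually orthogonal ranges (distinct flags), for every unit $\ket{\psi}\in Q$ we have
\[
\norm{(V_{\s{dec},C}-V_{\lambda_C})\ket{\psi}}^2=\norm{(I\otimes R_1)V_C\ket{\psi}-(I\otimes C)V_\lambda\ket{\psi}}^2+\norm{(I\otimes R_2)V_C\ket{\psi}}^2+\norm{(I\otimes R_3)V_C\ket{\psi}}^2 .
\]
The plan is to bound this sum by $\norm{(J_{\calE}V_{\calE}-J_\lambda V_\lambda)\ket{\psi}}^2$ and then take the supremum over $\ket{\psi}$. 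Here $J_{\calE}V_{\calE}$ abbreviates $(I_{\calH_S}\otimes J_{\calE})V_{\calE}$, and similarly for $\lambda$.

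\textbf{Step 1: pull $C$ out.} The operator $C$ acts on the $\calH_S$ factor, while $R_i$ acts on $\calH_{\SOUT}$, so they commute. Hence $(I\otimes R_i)V_C=(C\otimes R_i)V_{\calE}$, and likewise $(I\otimes C)V_\lambda$ is $C$ applied to the $\calH_T\subseteq\calH_S$ factor of $V_\lambda$. For $i=1$ we use \eqref{eq:commute} to write $V_\lambda=(I\otimes J_\lambda^\dag J_\lambda)V_\lambda$. This turns the first term into
\[
(C\otimes J_\lambda^\dag)\bigl(J_{\calE}V_{\calE}-J_\lambda V_\lambda\bigr)\ket{\psi}.
\]
Similarly, $R_2$ gives $(C\otimes P_{J_\lambda^\perp})J_{\calE}V_{\calE}\ket{\psi}$. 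Because $P_{J_\lambda^\perp}J_\lambda=0$, this equals $(C\otimes P_{J_\lambda^\perp})(J_{\calE}V_{\calE}-J_\lambda V_\lambda)\ket{\psi}$. Finally, the $R_3$ term vanishes identically, since $(I\otimes(I-J_{\calE}^\dag J_{\calE}))V_{\calE}=0$ by \eqref{eq:commute}.

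\textbf{Step 2: combine.} Set $w=(J_{\calE}V_{\calE}-J_\lambda V_\lambda)\ket{\psi}$. Since $J_\lambda$ is an isometry, $J_\lambda^\dag$ acts isometrically on $\calH_{J_\lambda}$, so $\norm{(I\otimes J_\lambda^\dag)y}=\norm{(I\otimes P_{J_\lambda})y}$ for every $y$. Applying this with $y=(C\otimes I)w$, the sum of the squared norms becomes
\[
\norm{(C\otimes P_{J_\lambda})w}^2+\norm{(C\otimes P_{J_\lambda^\perp})w}^2=\norm{(C\otimes I)w}^2\le \norm{C}_\infty^2\norm{w}^2\le\norm{w}^2 .
\]
Taking the supremum over unit $\ket{\psi}\in Q$ gives the lemma.

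\textbf{Main obstacle.} The hardest part is the bookkeeping of tensor factors in Step 1. One must ensure that the $C$ appearing in $V_{\lambda_C}$ acts on the same copy of $\calH_S$ as the one in $V_C$, which requires identifying $\calH_T$ with $\supp(\lambda)\subseteq\calH_S$. One must also check that $J_\lambda$ maps $Q\otimes\calH_{T'}$ into the same $\calH_M$ as $J_{\calE}$, so that the cancellation $P_{J_\lambda^\perp}J_\lambda=0$ is legitimate. I would first fix these identifications explicitly and only then run the norm computation.
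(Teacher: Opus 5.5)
Your proof is correct and matches the paper's argument essentially step for step. The paper likewise:
\begin{itemize}
\item kills the $R_3$ branch via \eqref{eq:commute};
\item splits the remaining norm by flag orthogonality;
\item rewrites the $R_1$ and $R_2$ branches as the $P_{J_\lambda}$ and $P_{J_\lambda^\perp}$ components of $(I\otimes C)\bigl((I\otimes J_{\calE})V_{\calE}-(I\otimes J_\lambda)V_\lambda\bigr)\ket{\psi}$, using that $J_\lambda$ is an isometry and that $P_{J_\lambda^\perp}J_\lambda=0$;
\item concludes by Pythagoras and $\norm{C}_\infty\le 1$.
\end{itemize}
The only difference is cosmetic: the paper removes the $R_3$ branch before the orthogonal decomposition rather than after it.
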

    \begin{proof}
         Throughout this proof we will repeatedly use the following simple fact: For Hilbert spaces $\calH_A,\calH_B,\calH_C\calH_D$ and operators $T_{A\to C}:\calH_{A}\to \calH_{C}$ and $T_{B\to D}:\calH_{B}\to \calH_{D}$ we have 
         \[(T_{B\to D}\otimes I_{\calH_C} )(I_{\calH_B}\otimes T_{A\to C})=(I_{\calH_D}\otimes T_{A\to C})(T_{B\to D}\otimes I_{\calH_A} ).\]
        Let $\ket{\phi}\in Q$ be a code state. We begin by observing that $\ket{\psi}$ vanishes on the branch $U_3(I_{\calH_R}\otimes R_3)V_{C}$. Indeed:
        \begin{align*}
            (I_{\calH_R}\otimes R_3)V_C\ket{\psi}&=\big(I_{\calH_R}\otimes (I_{\calH_{\SOUT}}-J_{\calE}^\dag J_{\calE})\big)(C \otimes I_{\calH_{\SOUT}})V_{\calE}\ket{\psi}\\
            &=(C \otimes I_{\calH_{\SOUT}})\big(I_{\calH_S}\otimes (I_{\calH_{\SOUT}}-J_{\calE}^\dag J_{\calE})\big)V_{\calE}\ket{\psi}
            \\
            &=(C \otimes I_{\calH_{\SOUT}})\big((I_{\calH_S}\otimes I_{\calH_{\SOUT}})V_{\calE}\ket{\psi} -(I_{\calH_S}\otimes J_{\calE}^\dag J_{\calE})V_{\calE}\ket{\psi}\b I ig)=0,
        \end{align*}
        where the last equality follows from the first relation in \eqref{eq:commute}. This derivation implies that
        \begin{equation}
            V_{\s{dec},C}\ket{\psi}=U_1(I_{\calH_R}\otimes R_1)V_{C}\ket{\psi}+U_2(I_{\calH_R}\otimes R_2)V_C\ket{\psi}.\label{eq:eqmanek}
        \end{equation} Let $\norm{\cdot}$ denote the usual inner product norm on $\calH_{\s{Unif}}$. By \eqref{eq:eqmanek} we have:
        \begin{align}
            \norm{V_{\s{dec},C}\ket{\psi}-V_{\lambda_C}\ket{\psi}}^2&\nonumber=\norm{U_1(I_{\calH_R}\otimes R_1)V_{C}\ket{\psi}+U_2(I_{\calH_R}\otimes R_2)V_C\ket{\psi}-U_1(I_{\calH_{T'}\otimes Q}\otimes C) V_{\lambda}\ket{\psi}}^2\\
            &=\norm{U_1(I_{\calH_R}\otimes R_1)V_{C}\ket{\psi}-U_1(I_{\calH_{T'}\otimes Q}\otimes C) V_{\lambda}\ket{\psi}}^2+\norm{U_2(I_{\calH_R}\otimes R_2)V_C\ket{\psi}}^2\label{eq:onOthogonalStates}\\
            &=\norm{(I_{\calH_R}\otimes R_1)V_{C}\ket{\psi}-(I_{\calH_{T'}\otimes Q}\otimes C) V_{\lambda}\ket{\psi}}^2+\norm{(I_{\calH_R}\otimes R_2)V_C\ket{\psi}}^2\label{eq:U1U2isometry}.
        \end{align}
        where \eqref{eq:onOthogonalStates} follows since the images of $U_1$ and $U_2$ are orthogonal (since they
        include orthogonal flag states) and \eqref{eq:U1U2isometry} follows since $U_1$ and $U_2$ are isometries. The first term on the right in \eqref{eq:U1U2isometry} can be written as
        \begin{align}
            &\nonumber\norm{(I_{\calH_R}\otimes R_1)V_{C}\ket{\psi}-(I_{\calH_{T'}\otimes Q}\otimes C) V_{\lambda}\ket{\psi}}\\
            &\nonumber\quad =\norm{(I_{\calH_R}\otimes J_{\lambda}^\dag J_{\calE})(I_{\calH_{\SOUT}}\otimes C)V_{\calE}\ket{\psi}-(I_{\calH_{T'}\otimes Q}\otimes C) V_{\lambda}\ket{\psi}}\nonumber\\&
            \quad =\norm{(I_{\calH_R}\otimes J_{\lambda}^\dag )(I_{\calH_R}\otimes  J_{\calE})(I_{\calH_{\SOUT}}\otimes C)V_{\calE}\ket{\psi}-(I_{\calH_{T'}\otimes Q}\otimes C) (I_{\calH_{S}}\otimes J_{\lambda}^\dag J_{\lambda})V_{\lambda}\ket{\psi}}\label{eq:ProjecJ}\\&
            \quad =\norm{(I_{\calH_R}\otimes J_{\lambda}^\dag )(I_{\calH_{M}}\otimes C)(I_{\calH_S}\otimes  J_{\calE})V_{\calE}\ket{\psi}  -(I_{\calH_{S}}\otimes J_{\lambda}^\dag )(I_{\calH_{M}}\otimes C) (I_{\calH_{S}}\otimes J_{\lambda})V_{\lambda}\ket{\psi}}\nonumber\\
            &\quad =\norm{(I_{\calH_R}\otimes J_{\lambda} J_{\lambda}^\dag )(I_{\calH_{M}}\otimes C)(I_{\calH_S}\otimes  J_{\calE})V_{\calE}\ket{\psi}  -(I_{\calH_{S}}\otimes J_{\lambda} J_{\lambda}^\dag )(I_{\calH_{M}}\otimes C) (I_{\calH_{S}}\otimes J_{\lambda})V_{\lambda}\ket{\psi}}\label{eq:Jisometry1}\\
            &\quad =\norm{(I_{\calH_R}\otimes P_{J_{\lambda}})\Big((I_{\calH_{M}}\otimes C)(I_{\calH_S}\otimes  J_{\calE})V_{\calE}\ket{\psi}  -(I_{\calH_{M}}\otimes C) (I_{\calH_{S}}\otimes J_{\lambda})V_{\lambda}\ket{\psi}\Big)}\label{eq:Jisometry2}            
        \end{align}
        where \eqref{eq:ProjecJ} follows from \eqref{eq:commute}, and \eqref{eq:Jisometry1} follows since $J_{\lambda}$ is an isometry and therefore, so is $I_{\calH_R}\otimes J_{\lambda}$. Before analyzying the second term in \eqref{eq:U1U2isometry} we note that 
        \begin{align}
            (I_{\calH_R}\otimes P_{J_{\lambda}^{\perp}})(I_{\calH_{M}}\otimes C) (I_{\calH_{S}}\otimes J_{\lambda})V_{\lambda}\ket{\psi}&\nonumber=(I_{\calH_M}\otimes C)(I_{\calH_{S}}\otimes P_{J_{\lambda}^{\perp}}) (I_{\calH_{S}}\otimes J_{\lambda})V_{\lambda}\ket{\psi}\\
            &=(I_{\calH_M}\otimes C) (I_{\calH_{S}}\otimes P_{J_{\lambda}^{\perp}}J_{\lambda})V_{\lambda}\ket{\psi}=0,\label{eq:PjZerosJ}
        \end{align}
        where the last equality follows since $P_{J_{\lambda}^\perp}$ is the projector on the orthogonal complement of the image of $J_{\lambda}$, which implies that $P_{J_{\lambda}^\perp}J_{\lambda}=0$. Using $\eqref{eq:PjZerosJ}$, we obtain
        \begin{align}
            \norm{(I_{\calH_R}\otimes R_2)V_C\ket{\psi}}&\nonumber=\norm{(I_{\calH_R}\otimes P_{J_{\lambda}^\perp}J_{\calE})(I_{\calH_{\SOUT}}\otimes C)V_\calE\ket{\psi}}\\
            &\nonumber=\norm{(I_{\calH_R}\otimes P_{J_{\lambda}^\perp})(I_{\calH_{M}}\otimes C)(I_{\calH_S}\otimes J_{\calE})V_\calE\ket{\psi}}\\
            &=\norm{(I_{\calH_R}\otimes P_{J_{\lambda}^\perp})\Big( (I_{\calH_{M}}\otimes C)(I_{\calH_S}\otimes J_{\calE})V_\calE\ket{\psi}- (I_{\calH_{M}}\otimes C) (I_{\calH_{S}}\otimes J_{\lambda})V_{\lambda}\ket{\psi}\Big)}.\label{eq:R2part} 
        \end{align} Denote 
        \[\ket{\hat{\psi}}:=(I_{\calH_{M}}\otimes C)(I_{\calH_S}\otimes J_{\calE})V_\calE\ket{\psi}- (I_{\calH_{M}}\otimes C) (I_{\calH_{S}}\otimes J_{\lambda})V_{\lambda}\ket{\psi},\]
        and note that the vectors $(I_{\calH_R}\otimes P_{J_{\lambda}^{\perp}})\ket{\hat{\psi}}$ and $(I_{\calH_R}\otimes P_{J_{\lambda}})\ket{\hat{\psi}}$ are orthogonal  as $P_{J_{\lambda}^\perp}$ and $P_{J_{\lambda}}$ project on  complementary orthogonal spaces. Therefore,
        \begin{equation}
             \norm{(I_{\calH_R}\otimes P_{J_{\lambda}^{\perp}})\ket{\hat{\psi}}}^2+\norm{(I_{\calH_R}\otimes P_{J_{\lambda}})\ket{\hat{\psi}}}^2=\norm{(I_{\calH_R}\otimes P_{J_{\lambda}^{\perp}})\ket{\hat{\psi}}+(I_{\calH_R}\otimes P_{J_{\lambda}})\ket{\hat{\psi}}}^2=\norm{\ket{\hat{\psi}}}^2.\label{eq:hatpsi}
        \end{equation}
        Combining \eqref{eq:U1U2isometry}, \eqref{eq:Jisometry2}, \eqref{eq:R2part}, and \eqref{eq:hatpsi}, we conclude that
        \begin{align*}
            \norm{V_{\s{dec},C}\ket{\psi}-V_{\lambda_C}\ket{\psi}}^2&=\norm{\ket{\hat{\psi}}}^2=\norm{(I_{\calH_{M}}\otimes C)\Big((I_{\calH_S}\otimes J_{\calE})V_\calE\ket{\psi}- (I_{\calH_{S}}\otimes J_{\lambda})V_{\lambda}\ket{\psi}\Big)}^2\\
            &\leq \norm{(I_{\calH_S}\otimes J_{\calE})V_\calE\ket{\psi}- (I_{\calH_{S}}\otimes J_{\lambda})V_{\lambda}\ket{\psi}}^2,
        \end{align*}
        where the last inequality follows since $I_{\calH_M}\otimes C$ is a contraction since $\norm{C}_{\infty}\leq 1$. 
Since this relation holds for all $\ket{\psi}\in Q$, this implies the desired norm inequality. 
    \end{proof}
    \subsection{Proof of Lemma~\ref{lem:CompactEcontrolled}}\label{app:LemmaCompactProof}
    We recall that the set of quantum channels between finite-dimensional spaces is convex and compact (see \cite[Proposition 2.28]{watrous2018theory}), which implies convexity and compactness of $X$. The set $Y$ of $\calE$-controlled channels is also convex; see \cite[Observation~2]{Elimelech2026Theory}. It remains to show that  $Y$ is also compact. 
    
    Let $\s{M}(\calE)$ be the space of all $R\times R$ matrices $A\succeq 0$ such that $\norm{A}_{\infty}\leq 1$, equipped with the $\infty$-norm topology. First note that $\s{M}(\calE)$ is a compact space: It is bounded (by the constraint $\norm{A}_{\infty})\leq 1$), and it is closed as the intersection of closed sets (the set of positive semidefinite matrices is a closed set, and the set of matrices satisfying $\norm{A}_{\infty}\leq 1$ is closed by continuity of the norm). 
    
    Consider the function $g$ from $\s{M}(\calE)$ to the set of linear superoperators $L(\calH_{\SIN})\to L(\calH_{\SOUT})$ defined by 
    \[ (g(A))(X)=\sum_{i,j\in[R]} A_{i,j}E_i X E_j^\dag.\]
    Note that $g$ is a linear function, and therefore also it is also continuous (as any linear function between finite-dimensional normed spaces is continuous). 
    
    We claim that $Y$ is exactly the intersection $\ima(g)$ (which is compact, as the image of a compact set under a continuous map), with the set of CPTP maps $L(\calH_{\SIN})\to L(\calH_{\SOUT})$, denoted by $\s{CPTP}(\calH_{\SIN},\calH_{\SOUT})$  (which is also compact by \cite[Proposition 2.28]{watrous2018theory}). This  will imply that $Y$ is compact as an intersection of compact sets. Indeed, Let $A\in \s{M}(\calE)$ be a positive matrix with $\norm{A}_{\infty}\leq 1$. Consider the matrix $C=\sqrt{A}$. Note that $g(A)$ is the CP map with Kraus representation $\ppp{F_\alpha}_{\alpha\in [R]}$ such that 
    \[F_{\alpha}=\sum_{i\in [R]}C_{i,\alpha}E_i, \]
    as we have
    \begin{equation}
        \sum_{\alpha\in [R]}F_{\alpha} X F_{\alpha}^\dag= \sum_{\alpha\in [R]}\sum_{i,j\in [R]} C_{i,\alpha} C_{j,\alpha}^* E_i X E_{j}^\dag= \sum_{i,j\in [R]} E_i X E_{j}^\dag\sum_{\alpha\in [R]}  C_{i,\alpha} c_{\alpha,j}=\sum_{i,j\in [R]} A_{i,j }E_i X E_{j}^\dag=g(A)(X)\label{eq:reveresethecalc}
    \end{equation}
    Here, we used the fact that  $C$ is Hermitian as a square root and therefore $c_{\alpha,j}^*=C_{j,\alpha}$. We also note that $\norm{C}_{\infty}=\sqrt{\norm{A}}_{\infty}\leq 1$. On the other hand assume that $\calN$ is a CP map with Kraus representation $\ppp{F_{\alpha}}_{\alpha\in [S]}$, for some $S\in \N$, such that 
     \begin{equation}
         F_{\alpha}=\sum_{i\in [R]}C_{i,\alpha}E_i, \quad \text{and}\quad \norm{C}_{\infty}\leq 1.\label{eq:KrausRepC}
     \end{equation}
     We want to show that $\calN\in \ima(g)$. To that end, take $A=CC^\dag \succeq 0$, we have $\norm{A}_\infty=\norm{C}_\infty^2\leq 1$ and by reversing the calculation of \eqref{eq:reveresethecalc} we obtain $g(A)=\calN$. We have proved that $\ima(g)$ is the set of all CP maps $L(\calH_{\SIN})\to L(\calH_{\SOUT})$ that admits a Kraus representation of the form \eqref{eq:KrausRepC}, whose intersection with $\s{CPTP}(\calH_{\SIN},\calH_{\SOUT})$ gives exactly $Y$.

\bibliographystyle{abbrvurl}
\bibliography{references}

\end{document}